\documentclass[11pt,reqno]{amsart}
\usepackage[T1]{fontenc}
\usepackage{lmodern,microtype,amsmath,amssymb,amsthm,mathtools,booktabs,array,longtable}
\usepackage[margin=1.05in]{geometry}
\usepackage{cite}
\usepackage{xurl,hyperref}
\DeclareUrlCommand\path{\urlstyle{tt}\def\UrlBreaks{\do\/\do\-}\def\UrlBigBreaks{}}
\hypersetup{hidelinks,pdftitle={Large-Scale Autonomous Discovery of Kissing Number Constructions}}
\makeatletter
\renewcommand{\@settitle}{%
  \begin{center}\normalfont\fontsize{13}{16}\selectfont\bfseries\@title\end{center}%
}
\renewcommand{\@setauthors}{%
  \begingroup
  \trivlist\centering\footnotesize
  \@topsep30\p@\relax
  \advance\@topsep by -\baselineskip
  \item\relax\MakeUppercase{\PaperAuthorBlock}%
  \endtrivlist
  \endgroup
}
\AddToHook{cmd/@setauthors/after}{%
  \par\nobreak\smallskip
  {\centering\footnotesize\normalfont\PaperAffiliation\par}%
  \nobreak\smallskip
  {\centering\footnotesize\normalfont\PaperCorrespondence\par}%
  \global\let\addresses\@empty
}
\makeatother
\numberwithin{equation}{section}
\newtheorem{theorem}{Theorem}[section]
\newtheorem{proposition}[theorem]{Proposition}
\newtheorem{lemma}[theorem]{Lemma}
\newtheorem{corollary}[theorem]{Corollary}
\theoremstyle{definition}
\theoremstyle{remark}
\newcommand{\R}{\mathbb R}\newcommand{\Z}{\mathbb Z}\newcommand{\Q}{\mathbb Q}\newcommand{\F}{\mathbb F}
\newcommand{\wt}{\operatorname{wt}}\newcommand{\supp}{\operatorname{supp}}
\newcommand{\ip}[2]{\langle #1,#2\rangle}\newcommand{\norm}[1]{\lVert #1\rVert}
\newcommand{\eps}{\varepsilon}\newcommand{\qpath}[1]{\texttt{\nolinkurl{#1}}}

\newcommand{\AuthorNames}{Shuxing Yang, Rui Zhao, Junyao Wu, Yize Wang,
Fujia Chen, Kaihao Zhu, Wenhao Li, Zichen Li, Yaqi Li, Shenzhan Hong,
Yuang Pan, Junjie Yang, Taowen Deng, Jincheng Mi, Hongsheng Chen, Yihao Yang}
\author[Shuxing Yang et al.]{Shuxing Yang}
\author[]{Rui Zhao}
\author[]{Junyao Wu}
\author[]{Yize Wang}
\author[]{Fujia Chen}
\author[]{Kaihao Zhu}
\author[]{Wenhao Li}
\author[]{Zichen Li}
\author[]{Yaqi Li}
\author[]{Shenzhan Hong}
\author[]{Yuang Pan}
\author[]{Junjie Yang}
\author[]{Taowen Deng}
\author[]{Jincheng Mi}
\author[]{Hongsheng Chen\textsuperscript{*}}
\author[]{Yihao Yang\textsuperscript{*}}
\newcommand{\PaperAuthorBlock}{%
Shuxing~Yang, Rui~Zhao, Junyao~Wu, Yize~Wang, Fujia~Chen, Kaihao~Zhu,\\
Wenhao~Li, Zichen~Li, Yaqi~Li, Shenzhan~Hong, Yuang~Pan, Junjie~Yang,\\
Taowen~Deng, Jincheng~Mi, Hongsheng~Chen\textsuperscript{*}, and Yihao~Yang\textsuperscript{*}}
\newcommand{\PaperAffiliation}{College of Information Science and Electronic Engineering, Zhejiang University\\
Qiushi Engine Team}
\address{\PaperAffiliation}
\newcommand{\PaperCorrespondence}{\textsuperscript{*}Corresponding authors:
Hongsheng Chen (\href{mailto:hansomchen@zju.edu.cn}{hansomchen@zju.edu.cn});
Yihao Yang (\href{mailto:yangyihao@zju.edu.cn}{yangyihao@zju.edu.cn}).}
\hypersetup{pdfauthor={\AuthorNames}}

\title[Autonomous discovery of kissing constructions]{Large-Scale Autonomous Discovery of Kissing Number Constructions}
\date{25 September 2026}
\subjclass[2020]{Primary 52C17; Secondary 94B05, 05B40, 11H31}
\keywords{Kissing number, spherical code, lattice, signed code, spherical design, computer-assisted proof}
\begin{document}
\begin{abstract}
The kissing-number problem is a classical problem in discrete geometry
whose exact solution is known in only a few dimensions. Recent
artificial-intelligence approaches have begun to discover improved
configurations through large-scale numerical and combinatorial search,
but converting such searches into general mathematical constructions and
rigorous proofs remains challenging. Here we use Qiushi Engine, an autonomous
multi-agent research system, to investigate kissing numbers and obtain
new lower bounds in nineteen dimensions: $25$, $27$, $32$--$39$, $43$,
$45$, and $49$--$55$. The resulting constructions arise from distinct
structural mechanisms, including coordinated motions of contact layers,
labelled direction reuse, joint support exchanges, signed-code
replacements, cross-shell lattice constructions, low-overlap lattice
isometries, and spherical-design moment certificates. They yield sharp
capacities for parameterized signed-code models, deterministic image-union
guarantees, and exact section and projection counts controlled by embedded
root systems and anchor-graph statistics. These methods
yield, among others, $K(25)\ge197580$, $K(27)\ge201567$,
$K(38)\ge591900$, $K(43)\ge2553792$, $K(45)\ge7380090$, and
$K(55)\ge53301140$. The autonomous system carried out the construction
searches, mathematical analysis and computational verification, while
each final result was reduced to explicit mathematical arguments and
independently checkable finite certificates. Our results illustrate how
autonomous research systems can move beyond optimization within fixed
formulations to discover new mathematical representations and
constructions at scale.
\end{abstract}
\maketitle
\section{Introduction}\label{sec:introduction}
The kissing-number problem is one of the oldest problems in discrete
geometry. In 1694, Isaac Newton and David Gregory discussed whether
12 or 13 equal spheres could simultaneously touch a central sphere in
three dimensions. The answer, 12, was proved more than two centuries
later~\cite{musin}. Yet the apparent simplicity of the question is
deceptive. More than three hundred years after its formulation, exact
kissing numbers are known only in dimensions $1,2,3,4,8$ and $24$.
In the exceptional dimensions $8$ and $24$, the optimal configurations
arise from the $E_8$ and Leech lattices, with 240 and 196560 contacts,
respectively~\cite{conway1999sphere,musin}. In most other dimensions,
substantial gaps remain between the best constructions and known upper
bounds.

Progress on lower bounds has historically relied on explicit mathematical
structure. Lattices and error-correcting codes provide large families of
vectors whose pairwise interactions can be controlled simultaneously
\cite{conway1999sphere,macwilliams}. These ideas have led to
laminated-lattice constructions, lattice sections and extensions,
code-based spherical configurations, and lifting methods that combine
a high-dimensional mother configuration with additional coordinates
\cite{laminated,antipode,ers,cjkt,kallal}. Such constructions can contain
hundreds of thousands or even tens of millions of vectors. Their strength,
however, is also a source of difficulty: once a configuration is highly
constrained, improving it may require reorganizing an entire collection
of points rather than inserting a single new one. The relevant search
spaces grow combinatorially, while the geometric degrees of freedom that
permit improvement can be difficult to identify from the original
representation.

Artificial intelligence has recently begun to alter how these spaces are
explored. AlphaEvolve, a general-purpose coding agent, discovered a
593-point kissing configuration in dimension 11, improving the previously
known lower bound of 592~\cite{alphaevolve}. PackingStar formulated the kissing-number
problem as a two-player matrix-completion game and used game-theoretic
reinforcement learning to search the space of Gram matrices, reporting
new constructions across dimensions 25--31 and uncovering thousands of
candidate geometric structures~\cite{packingstar-discovery}. EinsteinArena
allowed autonomous agents to submit solutions, inspect intermediate
results and build on discoveries made by other agents; this collective
process increased the dimension-11 lower bound from 593 to
604~\cite{einsteinarena}. The Station subsequently reported exact
604-point configurations and structural theorems developed in an
open-world multi-agent environment~\cite{station}. Together, these
developments demonstrate the scope of AI-guided geometric search and
collaborative mathematical discovery.

These advances also expose a deeper challenge. Finding an improved
configuration is only one part of mathematical discovery. Why does the
configuration exist, which structural degree of freedom makes it possible,
can the mechanism be generalized, and how can the result be converted into
a proof or an independently verifiable certificate? Computational
searches begin from a prescribed representation---for example,
coordinates, Gram matrices, codes or a fixed optimization objective.
Open-ended mathematical research additionally requires the ability to
change that representation when the current one becomes limiting, to
formulate new intermediate problems, connect apparently unrelated
structures, and develop arguments that certify the resulting
construction.

Autonomous scientific research systems provide a possible route towards
this broader form of discovery. Recent large-language-model-based agents
can carry out extended sequences of reasoning, computation and tool use,
while revising research strategies in response to intermediate evidence.
Qiushi Engine was previously used for end-to-end autonomous scientific
discovery on a physical optical platform~\cite{qiushi-optics}, where it
conducted long-horizon investigations that combined theoretical
reasoning, computation and experiment. Here we ask whether the same
paradigm can be extended to open mathematical research: can an autonomous
research system move beyond optimizing a fixed mathematical formulation
and instead discover new representations, constructions and proofs?

Here we report an autonomous investigation of the kissing-number problem
that yields new lower bounds in nineteen dimensions:
\[
25,\quad27,\quad32\text{--}39,\quad43,\quad45,\quad49\text{--}55.
\]
The resulting bounds include
\[
\begin{aligned}
K(25)&\ge197580,&K(27)&\ge201567,&K(38)&\ge591900,\\
K(43)&\ge2553792,&K(45)&\ge7380090,&K(55)&\ge53301140.
\end{aligned}
\]
Rather than arising from a single optimization procedure, these
improvements were obtained through a sequence of distinct structural
discoveries. They include coordinated motions of complete contact layers,
labelled liftings that separate direction deletion from reuse, joint
support exchanges that exploit shared deletion costs, signed-code
replacements generated through Hadamard transfers, compatible directions
drawn from a different lattice shell, enlarged lattice-line classes
combined through low-overlap isometric images, and spherical-design
moment identities and inequalities that determine or bound large lattice
sections without enumerating the ambient shell.

A common pattern connects these apparently different constructions.
Configurations that appear saturated in one representation retain
exploitable freedom in another. In dimension 25, moving an entire
552-point contact layer while preserving its internal Gram matrix
creates room for an additional point. In dimension 27, labelled lifting
separates direction reuse from deletion; with the first layer fixed,
jointly selecting second-layer directions and labels enlarges the final
two-layer construction. In dimensions 32--37 and 39, optimizing
collective exchanges and individual sign patterns exposes improvements
hidden by separate support-insertion scores and complete sign bundles.
In dimension 38, changing the search from the minimal Leech shell to a
second lattice shell produces 288 additional points. In dimensions 43
and 45, counting millions of minimal lattice vectors reduces to a finite
space of projection signatures whose multiplicities satisfy exact
spherical-design moments. A single enlarged line class in dimension 48,
combined with low-overlap isometric images, similarly generates new
constructions throughout dimensions 49--55.

Qiushi Engine carried out the construction searches, mathematical analysis
and computational verification. Each discovered mechanism is expressed
through explicit geometric or combinatorial conditions. Each reported
bound is supported by finite witnesses checked using exact integer and
rational arithmetic, lattice and code identities, rigorous interval
arithmetic, and finite combinatorial verification. In dimension 25,
outward-rounded interval arithmetic certifies the remaining strict
inequalities~\cite{johansson-arb}. These mathematical arguments and
reproducible certificates make the constructions independently
checkable.

These results suggest a role for AI in mathematics that extends beyond
accelerating numerical search. For difficult problems with enormous
structured search spaces, a central task may be to discover the
representation in which progress becomes possible. Autonomous research
systems that can alternate between exploration, structural interpretation
and rigorous verification may provide a route towards such discoveries
at scale.

\subsection{Main results and structural statements}

The kissing number $K(d)$ is the largest cardinality of a set
$X\subset S^{d-1}$ satisfying
\begin{equation}\label{eq:kissing}
 \ip{x}{y}\le\tfrac12\qquad(x\ne y).
\end{equation}
Equivalently, it is the largest number of nonoverlapping unit balls
that can touch a central unit ball in $\R^d$.

\begin{theorem}\label{thm:main}
For every pair $(d,N_d)$ in Table~\ref{tab:main}, there is a kissing
configuration of $N_d$ points in $\R^d$. In particular, $K(d)\ge N_d$.
\end{theorem}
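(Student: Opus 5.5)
The theorem is a compilation statement, so I would prove it dimension by dimension. Each pair $(d,N_d)$ is reduced to the construction mechanism named in the introduction. For each one I exhibit a finite set $X\subset S^{d-1}$ with $|X|=N_d$ and verify \eqref{eq:kissing}. Every construction starts from a known mother configuration with exact coordinates: the Leech lattice minimal shell ($196560$ vectors), laminated lattices and their sections, or signed-code lifts. Because the lattice vectors can be scaled to integer coordinates of fixed norm, the condition $\ip{x}{y}\le\frac12$ becomes an integer inequality $\ip{x}{y}\le\frac12\norm{x}^2$. This inequality can be checked exactly, or proved structurally from the minimal norm of the lattice, without enumerating all pairs. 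The proof therefore splits into three steps. First, certify the mother configuration by a lattice or code identity. Second, describe the modification that is removed or added. Third, bound the inner products between old and new points, and among the new points themselves.

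The dimension groups are handled as follows, in the order I would carry them out.
\begin{enumerate}
\item In dimension $27$ and dimensions $32$--$37$, $39$, the points are signed-code lifts $(\alpha u,\beta\, c)$. Here $u$ runs over a mother configuration, $c$ is a signed codeword, and $\alpha,\beta$ are fixed scalars. The inner product reduces to $\alpha^2\ip{u}{u'}+\beta^2\ip{c}{c'}$, so the kissing condition becomes a minimum-distance condition on labelled codes. That condition can be verified by exhaustive finite computation over the supports and sign patterns involved.
\item In dimension $38$, I add vectors from a second shell of the lattice. Their mutual and cross inner products follow from norm bookkeeping: for $x$ of norm $a$ and $y$ of norm $b$, $\ip{x}{y}=\tfrac12(a+b-\norm{x-y}^2)$, and the minimal norm of the lattice bounds $\norm{x-y}^2$ from below. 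This controls the angles exactly.
\item In dimensions $43$, $45$ and $49$--$55$, the count is the delicate part, since enumerating millions of vectors is impractical. I would obtain the cardinality from spherical-design moment identities: the minimal shell is a $t$-design, so the power sums $\sum_x\ip{x}{v}^{2k}$ are known in closed form. These power sums determine the multiplicities of the finitely many projection signatures, and solving the resulting linear system gives the section counts. For dimensions $49$--$55$ I also need a union of isometric images of the enlarged dimension-$48$ line class. Its size follows by inclusion--exclusion, using certified small overlaps between the images.
\end{enumerate}

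I expect dimension $25$ to be the main obstacle. There the configuration is not exact: a $552$-point contact layer is moved rigidly, and one extra point is inserted. The strict inequalities $\ip{x}{y}<\frac12$ for the moved pairs must then be certified with outward-rounded interval arithmetic on rational approximations of the perturbation~\cite{johansson-arb}. My first attempt would be to parametrize the motion by a single small parameter $\eps$. The unmoved pairs keep their exact Gram entries. For the affected pairs, I would show that the first-order term in $\eps$ is strictly negative wherever the exact inner product equals $\frac12$. A finite interval check would then cover the rest.

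Combining the certified configurations over all rows of Table~\ref{tab:main} gives $K(d)\ge N_d$.
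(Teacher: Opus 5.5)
Your dimension-by-dimension outline matches the shape of the paper's argument, but it is a verification template rather than a proof. The theorem is existential, and each $N_d$ comes from a specific mechanism together with a specific finite witness. In several groups, the mechanism you describe would not produce the configuration.

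Dimension 25 is the clearest case. A first-order perturbation argument only shows that a small motion keeps the configuration valid; it does not make room for a new point. In the paper, the inserted point $Q=(\sqrt3\,n,-1)$ initially has product $3/\sqrt2>2$ with every one of the 552 layer points. These conflicts are removed by a finite coordinated motion along the circle $c^2+d^2=3/2$ (Lemma~\ref{lem:chord-motion}). This motion preserves the layer's Gram matrix exactly and controls the fixed equator by convexity, not by first-order margins. It ends at exact contacts $\ip{W(r)}{Q}=2$. Moreover, $Q$ has product greater than $2.31$ with the cap point $(x_{1008},-1)$ whatever the motion. Proposition~\ref{prop:motion-two-repairs} shows that two cap points must be replaced (by $R_\pm$), and your plan has no such step.

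In dimension 38, norm bookkeeping controls only new--old products. Even that works only because the base is the 591612-point lift with an empty equator, whose minimal heads are scaled to squared norm $32/48$ (Corollary~\ref{cor:completion-window}). Against unlifted minimal vectors, a norm-48 vector reaches $24/\sqrt{32\cdot48}>1/2$. New--new products require $\norm{v\mp v'}^2\ge48$, which the lattice minimum $32$ does not give. So the gain of $288$ rests on an explicit compatible set of 144 lines.

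The other groups are also misidentified. Dimensions 32--37 and 39 are not lifts $(\alpha u,\beta c)$; they are the ERS union $X_T\cup X_M\cup X_R$ of three layers in one $\R^n$ (Proposition~\ref{prop:ers}). The gains come from three sources: new weight-eight support families obtained by joint exchanges, the 327680-word coset-union code at length 39, and Hadamard-transferred local signed codes, whose admissibility is the bundle-boundary criterion of Proposition~\ref{prop:bundle-boundary}. Dimension 27 is a two-layer Leech lift that depends on 2258 particular norm-192 heads and 311 labelled owners.

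In dimensions 43 and 45, the design moments do not determine the fibre multiplicities. For 43, the same intrinsic $\sqrt3D_5$ Gram in $P_{48p}$ yields only 2545056, so moments on the child's own signatures cannot force 2553792. The paper needs the embedded parent $\sqrt3E_8$, its complete rank-eight signature domain, and a target-preserving group with a dual certificate (or full column rank for the coordinate cuts). For 45, the moments give only $f(e_1)+f(e_2)+f(e_3)=7377408+9f(-2,-2,3)$. The free fibre equals $368-c$, where $c$ is the codegree of a nonadjacent pair in the 1128-vertex anchor graph. The value $c=70$ comes from exhaustive computation in the specific QR neighbour, whose minimum six needs its own parity argument. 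The resulting points are projections of three fibres, not a perpendicular section.

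Finally, dimensions 49--55 exceed the rank 48, so they cannot be section counts. They are antipodal block lifts (Theorem~\ref{thm:block-lift}) of the $52416000$ minimal vectors of $P_{48p}$ with root-system tails $A_1,\dots,E_7$, giving $52416000+2S+2t$. Here $S$ is simply the directly counted union of $t$ isometric images of the 7077-line class. What your proposal is missing is this lifting mechanism and the explicit class and isometries, not an inclusion--exclusion estimate.
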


\begin{longtable}{rrrrl}
\caption{The nineteen lower bounds, with the public comparison values
recorded in the accompanying catalogue on 24--27 September 2026.}
\label{tab:main}\\
\toprule
$d$&$N_d$&Comparison&Increase&Source\\\midrule
\endfirsthead
\toprule $d$&$N_d$&Comparison&Increase&Source\\\midrule\endhead
25&197580&197579&1&\cite{kissingnumbers}\\
27&201567&201566&1&\cite{lindow27}\\
32&347584&346944&640&\cite{brouwer}\\
33&363968&362048&1920&\cite{brouwer,echols}\\
34&384196&381124&3072&\cite{brouwer,echols}\\
35&409676&409548&128&\cite{brouwer}\\
36&484760&484568&192&\cite{brouwer}\\
37&498024&496232&1792&\cite{brouwer,echols}\\
38&591900&591612&288&\cite{kissingnumbers}\\
39&763668&756116&7552&\cite{kissingnumbers}\\
43&2553792&2545056&8736&\cite{crosssections}\\
45&7380090&7379838&252&\cite{crosssections}\\
49&52430156&52430140&16&\cite{kissingnumbers}\\
50&52458468&52458418&50&\cite{kissingnumbers}\\
51&52500930&52500816&114&\cite{kissingnumbers}\\
52&52585772&52585516&256&\cite{kissingnumbers}\\
53&52698696&52698222&474&\cite{kissingnumbers}\\
54&52923774&52922906&868&\cite{kissingnumbers}\\
55&53301140&53299730&1410&\cite{kissingnumbers}\\\bottomrule
\end{longtable}

For dimension 32, the comparison follows from Lysenst\o en's
1671-word code recorded in Brouwer's table~\cite{brouwer}:
$2^{17}+128\cdot1671+2\cdot32\cdot31=346944$.
The 1667-word code of Echols~\cite{echols} was the seed for our support search.

\subsection{Structural consequences}

Four structural principles connect the nineteen constructions.

\paragraph{Motion and lifting.}
A contact layer can move while preserving its Gram matrix; convexity
controls its interaction with the fixed equator
(Lemma~\ref{lem:chord-motion}). The remaining boundary constraints have
a finite exact test, and two cap changes are necessary and sufficient
for the specified dimension-25 insertion
(Propositions~\ref{prop:chord-finite} and~\ref{prop:motion-two-repairs}).
For labelled liftings, counting copies separately from deleted directions
exposes a reuse problem (Theorem~\ref{lem:lift}). Weighted incidence
bounds and the sharp simplex criterion relate its capacity to head
separation and tail dimension (Propositions~\ref{prop:weighted}
and~\ref{prop:simplex}).

\paragraph{Exchange and signed designs.}
The gain from an internally compatible insertion $Y$ is
$|Y|-|F(Y)|$, where $F(Y)$ is the union of its old conflicts.
This shared-deletion objective governs the support exchanges; for a
compatible candidate pool, its maximum is determined by a matching
(Propositions~\ref{prop:exchange} and~\ref{prop:exchange-deficiency}).
Complete sign bundles admit an exact boundary criterion, allowing
paired Hadamard transfer to replace a global compatibility problem by
a local design problem. The resulting codes include a sharp
eleven-coordinate construction and a family of $32t^2(t-1)$ vectors
in $4t$ coordinates attaining its transverse $2+2$ model capacity.
For twelve coordinates, arbitrary restrictions by complete pair types
also have an exact capacity formula
(Section~\ref{sec:signed}, Theorem~\ref{thm:type-boundary-capacity}).

\paragraph{Shell completion and image unions.}
The lattice minimum guarantees all new--old products in a shell-norm
window, leaving only the compatibility of the added directions to
be arranged (Corollary~\ref{cor:completion-window}). A multi-shell
criterion and a next-shell obstruction locate the remaining freedom
for the dimension-38 base (Propositions~\ref{prop:multi-shell}
and~\ref{prop:no-norm64}). In dimensions 49--55, retaining pure tails
forces the paired lifting heights and disjoint head assignment.
Thus the optimization in this model is exactly that of an image union,
not a sum of image sizes (Proposition~\ref{prop:paired-rigidity}).
Orbitwise and finite-pool bounds provide deterministic selection
guarantees (Propositions~\ref{prop:orbit-union} and~\ref{prop:finite-pool}).

\paragraph{Moment-certified section statistics.}
Complete signature domains and design moments can determine a target
statistic without determining every fibre multiplicity
(Proposition~\ref{prop:section-statistics}). For an extremal even
unimodular lattice of rank 48, an embedded $\sqrt3E_8$ forces eight
fixed coordinate-section counts, including the specified $D_5$ count
2553792 (Theorems~\ref{d43:thm:structure}
and~\ref{thm:e8-coordinate-sections}). The parent embedding supplies
information absent from the child's Gram matrix alone.
For rank-three sections of Gram $8I_3-2J_3$, the projected count is
exactly $7380720-9c$, where $c$ is the common-neighbour count of a
nonadjacent pair in a 1128-vertex anchor graph
(Theorem~\ref{thm:45-transfer}, Proposition~\ref{prop:45-anchor-graph}).
Its signed Gram identities restrict $c$; exhaustive counting gives
$\min c=70$ for the supplied anchor, yielding 7380090 points.

\subsection{Relation to earlier constructions}

Cohn--Jiao--Kumar--Torquato~\cite[Section 7]{cjkt} and
Kallal--Kan--Wang~\cite{kallal} developed Leech lifting methods.
The 496-point blocks used in our four-triangle example are from
PackingStar~\cite{packingstar,packingstar-data}.
Takhanov--Yun~\cite{takhanov-yun} exploit freedom in lifted blocks while
keeping the unlifted bulk fixed. Our 25-dimensional construction instead
moves retained equatorial points of the 197579-point configuration
in~\cite{kissingnumbers}. The current two-layer input in dimension 27
is Lindow's 201566-point contribution~\cite{lindow27}.

The code-based constructions use the layers of
Edel--Rains--Sloane~\cite{ers}, the dense codes of
Cheng--Sloane~\cite{chengsloane}, and constant-weight constructions
and tables~\cite{bsss,brouwer,echols}. Local replacement has a direct
precursor in the free-zone construction of
Takhanov--Yun~\cite[Section V]{signed-johnson}; the 51-block
weight-four design is classical~\cite{kalbfleisch-stanton}.
Here the new supports, transverse choices, signed replacements and
their boundary embeddings supply the improvements.

The empty-equator lift in dimension 38 and the 7069-line class in
dimension 48 are inputs from~\cite{kissingnumbers}.
For sections, we use the framework of Conway--Sloane
\cite{laminated,antipode}, the embedded parent geometry
of~\cite{dorofeev}, and the explicit lattice and section data
of~\cite{latticecatalogue,p48p-catalogue,crosssections}.
The new constructions specify the added shell lines, the enlarged
line class and image families, and the section statistics and witnesses
used in the stated bounds.

\subsection{Notation and organization}

We write $[n]=\{0,\ldots,n-1\}$ unless a finite input explicitly uses
one-based coordinates. The notation $A(n,\delta,w)$ denotes the largest
binary constant-weight code of length $n$, weight $w$ and minimum
Hamming distance $\delta$. For vectors of common squared norm $q$,
the kissing condition is $\ip{x}{y}\le q/2$ before normalization.
The main coordinate scales are collected in Table~\ref{tab:scales};
products between different shells are treated separately.

\begin{table}[ht]
\centering
\caption{Coordinate scales used in the constructions.}
\label{tab:scales}
\begin{tabular}{lrr}\toprule
Point set & Squared norm & Product threshold\\\midrule
Normalized configurations & $1$ & $1/2$\\
Dimension-25 motion & $4$ & $2$\\
Integer Leech minimal shell & $32$ & $16$\\
Local signed codes & $8$ & $4$\\
Dimension-38 added shell & $48$ & $24$\\
Rank-48 minimal shells & $6$ & $3$\\
Dimension-45 projected fibres & $23/4$ & $23/8$\\\bottomrule
\end{tabular}
\end{table}

The paper first treats lifting and motion, then support and signed-code
replacement, second-shell completion, and the seven block lifts from $P_{48p}$.
The final construction sections develop the two moment arguments.
Appendices specify the Leech coordinates, dense codes, an explicit Pless
neighbour, the common theta-series shell count, and deformation estimates.
Section~\ref{sec:certificates} connects
the proofs to the finite inputs and independent checking programs.
The accompanying catalogue records dated public comparisons separately
from the existence statements of Theorem~\ref{thm:main}.

Section~\ref{sec:autonomous-research} and the accompanying research
accounts describe Qiushi Engine's mathematical contributions and the
development of the constructions.

\section{Lifting and the cost of reusing a direction}\label{sec:lifting}

Let $C\subset S^{D-1}$ be a kissing configuration. Decompose the enlarged
space as $\R^{D+k}=\R^D\oplus\R^k$. We refer to the second component as
the \emph{tail}. Fix $\ell\ge1$ distinct tails $t_1,\ldots,t_\ell$ with
$0<\norm{t_i}<1$, and put $a_i=\sqrt{1-\norm{t_i}^2}$. Choose subsets
$A_i\subseteq C$, write $U=\bigcup_i A_i$, and let
$Y\subset S^{k-1}$ be a kissing configuration. Consider
\begin{align}
 Z&=\{(x,0):x\in C\setminus U\},\nonumber\\
 M&=\bigcup_{i=1}^{\ell}\{(a_i x,t_i):x\in A_i\},\label{eq:general-lift}\\
 P&=\{(0,y):y\in Y\}.\nonumber
\end{align}
Put $X=Z\cup M\cup P$. First-component norms distinguish the three
parts, and tails distinguish the copies in $M$, so
\begin{equation}\label{eq:lift-count}
 |Z\cup M\cup P|=|C|-|U|+\sum_i|A_i|+|Y|.
\end{equation}

\begin{theorem}[Labelled lifting criterion]\label{lem:lift}
The set in~\eqref{eq:general-lift} is a kissing configuration if and only if
\begin{align}
 a_i a_j\ip{x}{x'}+\ip{t_i}{t_j}&\le\frac12
 &&\text{whenever }(x,i)\ne(x',j),\ x\in A_i,\ x'\in A_j,
 \label{eq:lift-mixed}\\
 \ip{t_i}{y}&\le\frac12
 &&(i\text{ with }A_i\ne\varnothing,\ y\in Y).\label{eq:lift-pure}
\end{align}
\end{theorem}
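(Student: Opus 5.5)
The plan is to verify the kissing condition \eqref{eq:kissing} pairwise, sorting pairs of points of $X=Z\cup M\cup P$ by the parts they lie in. First I will check that every point is a unit vector. For $(x,0)\in Z$ and $(0,y)\in P$ this holds because $x\in S^{D-1}$ and $y\in S^{k-1}$. For $(a_ix,t_i)\in M$ the squared norm is $a_i^2+\norm{t_i}^2=1$. Distinctness of the listed points was already noted before \eqref{eq:lift-count}: first-component norms are $1$, $a_i\in(0,1)$ and $0$ on the three parts, and tails separate copies in $M$. Pairs inside $M$ coming from different labels are therefore genuinely distinct points, even when they share the same direction $x$.

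I will then compute the six types of inner products.
\begin{itemize}
\item $Z$--$Z$: the product is $\ip{x}{x'}\le\tfrac12$, because $C$ is kissing.
\item $P$--$P$: the product is $\ip{y}{y'}\le\tfrac12$, because $Y$ is kissing.
\item $Z$--$P$: the product is $0$.
\item $Z$--$M$: the product is $a_i\ip{x}{x'}$ with $x\in C\setminus U$ and $x'\in A_i\subseteq U$. Hence $x\ne x'$, so $\ip{x}{x'}\le\tfrac12$. If this product is nonnegative it is at most $a_i\cdot\tfrac12<\tfrac12$; if it is negative there is nothing to prove. So this type holds unconditionally.
\item $M$--$P$: the product is exactly $\ip{t_i}{y}$. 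This type occurs only for labels $i$ with $A_i\ne\varnothing$, and there the condition is precisely \eqref{eq:lift-pure}.
\item $M$--$M$: the product of $(a_ix,t_i)$ and $(a_jx',t_j)$ is $a_ia_j\ip{x}{x'}+\ip{t_i}{t_j}$, and distinct points correspond exactly to $(x,i)\ne(x',j)$. So the condition is precisely \eqref{eq:lift-mixed}.
\end{itemize}

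Both directions of the equivalence then follow together. For sufficiency, the four unconditional types hold automatically and the two conditional types are the hypotheses. For necessity, each instance of \eqref{eq:lift-mixed} or \eqref{eq:lift-pure} is literally the product of two distinct points of $X$, so it must satisfy \eqref{eq:kissing}.

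The only point needing care is the bookkeeping. The $Z$--$M$ case uses $x\notin U$, which rules out $x=x'$; the case $x=x'$ with different labels $i\ne j$ belongs to the $M$--$M$ case and is covered by \eqref{eq:lift-mixed}. Condition \eqref{eq:lift-pure} is required only for labels that actually contribute points. I expect no step to be a real obstacle: the argument is a complete case enumeration, and the key is not to miss the same-direction, different-label pairs.
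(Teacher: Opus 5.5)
Your proposal is correct and is essentially the paper's proof. It uses the same case split by layer: the equator--mixed case is handled by $x\notin U$ and the bound $a_i/2<1/2$, and the mixed--mixed and mixed--pure products are exactly conditions \eqref{eq:lift-mixed} and \eqref{eq:lift-pure}. You also check that each instance of these conditions comes from two genuinely distinct points of $X$, which the necessity direction needs and which the paper leaves implicit.
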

\begin{proof}
Every point has norm one. Equator--equator and pure-tail pairs inherit
the inequalities for $C$ and $Y$. Equator--pure-tail pairs are orthogonal.
For an equator--mixed pair, the two original directions are different
because the equator excludes $U$; their inner product is at most
$a_i/2<1/2$. The remaining pairs have exactly the products in
\eqref{eq:lift-mixed} and~\eqref{eq:lift-pure}, which are therefore
both necessary and sufficient.
\end{proof}

Equivalently, the tail Gram matrix specifies a threshold matrix
\begin{equation}\label{eq:kappa}
 \kappa_{ij}=\frac{1/2-\ip{t_i}{t_j}}{a_i a_j},
\end{equation}
and the selected base directions must satisfy
$\ip{x}{x'}\le\kappa_{ij}$ for every distinct labelled pair.
This allows unequal latitudes, subject separately to~\eqref{eq:lift-pure}.

For equal tail norms, put $\norm{t_i}^2=h^2$ and
$a_i=a=\sqrt{1-h^2}$. The same-label condition is
\begin{equation}\label{eq:block-threshold}
 \ip{x}{x'}\le\frac{1/2-h^2}{1-h^2}\qquad(x\ne x',\ x,x'\in A_i),
\end{equation}
whereas two copies of the same direction require
\begin{equation}\label{eq:reuse-threshold}
 \ip{t_i}{t_j}\le h^2-\frac12\qquad(x\in A_i\cap A_j,\ i\ne j).
\end{equation}
These inequalities control block size and multiplicity, respectively.

\subsection{The incidence-capacity inequality}
For general tail norms define a graph $\Gamma$ on the labels by
\begin{equation}\label{eq:reuse-graph}
 ij\in E(\Gamma)\quad\Longleftrightarrow\quad
 a_i a_j+\ip{t_i}{t_j}\le\frac12\qquad(i\ne j).
\end{equation}
The labels assigned to one base direction form a clique in $\Gamma$.

\begin{proposition}[Weighted incidence bound]\label{prop:weighted}
Suppose~\eqref{eq:general-lift} is a kissing configuration with
$|A_i|\le\alpha_i$. If nonnegative numbers $\lambda_1,\ldots,\lambda_\ell$
satisfy
\begin{equation}\label{eq:clique-cover}
 \sum_{i\in J}\lambda_i\ge |J|-1
 \quad\text{for every nonempty clique }J\text{ of }\Gamma,
\end{equation}
then
\begin{equation}\label{eq:weighted-capacity}
 |X|\le |C|+|Y|+\sum_i\lambda_i\alpha_i.
\end{equation}
Equality holds exactly when every used direction's label clique is
tight in~\eqref{eq:clique-cover} and every label with
$\lambda_i>0$ reaches its capacity.
\end{proposition}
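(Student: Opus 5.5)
The plan is to count by base directions and use the count formula \eqref{eq:lift-count}, $|X|=|C|-|U|+\sum_i|A_i|+|Y|$. For each $x\in U$ let $J(x)=\{i: x\in A_i\}$, which is nonempty. The first step is to show that $J(x)$ is a clique of $\Gamma$. If $i\ne j$ both lie in $J(x)$, then the pair $(x,i)\ne(x,j)$ in \eqref{eq:lift-mixed} with $\ip{x}{x}=1$ gives $a_ia_j+\ip{t_i}{t_j}\le\frac12$, which is exactly the edge condition \eqref{eq:reuse-graph}. This is Theorem~\ref{lem:lift} applied to a pair of copies of one direction, so the necessity direction of that theorem is what we use.

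Next, rewrite the excess over $|C|+|Y|$ by double counting incidences:
\[
\sum_i|A_i|-|U|=\sum_{x\in U}\bigl(|J(x)|-1\bigr).
\]
By \eqref{eq:clique-cover} applied to $J=J(x)$, each summand satisfies $|J(x)|-1\le\sum_{i\in J(x)}\lambda_i$. Summing over $x\in U$ and exchanging the order of summation gives
\[
\sum_{x\in U}\sum_{i\in J(x)}\lambda_i=\sum_i\lambda_i|A_i|\le\sum_i\lambda_i\alpha_i,
\]
where the last step uses $\lambda_i\ge0$ and $|A_i|\le\alpha_i$. Substituting into \eqref{eq:lift-count} yields \eqref{eq:weighted-capacity}.

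For the equality statement, the chain above consists of exactly two inequalities, and each is a sum of nonnegative slack terms. The first has slacks $\sum_{i\in J(x)}\lambda_i-(|J(x)|-1)\ge0$ over the used directions $x\in U$. The second has slacks $\lambda_i(\alpha_i-|A_i|)\ge0$. Equality therefore holds iff every such term vanishes: every used direction's label clique is tight in \eqref{eq:clique-cover}, and $|A_i|=\alpha_i$ whenever $\lambda_i>0$. This is the stated characterization.

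The only delicate point is bookkeeping rather than a real obstacle. The identity $|U|=\sum_{x\in U}1$ has to be paired with $\sum_i|A_i|=\sum_{x\in U}|J(x)|$, and the equality clause must quantify over used directions only, since unused directions contribute nothing to either side. The argument also never needs the pure-tail condition \eqref{eq:lift-pure}.
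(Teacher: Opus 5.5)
Your proposal is correct and follows the paper's proof: the same incidence count $|X|-|C|-|Y|=\sum_{x\in U}(|J(x)|-1)$, the clique-cover inequality applied to each $J(x)$, an exchange of summation to reach $\sum_i\lambda_i|A_i|\le\sum_i\lambda_i\alpha_i$, and the equality case read off from the two sums of nonnegative slacks. The one addition is your explicit check that each $J(x)$ is a clique, via the necessity direction of Theorem~\ref{lem:lift} with $x=x'$; the paper only asserts this in the sentence preceding the proposition.
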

\begin{proof}
For $x\in U$, let $J(x)=\{i:x\in A_i\}$. The count gives
\[
 |X|-|C|-|Y|=\sum_{x\in U}(|J(x)|-1)
 \le\sum_{x\in U}\sum_{i\in J(x)}\lambda_i
 =\sum_i\lambda_i|A_i|\le\sum_i\lambda_i\alpha_i.
\]
Both inequalities are sums of nonnegative deficits. Their equality
conditions are precisely those stated.
\end{proof}

If $\Gamma$ has clique number $\omega$, the uniform choice
$\lambda_i=1-1/\omega$ is feasible. Thus
\begin{equation}\label{eq:clique-capacity}
 |X|\le |C|+|Y|+
 \left(1-\frac1\omega\right)\sum_i\alpha_i.
\end{equation}
Minimizing $\sum_i\lambda_i\alpha_i$ subject to
\eqref{eq:clique-cover} is a finite linear program; any rational feasible
choice bounds the prescribed family. The twelve tails below have eight
maximum cliques, all triangles; four disjoint triangles saturate the
uniform bound. Rankin's simplex argument~\cite{rankin} also bounds
multiplicity directly from the tail geometry.

\begin{proposition}[Multiplicity bound]\label{prop:multiplicity}
Suppose~\eqref{eq:general-lift} is a kissing configuration, all tails
have squared norm $h^2$, and $h^2<1/2$. The number of copies of any
original direction is at most
\begin{equation}\label{eq:max-multiplicity}
 r=\min\left\{k+1,\left\lfloor\frac1{1-2h^2}\right\rfloor\right\}.
\end{equation}
If $|A_i|\le\alpha_i$, its cardinality is consequently at most
\begin{equation}\label{eq:reuse-bound}
 |C|+|Y|+\left(1-\frac1r\right)\sum_i\alpha_i.
\end{equation}
\end{proposition}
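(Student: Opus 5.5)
Fix an original direction $x\in U$ and let $J=J(x)=\{i:x\in A_i\}$ with $m=|J|$. By \eqref{eq:reuse-threshold}, the tails $t_i$ for $i\in J$ satisfy $\ip{t_i}{t_j}\le h^2-\tfrac12<0$ for distinct $i,j\in J$. Since all tails have squared norm $h^2$, this is a set of $m$ vectors in $\R^k$ with pairwise negative inner products, which gives two separate bounds on $m$.

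The first bound is the classical fact behind Rankin's simplex argument: at most $k+1$ vectors in $\R^k$ can have pairwise negative inner products. If there were $k+2$, a linear dependence among them could be split into positive and negative parts $\sum_{P}c_it_i=\sum_{N}c_jt_j$. Taking the inner product of this common vector with itself gives a nonpositive quantity, so the vector is $0$. If one side is empty, pairing the nonzero side with a single tail $t_{i_0}$ from outside its support gives a contradiction, because that inner product is strictly negative. So $m\le k+1$.

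The second bound comes from expanding the squared norm of the sum:
\[
0\le\Bigl\|\sum_{i\in J}t_i\Bigr\|^2\le m h^2+m(m-1)\Bigl(h^2-\tfrac12\Bigr)=m\Bigl(mh^2-\tfrac{m-1}{2}\Bigr).
\]
This requires $m(1-2h^2)\le1$, that is $m\le 1/(1-2h^2)$, where $1-2h^2>0$ because $h^2<\tfrac12$. Since $m$ is an integer, $m\le\lfloor 1/(1-2h^2)\rfloor$. Together the two bounds give $m\le r$ as in \eqref{eq:max-multiplicity}.

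For \eqref{eq:reuse-bound}, apply Proposition~\ref{prop:weighted} with the uniform weights $\lambda_i=1-1/r$. It remains to check \eqref{eq:clique-cover}. Every label clique $J(x)$ actually used has $|J|\le r$ by the argument above, and for such a $J$ we have $|J|(1-1/r)\ge|J|-1$, since this is equivalent to $|J|\le r$.

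The one point needing care is that \eqref{eq:clique-cover} is stated for every clique of $\Gamma$, not only those realized by some $x$. With equal norms, $ij\in E(\Gamma)$ is exactly condition \eqref{eq:reuse-threshold}, so any clique of $\Gamma$ is itself a set of tails with pairwise products at most $h^2-\tfrac12$. The same two bounds therefore give $|J|\le r$ for every clique, and the uniform weights are feasible. Alternatively, the chain of inequalities in the proof of Proposition~\ref{prop:weighted} can be rerun directly, since it only uses the cliques $J(x)$. Either way, \eqref{eq:reuse-bound} follows.
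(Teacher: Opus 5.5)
Your proof is correct and follows the paper's route. The norm-of-sum estimate gives the latitude bound, a sign-splitting dependence argument gives $m\le k+1$, and uniform weights in Proposition~\ref{prop:weighted} give \eqref{eq:reuse-bound}; the paper phrases the last step as $\omega(\Gamma)\le r$ together with \eqref{eq:clique-capacity}. The one loose point is the one-sided case of your linear dependence, which needs a tail outside the support: take the dependence among just $k+1$ of the $k+2$ tails, or use an affine dependence as the paper does, whose two sides are automatically nonempty.
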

\begin{proof}
For any clique of $m$ labels,~\eqref{eq:reuse-threshold} gives
\[
 0\le\norm{\sum_{i=1}^{m}t_i}^2
 \le mh^2+m(m-1)(h^2-1/2)
 =\frac m2\bigl(1-m(1-2h^2)\bigr).
\]
This proves the latitude bound. The tails have strictly negative mutual products.
They are affinely independent: an affine dependence would equate
nonempty positive combinations of disjoint sets of tails, whose mutual
inner product is negative, although they are the same vector.
Thus $m\le k+1$ and $\omega(\Gamma)\le r$.
Equation~\eqref{eq:clique-capacity} proves~\eqref{eq:reuse-bound}.
\end{proof}

The multiplicity bound is sharp for one direction at a prescribed
height. For $r\ge2$, a regular simplex of $r$ tails of squared norm
$h^2$ fits in $\R^k$ and has mutual products
$-h^2/(r-1)\le h^2-1/2$; for $r=1$, one tail suffices.
This does not assert simultaneous saturation of block and pure-tail
conditions.

\begin{proposition}[Sharp simplex specialization]\label{prop:simplex}
Let $r\ge2$, let $C\subset S^{D-1}$ be a kissing configuration, and
let $B\subset C$ be nonempty. An equal-latitude lifting replacing
every direction of $B$ by copies at the same $r$ tails, with
$0<h^2<1/2$, exists in $\R^{D+k}$ if and only if
\[
 k\ge r-1,\qquad
 \ip{x}{x'}\le\frac1{r+1}\quad(x\ne x',\ x,x'\in B).
\]
No pure-tail points are required. In particular, there is a kissing
configuration in $\R^{D+r-1}$ with
$|C|+(r-1)|B|$ points. It replaces each direction of $B$ by $r$
copies with tails of squared norm $(r-1)/(2r)$.
\end{proposition}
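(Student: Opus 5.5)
We apply the labelled lifting criterion (Theorem~\ref{lem:lift}) with $\ell=r$ labels, $A_1=\cdots=A_r=B$, equal tail norms $h^2$, and $Y=\varnothing$. With $Y$ empty, condition~\eqref{eq:lift-pure} is vacuous; this is why no pure-tail points are needed. Condition~\eqref{eq:lift-mixed} then splits into three families:
\begin{itemize}
\item the reuse condition~\eqref{eq:reuse-threshold}, $\ip{t_i}{t_j}\le h^2-\tfrac12$ for $i\ne j$, coming from pairs with $x=x'$;
\item the same-label block condition~\eqref{eq:block-threshold};
\item the cross condition $a^2\ip{x}{x'}+\ip{t_i}{t_j}\le\tfrac12$ for $x\ne x'$ and $i\ne j$.
\end{itemize}
The count~\eqref{eq:lift-count} gives $|C|-|B|+r|B|=|C|+(r-1)|B|$.

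\emph{Necessity.} Every direction of $B$ carries the full clique of $r$ labels. The argument of Proposition~\ref{prop:multiplicity} therefore applies to this clique.

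First, the tails have pairwise negative products, so they are affinely independent in $\R^k$. Hence $r\le k+1$.

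Second, expanding $0\le\norm{\sum_i t_i}^2$ under the reuse condition gives $1-r(1-2h^2)\ge0$, that is, $h^2\ge (r-1)/(2r)$.

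Now suppose $|B|\ge2$. The block threshold $(1/2-h^2)/(1-h^2)$ is decreasing in $h^2$ on $[0,1/2)$. Its value at $h^2=(r-1)/(2r)$ equals $\frac{1/(2r)}{(r+1)/(2r)}=\frac1{r+1}$. So the block condition forces $\ip{x}{x'}\le 1/(r+1)$ for distinct $x,x'\in B$. If $|B|=1$, this condition is vacuous.

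\emph{Sufficiency.} Since $k\ge r-1$, we can place a regular simplex of $r$ tails in an $(r-1)$-dimensional subspace of $\R^k$, with squared norm $h^2=(r-1)/(2r)<1/2$. The mutual products are $-h^2/(r-1)=-1/(2r)$, which equals $h^2-\tfrac12$. So the reuse condition holds with equality. At this $h^2$ the block condition is exactly the hypothesis $\ip{x}{x'}\le1/(r+1)$. For the cross pairs, we have $a^2=(r+1)/(2r)$, so
\[
a^2\ip{x}{x'}+\ip{t_i}{t_j}\le\frac{r+1}{2r}\cdot\frac1{r+1}-\frac1{2r}=0\le\frac12 .
\]
Theorem~\ref{lem:lift} then shows that the lifted set is a kissing configuration in $\R^{D+k}$. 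Taking $k=r-1$ gives the stated configuration in $\R^{D+r-1}$.

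The only delicate step is the necessity direction. It has two parts: observing that the simplex sum bound gives a lower bound on $h^2$, and using the monotonicity of the block threshold to turn that lower bound into the sharp $1/(r+1)$ condition. Both reuse Proposition~\ref{prop:multiplicity} essentially verbatim.
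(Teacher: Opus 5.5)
Your proposal is correct and follows the paper's proof essentially step for step. For necessity it uses the norm-of-sum bound and affine independence from Proposition~\ref{prop:multiplicity}, together with the monotonicity of the block threshold. For sufficiency it places a regular simplex of tails with squared norm $(r-1)/(2r)$ and checks the three pair types against Theorem~\ref{lem:lift}. One small remark: for necessity you do not need to fix $Y=\varnothing$, since only the mixed conditions~\eqref{eq:lift-mixed} enter, and this does not affect your argument.
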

\begin{proof}
Necessity follows from Proposition~\ref{prop:multiplicity}:
$k\ge r-1$ and $h^2\ge(r-1)/(2r)$. The same-label threshold
$(1/2-h^2)/(1-h^2)$ decreases with $h^2$, so is at most $1/(r+1)$.
For sufficiency, in the hyperplane $\sum_i z_i=0$ of $\R^r$, put
\[
 t_i=\frac1{\sqrt2}\left(e_i-\frac1r\boldsymbol1\right),
 \qquad a=\sqrt{\frac{r+1}{2r}}\qquad(1\le i\le r).
\]
The tails satisfy $\norm{t_i}^2=(r-1)/(2r)$ and
$\ip{t_i}{t_j}=-1/(2r)$ for $i\ne j$. Retain $C\setminus B$
at the equator and use all $(ax,t_i)$ with $x\in B$.
Same-label, distinct-direction products are at most
\[
 \frac{r+1}{2r}\frac1{r+1}+\frac{r-1}{2r}=\frac12.
\]
Distinct labels on the same direction give
$(r+1)/(2r)-1/(2r)=1/2$, and distinct labels on different
directions give at most zero. Theorem~\ref{lem:lift} proves
validity, and~\eqref{eq:lift-count} gives the count.
\end{proof}

At this tail norm the multiplicity bound is $r$, and equality in
its norm-of-sum proof forces the used tails to sum to zero, with
every mutual inner product $-1/(2r)$. Their Gram matrix has rank
$r-1$, so this equality geometry is precisely a regular simplex.
At $h^2=1/4$ the block threshold is $1/3$ and a direction can occur
twice. At $h^2=1/3$ the threshold is $1/4$ and a direction can occur
three times. The Leech constructions realize both cases, fitting
several triangles and pure tails into three tail dimensions in the latter.
The former extends to disjoint blocks as follows.

\begin{theorem}[Antipodal block lift]\label{thm:block-lift}
Let $C\subset S^{D-1}$ be a kissing configuration, and let
$B_1,\ldots,B_t\subset C$ be pairwise disjoint subsets satisfying
\[
 \ip{x}{x'}\le\frac13\qquad(x\ne x',\ x,x'\in B_i).
\]
Let $Y\subset S^{k-1}$ be a kissing configuration containing $t$
distinct antipodal pairs $\{\pm u_i\}$. Then
\begin{align}
 X={}&\{(x,0):x\in C\setminus\textstyle\bigcup_iB_i\}\nonumber\\
 &\quad\cup\{(\tfrac{\sqrt3}{2}x,\tfrac\eps2u_i):
                     x\in B_i,\ \eps\in\{\pm1\}\}
       \cup\{(0,y):y\in Y\}\label{eq:antipodal-lift}
\end{align}
is a kissing configuration in $\R^{D+k}$ with
\begin{equation}\label{eq:antipodal-count}
 |X|=|C|+\sum_i|B_i|+|Y|.
\end{equation}
\end{theorem}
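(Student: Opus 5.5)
The plan is to realize~\eqref{eq:antipodal-lift} as a special case of the general lift~\eqref{eq:general-lift} and then apply Theorem~\ref{lem:lift}. Use $\ell=2t$ labels, indexed by pairs $(i,\eps)$, with tails $t_{i,\eps}=\tfrac\eps2u_i$. The corresponding sets are $A_{i,\eps}=B_i$. Every tail has squared norm $h^2=1/4$, so $a=\sqrt{3}/2$ as required, and $0<\norm{t_{i,\eps}}<1$. The tails are distinct because the antipodal pairs $\{\pm u_i\}$ are distinct and $u_i\ne-u_i$. The union of the $A_{i,\eps}$ is $U=\bigcup_iB_i$. Since the $B_i$ are disjoint, we get $|U|=\sum_i|B_i|$ and $\sum_{i,\eps}|A_{i,\eps}|=2\sum_i|B_i|$. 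Substituting into~\eqref{eq:lift-count} gives~\eqref{eq:antipodal-count}.

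Next I check~\eqref{eq:lift-mixed}, splitting by whether the labelled pair $(x,(i,\eps))\ne(x',(j,\eps'))$ shares its block.

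\emph{Same block, same sign, $x\ne x'$:} the product is at most $\tfrac34\cdot\tfrac13+\tfrac14=\tfrac12$.

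\emph{Same block, opposite signs:} the tail product is $-\tfrac14$. Since $\ip{x}{x'}\le1$, the product is at most $\tfrac34-\tfrac14=\tfrac12$. This also covers reuse of the same direction, in agreement with~\eqref{eq:reuse-threshold} at $h^2=1/4$.

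\emph{Different blocks $i\ne j$:} here $x\ne x'$ by disjointness, so $\ip{x}{x'}\le\tfrac12$ because $C$ is a kissing configuration. The tail product is $\tfrac{\eps\eps'}4\ip{u_i}{u_j}$. Since $u_i$ and $\pm u_j$ are distinct points of $Y$, we have $|\ip{u_i}{u_j}|\le\tfrac12$, so the tail product is at most $\tfrac18$. The total is therefore at most $\tfrac38+\tfrac18=\tfrac12$.

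Finally I check~\eqref{eq:lift-pure}: $\ip{\tfrac\eps2u_i}{y}=\tfrac12\ip{\eps u_i}{y}$, and this is at most $\tfrac12$ by Cauchy--Schwarz, whether or not $y=\eps u_i$.

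The only step needing care is the cross-block case. There the bound on $\ip{u_i}{u_j}$ must hold in absolute value, and this is exactly where the hypothesis that $Y$ contains both $u_j$ and $-u_j$ is used. The distinctness of the antipodal pairs ensures that $\eps u_i$ and $\eps' u_j$ are distinct points of $Y$, so the kissing inequality applies to them.
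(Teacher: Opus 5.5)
Your proposal is correct and follows the paper's proof. Both apply Theorem~\ref{lem:lift} with the $2t$ tails $\pm u_i/2$, check the same three mixed cases (including the cross-block bound $\tfrac38+\tfrac18$ from the distinct directed tails in $Y$), verify the pure-tail condition by Cauchy--Schwarz, and count via disjointness. The only cosmetic difference is in the same-block, opposite-sign case: you use $\ip{x}{x'}\le1$ uniformly, whereas the paper separates distinct directions (product at most $0$) from a repeated direction (product exactly $\tfrac12$).
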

\begin{proof}
Use Theorem~\ref{lem:lift} with tails $\pm u_i/2$. A mixed point has
inner product at most $1/2$ with every pure tail by Cauchy--Schwarz.
For two mixed points in one block at the same tail, the bound is
$(3/4)(1/3)+1/4=1/2$. At opposite tails, different original directions
give at most zero, and equal directions give $3/4-1/4=1/2$.

For points from different blocks, the original directions are distinct,
as are the directed tail labels. Their inner product is at most
$(3/4)(1/2)+(1/4)(1/2)=1/2$. Thus all conditions of the theorem hold.
Each selected direction is deleted once and used twice, giving the count.
\end{proof}

For one block and $Y=\{-1,1\}$ this construction lifts to two latitudes
and includes both poles. In fact their presence fixes the height.
If $(\sqrt{1-h^2}x,\pm h)$ are both used, their mutual inner product
$1-2h^2$ requires $h^2\ge1/4$. The poles require $|h|\le1/2$.
Thus $|h|=1/2$, for $0<|h|<1$. The same-latitude inequality at this
height is exactly the block threshold $1/3$ in the theorem.

Each selected direction adds one point to the orthogonal union of $C$
and $Y$. Section~\ref{app:p48} applies this to large disjoint blocks;
Section~\ref{sec:motion25} varies the two-latitude geometry; and
Section~\ref{sec:leech} uses triple reuse, adding two points per direction.

\section{Coordinated shell motion in dimension 25}\label{sec:motion25}
\providecommand{\ip}[2]{\langle #1,#2\rangle}

\noindent
A coordinated motion of a complete contact layer enlarges the 197579-point
configuration supplied in the public
\href{https://github.com/alexlegeartis/KissingNumbers/tree/cf14c5ef4db6059bbf2e570e3f0ce24edfda243e/verifications/improved/dim25-lens-heads}
{\texttt{KissingNumbers/dim25-lens-heads}} package.
The construction moves 552 retained equatorial points, adjoins one point,
and replaces two cap points. The motion preserves every internal product
of the moved layer.

\begin{lemma}[Coordinated chord motion]\label{lem:chord-motion}
Let $n$ be a unit vector, let $h>0$, and let $r_i\perp n$ satisfy
$\|r_i\|^2=\rho-h^2$. Put $u_i=r_i-hn$ and $w_i=r_i+hn$.
For real $c,d$ with $c^2+d^2=h^2$, set $W_i=(r_i+cn,d)$.
Then $\|W_i\|^2=\rho$ and $W_i\cdot W_j=w_i\cdot w_j$.
If $z\cdot u_i\le\rho/2$ and $z\cdot w_i\le\rho/2$, then
$(z,0)\cdot W_i\le\rho/2$.
\end{lemma}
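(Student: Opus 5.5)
The proof is a direct computation in three steps, using only orthogonality and a convexity argument in the parameter $c$. Throughout we use $r_i\perp n$, $\|n\|=1$, and the fact that $W_i=(r_i+cn,d)$ lives in the ambient space with one extra coordinate.

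\emph{Step 1 (norm).} Orthogonality of $r_i$ and $n$ gives
\[
\|W_i\|^2=\|r_i\|^2+c^2+d^2=(\rho-h^2)+h^2=\rho .
\]

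\emph{Step 2 (Gram preservation).} Expanding,
\[
W_i\cdot W_j=r_i\cdot r_j+c^2+d^2=r_i\cdot r_j+h^2 ,
\]
and likewise $w_i\cdot w_j=r_i\cdot r_j+h^2$. So the internal products of the moved layer are independent of the choice of $(c,d)$ on the circle $c^2+d^2=h^2$.

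\emph{Step 3 (interaction with a fixed point).} The point $(z,0)$ has zero last coordinate, so
\[
(z,0)\cdot W_i=z\cdot r_i+c\,(z\cdot n),
\]
which is affine in $c$. At $c=h$ it equals $z\cdot w_i$, and at $c=-h$ it equals $z\cdot u_i$; both are at most $\rho/2$ by hypothesis. Since $c^2\le h^2$, we have $c\in[-h,h]$, so $c$ is a convex combination of the endpoints $\pm h$. Taking the same convex combination of the two endpoint inequalities gives $(z,0)\cdot W_i\le\rho/2$.

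The only step needing care is this last one: the fixed point must have zero last coordinate, so that $d$ drops out and the product is genuinely affine in $c$ alone. This is exactly where "convexity controls its interaction with the fixed equator."
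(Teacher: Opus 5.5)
Your proof is correct and follows the paper's argument. The norm and Gram identities come from $r_i\perp n$ and $c^2+d^2=h^2$. For the last claim, your affine interpolation in $c$ between $c=-h$ and $c=h$ is the same as the paper's observation that the horizontal component $r_i+cn$ equals the convex combination $\frac{h-c}{2h}u_i+\frac{h+c}{2h}w_i$.
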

\begin{proof}
The first two claims follow from $r_i\perp n$ and $c^2+d^2=h^2$.
Since $|c|\le h$, the horizontal component of $W_i$ is
$\frac{h-c}{2h}u_i+\frac{h+c}{2h}w_i$, a convex combination.
The last inequality follows by linearity.
\end{proof}

For an axial candidate $Q=(\alpha n,\beta)$ with
$\alpha^2+\beta^2=\rho$, every moved point has the same product
$\alpha c+\beta d$ with $Q$. Simultaneous contact is possible exactly
when $h^2\ge\rho/4$, at the circle--line intersections
\begin{equation}\label{eq:chord-contact}
 (c,d)=\frac12(\alpha,\beta)
       \mathbin{\pm}\sqrt{\frac{h^2-\rho/4}{\rho}}\,(-\beta,\alpha).
\end{equation}
For a finite nonempty moved layer, compatibility with a fixed point
$(z,s)$ is exactly the half-plane constraint
\begin{equation}\label{eq:chord-boundary}
 c\ip nz+ds\le\rho/2-\max_i\ip{r_i}z.
\end{equation}
The candidate must also be compatible with the fixed points. In the
construction below, two cap replacements resolve the remaining conflicts.

\begin{proposition}[Finite boundary test]\label{prop:chord-finite}
Let $H_1,\ldots,H_m$ be closed half-planes in the $(c,d)$-plane,
and let $F$ consist of $(h,0)$ together with every intersection of
the circle $c^2+d^2=h^2$ with their boundary lines. Constant
inequalities contribute no boundary points. Then $|F|\le2m+1$, and
the circle meets $\bigcap_jH_j$ if and only if some point of $F$
belongs to every $H_j$.
More generally, designate some inequalities as mandatory and assign
nonnegative deletion costs to the others. The minimum cost of violated
inequalities, when the mandatory constraints admit a circle point,
is attained at a feasible point of $F$.
\end{proposition}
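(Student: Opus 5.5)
The plan is to reduce everything to a one-dimensional statement about arcs on the circle $S_h=\{c^2+d^2=h^2\}$ (assume $h>0$, as in Lemma~\ref{lem:chord-motion}). Each closed half-plane $H_j=\{(c,d):p_jc+q_jd\le e_j\}$ meets $S_h$ in a closed subset of the circle. If $(p_j,q_j)=(0,0)$ the constraint is constant and the trace is either all of $S_h$ or empty; this is why constant inequalities contribute no boundary points. Otherwise the trace is empty, all of $S_h$, a single point, or a closed arc. Its endpoints are exactly the points where the boundary line $p_jc+q_jd=e_j$ meets $S_h$. A line meets a circle in at most two points, so $|F|\le 2m+1$ is immediate.

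For the feasibility equivalence, one direction is trivial. For the other, let $I=S_h\cap\bigcap_jH_j$ be nonempty; it is a closed subset of the circle. If $I=S_h$, then $(h,0)\in I\cap F$. Otherwise $I$ is a proper nonempty closed subset. Take any point of $I$ and move along the circle, say counterclockwise, to the last point of the connected component of $I$ containing it. This endpoint $p$ lies in $I$ because $I$ is closed. Points just beyond $p$ leave $I$, so they violate some $H_j$ that is active at $p$. Hence $p$ lies on the boundary line of that $H_j$, so $p\in F$ and $p\in\bigcap_jH_j$.

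A degenerate case needs attention: a trace that is a single point, where the line is tangent to the circle. That point is still an intersection of the circle with the boundary line, so it belongs to $F$, and the argument is unchanged. I would phrase the argument via ``$p$ is a boundary point of the closed set $I$ relative to $S_h$, hence of some trace $S_h\cap H_j$.'' A relative boundary point of such a trace satisfies $p_jc+q_jd=e_j$ by continuity. This relative-boundary step is the only point requiring care. One has to check that a relative boundary point of a finite intersection of closed sets is a relative boundary point of one of them. This holds: if $p$ were interior to every trace, then a neighbourhood of $p$ would lie in all of them, hence in $I$.

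For the weighted version, fix any violation pattern. The cost function $\phi(c,d)=\sum_j w_j[\,(c,d)\notin H_j\,]$ over non-mandatory $j$ takes finitely many values on the feasible set $S_h\cap\bigcap_{\text{mand}}H_j$, so its minimum $\mu$ is attained at some point $p_0$. Let $S$ be the set of constraints satisfied at $p_0$, including all mandatory ones. Then $I_S=S_h\cap\bigcap_{j\in S}H_j$ is nonempty. Apply the first part to the family $\{H_j\}_{j\in S}$: it yields a point $p\in F_S\subseteq F$ satisfying every constraint in $S$. At $p$ the set of violated inequalities is contained in the complement of $S$. Since costs are nonnegative, $\phi(p)\le\mu$, hence $\phi(p)=\mu$, and $p$ is feasible for the mandatory constraints. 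This completes the plan, and I expect no real obstacle beyond the boundary-point step.
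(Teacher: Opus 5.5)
Your proof is correct and uses the same mechanism as the paper's: move along the circle until it meets a boundary line, and use closedness of the half-planes to ensure that no satisfied inequality is lost. The paper does this in one step (truth values are constant on the open arcs between intersection points, so any circle point can be pushed to an arc endpoint, or to $(h,0)$ if there are none, keeping every inequality it satisfied), which gives the weighted statement directly; you instead prove feasibility by a relative-boundary argument and obtain the weighted case by applying it to the constraints satisfied at a minimizer, which amounts to the same pointwise fact.
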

\begin{proof}
Remove all boundary intersections from the circle. On each remaining
open arc, every inequality has constant truth value. At an endpoint,
every inequality that held on the arc still holds, because its half-plane
is closed. Moving there preserves mandatory inequalities and cannot
increase deletion cost. If there are no boundary intersections, all truth
values are constant on the circle, and $(h,0)$ suffices.
\end{proof}

For a nonzero normal $A=(A_1,A_2)$, the boundary $A\cdot(c,d)=B$
contributes points exactly when $B^2\le h^2\|A\|^2$, namely
\[
 \frac{BA\pm\sqrt{h^2\|A\|^2-B^2}\,(-A_2,A_1)}{\|A\|^2}.
\]
Rational half-plane data and rational $h^2$ therefore give an exact
test on at most $2m+1$ quadratic-algebraic candidates. This minimizes
deletions; the existence of replacements is a separate requirement.

\paragraph{Normalization and input.}
All points have squared norm 4; the kissing condition is an inner product
at most 2 between distinct points. Write the input configuration as
\[
 C_0=\{(z,0):z\in L\setminus D\}
 \cup\{(x_i,\pm1):0\le i<1016\}
 \cup\{P_0,N,S\},
\]
where $L$ is the 196560-vector minimal shell of the Leech lattice,
$|D|=1016$, $P_0=(P,0)$, and $N=(0,2)$, $S=(0,-2)$.
The head indices refer to rows of the source \texttt{heads\_X.npy}; the
mathematical values of its 44 rational heads are preserved as exact
rational numbers in \texttt{baseline-heads.json}.

Let $v$ be the following squared-norm-six lattice vector associated
with the first complete block, and put $n=v/\sqrt6=-P/2$.
An exact integer description of this direction is
\[
v=\frac{1}{\sqrt8}(-5,-1,-1,-1,-1,1,-1,1,-1,-1,-1,1,
1,-1,-1,-1,1,1,-1,1,-1,-1,1,-1).
\]
The complete block is $U=\{u\in L:\ip uv=-3\}$, of size 552.
Every $u\in U$ belongs to $D$. For each such $u$, the point $w=u+v$ is
minimal, lies outside $D$, and satisfies $\ip uv=-3$ and $\ip uw=1$.
The corresponding old cap points have horizontal coordinate
$u+t v$, with $t=(3-\sqrt3)/6$.
All these input identities are checked by the shell and owner data.

\paragraph{The modification.}
For $u\in U$ define
\[
r=u+\sqrt{3/2}\,n=w-\sqrt{3/2}\,n,\qquad
a=\frac{\sqrt3}{2}+\frac{\sqrt2}{4},\qquad
b=\frac{\sqrt6-2}{4}.
\]
The values $(a,b)$ are the positive choice in~\eqref{eq:chord-contact}
for $h^2=3/2$, $\rho=4$, and $(\alpha,\beta)=(\sqrt3,-1)$.
Then $r\perp n$, $\|r\|^2=5/2$, and the old cap points over $u$ are
$(r-n/\sqrt2,\pm1)$. Replace every $(w,0)$ by
\[
 W(r)=(r+a n,b),
\]
and adjoin
\[
 Q=(\sqrt3 n,-1)=(V/4,-1),
\]
where $V=\sqrt8\,v$ is the integer vector printed above. In particular,
the new point $Q$ is rational and has squared norm $48/16+1=4$.
Finally replace the upper point $(x_{984},1)$ and the lower point
$(x_{1008},-1)$ by $R_+$ and $R_-$, respectively. The certificate specifies
two integer vectors and defines
\[
 R_+=\frac{2k}{\sqrt{1007176}},\qquad
 R_-=\frac{2\ell}{\sqrt{10209}},
\]
where $k,\ell$ are the integer rows in
\texttt{constructions/d25/repair-points.json}, of squared norms
1007176 and 10209.
The opposite old cap points $(x_{984},-1)$ and $(x_{1008},1)$ are retained.

\begin{proposition}[Two necessary cap changes]\label{prop:motion-two-repairs}
Keep $Q$ fixed and move the 552 points by
$W(r)=(r+cn,d)$, with $c^2+d^2=3/2$.
Any compatible configuration obtained this way must change at least
two original cap points. The specified construction attains this minimum.
\end{proposition}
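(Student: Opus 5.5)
The plan is to reduce the statement to the finite boundary test of Proposition~\ref{prop:chord-finite}, with the cap points as deletable constraints. I would fix $Q$ and parametrize the moved layer by $(c,d)$ on the circle $c^2+d^2=3/2$. The first step is to show that every moved point has the same product $\sqrt3\,c-d$ with $Q=(\sqrt3 n,-1)$. Hence compatibility with $Q$ is the single half-plane $\sqrt3 c-d\le 2$. Since $h^2=3/2\ge\rho/4=1$, this line meets the circle, and by~\eqref{eq:chord-contact} the admissible set is a closed arc.

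The second step classifies the remaining pairs. Products within the moved layer are preserved by Lemma~\ref{lem:chord-motion}, and the same lemma handles every retained equatorial point $(z,0)$, because the old configuration already gives $z\cdot u\le 2$ and $z\cdot w\le 2$. The new point $Q$ must also be checked against all fixed points of $C_0$ other than the moved ones. I would treat these exact rational checks as mandatory constraints; the claim is that they fail only at cap points, which are the points with $s\ne0$. For each fixed cap point $(z,s)$, compatibility with the moved layer is the half-plane~\eqref{eq:chord-boundary}. Its data $\ip nz$, $s$ and $\max_{u\in U}\ip{r}{z}$ are computable exactly from the shell, the owner data, and the rational heads in \texttt{baseline-heads.json}. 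Cap points conflicting with $Q$ itself get the same mandatory-or-delete treatment as constraints independent of $(c,d)$.

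The third step is the minimum-cost computation. Give every cap constraint cost one and make the $Q$ half-plane mandatory. By Proposition~\ref{prop:chord-finite}, the minimum number of violated cap constraints over the admissible arc is attained at one of at most $2m+1$ quadratic-algebraic points of $F$. Evaluating every candidate with exact arithmetic (or outward-rounded intervals) should show that each candidate violates at least two cap inequalities. This is the required lower bound: any configuration of the stated form must change at least two original cap points. The interpretation needs one check, namely that "change" covers both deletion and replacement. A replaced point no longer imposes its old half-plane, so the set of changed points must cover the violated constraints at the chosen $(c,d)$.

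The final step is attainment. At the positive contact choice $(c,d)=(a,b)$, the violated cap inequalities should be exactly those of $(x_{984},1)$ and $(x_{1008},-1)$. I would then verify that $R_\pm$ have squared norm $4$ and satisfy all products with the full new configuration, including $Q$, the moved layer (via~\eqref{eq:chord-boundary}), each other, and the retained opposite caps. The main obstacle is the lower-bound enumeration. It requires exact sign decisions for many algebraic expressions involving $\sqrt2,\sqrt3,\sqrt6$ near tight contacts. I would first reduce by discarding constraints whose line misses the circle and grouping identical half-planes, then certify the remaining signs exactly in $\mathbb Q(\sqrt2,\sqrt3)$ rather than numerically.
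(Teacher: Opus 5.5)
Your plan is sound in principle, but it takes a different route from the paper. You apply the minimum-cost form of Proposition~\ref{prop:chord-finite} to all caps at once and certify that every feasible point of $F$ violates at least two cap inequalities. The paper never enumerates $F$; it isolates the obstruction directly:
\begin{itemize}
\item The lower cap $(x_{1008},-1)$ has product greater than $231/100$ with $Q$, so it must change for every $(c,d)$.
\item If it were the only change, all 552 first-block cap pairs $(r-n/\sqrt2,\pm1)$ would survive. Their products with their own moved points force $-c/\sqrt2+|d|\le-1/2$, hence $c>0$ and $|d|\le1/3$.
\item On that arc $\sqrt3c-d$ decreases in $d$, so compatibility with $Q$ forces $d\ge b$.
\item On $[b,1/3]$ the largest moved-layer product $g(d)$ with $(x_{984},1)$ is increasing, using $\|x_{984}\|=\sqrt3$. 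Since $g(b)>201/100$, a second cap must change.
\end{itemize}
The paper therefore needs only two certified strict inequalities, and it explains why exactly the caps $984$ and $1008$ are the ones repaired. Your method needs no such insight and returns the exact minimum for any fixed insertion. Its lower bound, however, rests entirely on an enumeration you have not performed, over up to $2m+1$ algebraic candidates, where $m$ is the number of distinct cap half-planes. Those candidates involve the further roots $\sqrt{h^2\|A\|^2-B^2}$, so the exact sign tests do not take place in $\Q(\sqrt2,\sqrt3)$ itself.

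Two details need correcting.

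First, $P_0=(-2n,0)$ is not a Leech shell vector. For $u\in U$ one has $\ip{-2n}{u}=\sqrt6>2$, so Lemma~\ref{lem:chord-motion} does not cover $P_0$. Its product with a moved point is $-2c$, which imposes the separate constraint $c\ge-1$. Omitting this constraint only enlarges the feasible arc, so your lower bound survives. The attainment check must still include it; it holds at $c=a$.

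Second, declaring the cap points to be those with $s\ne0$ sweeps in the poles $N,S$, whereas the statement counts only the 2032 points $(x_i,\pm1)$. Either give the pole constraints $|d|\le1$ zero cost, or check that each feasible candidate violates two genuine caps. Here the pole constraints fail only when $|d|>1$, where the first-block caps also fail, so your conclusion is unaffected. The model should nonetheless say so.
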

\begin{proof}
The exact cap data give $Q\cdot(x_{1008},-1)>231/100>2$,
so this lower cap must change independently of $(c,d)$.
If it were the only changed cap, all first-block cap pairs
$(r-n/\sqrt2,\pm1)$ would remain. Their products with their own
moved points require $-c/\sqrt2+|d|\le-1/2$.
Together with the circle equation this gives $c>0$ and $|d|\le1/3$.
On this interval $c=\sqrt{3/2-d^2}$ and $\sqrt3c-d$ is strictly
decreasing. Compatibility with $Q$ therefore forces $d\ge b$,
since $\sqrt3a-b=2$.

Put $\nu=\ip n{x_{984}}$ and $M=\max_r\ip r{x_{984}}$.
The largest moved-layer product with the upper 984 cap is
$g(d)=M+\nu\sqrt{3/2-d^2}+d$. Since $\|x_{984}\|=\sqrt3$,
\[
 g'(d)=1-\frac{\nu d}{\sqrt{3/2-d^2}}
 \ge1-\frac{\sqrt6}{5}>0\qquad(b\le d\le1/3).
\]
The rational head and shell data, evaluated with outward-rounded
intervals, give $g(b)>201/100>2$. Thus this upper cap must also
change, a contradiction. The motion-obstruction checker reproduces
both strict checks. The subsequent pair verification proves that
the specified two replacements suffice.
\end{proof}

\paragraph{Pair verification.}
Lemma~\ref{lem:chord-motion} gives the moved-point norms, internal products
and compatibility with the unchanged equator. The identities
\[
a^2+b^2=\frac32,\qquad \sqrt3a-b=2,\qquad
-\frac a{\sqrt2}+b=-\frac34
\]
give $Q^2=4$ and $\ip{W(r)}Q=2$.

The shell has integral $\ip vz\in\{-3,-2,-1,0,1,2,3\}$.
The negative extreme layer is deleted and the positive extreme layer
is exactly the moved set. Thus every unchanged equatorial point obeys
\[
 \ip Q{(z,0)}=\frac{\ip vz}{\sqrt2}\le\sqrt2<2.
\]
The shell checker verifies these layer identities in integers.
Products of $W$ with $(P_0,N,S)$ are $(-2a,2b,-2b)$; those of $Q$
are $(-2\sqrt3,-2,2)$. Against its own first-block caps, $W(r)$ has
products $7/4$ and $11/4-\sqrt6/2<2$; different block indices give
smaller products because $\ip{r_i}{r_j}\le1/2$.

The four remaining groups are: moved points against the 2030 unchanged
caps; $Q$ against those caps; and each of $R_+,R_-$ against all other
final points. Their outward-rounded rational upper bounds are
\[
1.999372,\quad 1.965926,\quad 1.994722,\quad 1.987522.
\]
The checker uses 192-bit Arb intervals~\cite{johansson-arb} and verifies
both repair norms exactly. Unchanged pairs retain the input inequalities.

\begin{theorem}\label{thm:d25}
The specified configuration consists of 197580 distinct points on
the radius-2 sphere in $\mathbb R^{25}$ with pairwise inner products at
most 2. Consequently the kissing number in dimension 25 is at least 197580.
\end{theorem}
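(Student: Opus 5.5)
The plan is to assemble Theorem~\ref{thm:d25} from the count and the pair checks already prepared. First the cardinality. The input $C_0$ has $196560-1016+2\cdot1016+3=197579$ points. The modification replaces the 552 points $(w,0)$ by the 552 points $W(r)$, replaces $(x_{984},1)$ and $(x_{1008},-1)$ by $R_+$ and $R_-$, and adjoins $Q$, for a net change of $+1$. Distinctness comes almost for free from the kissing inequality. Distinct unit-scale points with mutual products at most $2<4$ are automatically distinct vectors, so it suffices that no two listed elements coincide. We check this by last-coordinate and norm separation: $W(r)$ has last coordinate $b\notin\{0,\pm1,\pm2\}$ and $Q$ has last coordinate $-1$ but is rational and axial. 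Alternatively, it follows directly from the strict product bounds below, since coincident points would have product $4$.

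Next come the norms. The unchanged points have squared norm $4$ by the input. The moved points have squared norm $4$ by Lemma~\ref{lem:chord-motion} with $\rho=4$, $h^2=3/2$, using $\|r\|^2=5/2$ and $a^2+b^2=3/2$. We have $Q^2=4$ exactly, and $R_\pm$ have squared norm $4$ because $\|k\|^2=1007176$ and $\|\ell\|^2=10209$.

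Then we partition the unordered pairs into groups and cite the bound for each.
\begin{itemize}
\item Unchanged--unchanged pairs inherit the bound from $C_0$.
\item Moved--moved pairs have $W_i\cdot W_j=w_i\cdot w_j\le2$ by Lemma~\ref{lem:chord-motion}.
\item For moved points against unchanged equatorial points, apply the convexity clause of Lemma~\ref{lem:chord-motion}. Here $u_i=u$ and $w_i=w$; the point $z\neq w$ is a retained minimal vector, and $u\in D$ is deleted, so $z\ne u$. Both $z\cdot u\le2$ and $z\cdot w\le2$ then hold as Leech minimal-vector products.
\item Moved against $P_0,N,S$ and against their own block caps are the explicit values computed in the pair verification. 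Moved against other blocks' caps of the same layer fall under the bound $\ip{r_i}{r_j}\le1/2$.
\item $Q$ against the moved points gives exactly $2$, from $\sqrt3a-b=2$. $Q$ against the equator is at most $\sqrt2$ by the integrality of $\ip vz$ and the layer identities. $Q$ against $P_0,N,S$ gives the listed values.
\item The remaining four groups are the moved points and $Q$ against the 2030 unchanged caps, and $R_\pm$ against everything else. These are covered by the certified Arb upper bounds $1.999372$, $1.965926$, $1.994722$ and $1.987522$, all below $2$.
\end{itemize}

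The main obstacle is ensuring the partition is genuinely exhaustive and the input facts hold exactly. In particular, every $u\in U$ must lie in $D$, every $w=u+v$ must be retained and outside $D$, and the moved set must be exactly the $+3$ layer of $\ip vz$ among retained points. We rely on the shell and owner checkers, run in exact integers, for these facts.

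The delicate analytic point is that the cap groups involve irrational heads, so their bounds rest on the outward-rounded 192-bit interval evaluation. We would present this step as a finite certificate: for each pair we give a rational upper bound and its margin below $2$. Finally, normalizing by the radius gives a kissing configuration on $S^{24}$ of size $197580$, hence $K(25)\ge197580$.
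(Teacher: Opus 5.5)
Your proposal is correct and follows the paper's proof: the same count $197579-552-2+552+2+1=197580$, the exact norms, the same exhaustive pair groups (exact arguments for the moved layer, $Q$, the equator and $P_0,N,S$, and certified Arb bounds for the four cap and repair groups), and distinctness from the fact that coincident norm-$4$ points would have product $4>2$. Your first distinctness argument via last coordinates is incomplete, since it says nothing about $R_\pm$. Your fallback product argument is exactly the paper's, and it suffices.
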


\begin{proof}
The pair checks above, together with the integer or rational baseline
checks, exhaust all pairs. All squared norms are exactly 4. The moved
layer and repairs replace equally many old points; only $Q$ increases
the count:
\[
 |C|=197579-552-2+552+2+1=197580.
\]
No duplicate can occur: two equal norm-4 vectors would have inner product
4, contradicting the verified bound 2 for their distinct indices.
\end{proof}

\section{Reuse geometry and two-layer Leech lifting}\label{sec:leech}

The endpoint and four-triangle constructions below isolate the geometry
of direction reuse: the latitude, the tail partition, and the capacity
of a prescribed label system. They give the auxiliary configurations
of sizes 197058 and 200540. Section~\ref{sec:leech-two-layer} then
uses a second layer to obtain the current 201567-point construction.

We use the integer Leech shell $\Sigma\subset\Z^{24}$ in the Golay
coordinate convention of Appendix~\ref{app:leech}. Its relevant
properties are
\begin{equation}\label{eq:leech-shell}
 |\Sigma|=196560,\qquad r\cdot r=32,\qquad
 r\cdot r'\le16\quad(r\ne r',\ r,r'\in\Sigma).
\end{equation}
Consequently $C=\Sigma/\sqrt{32}$ is a kissing configuration. A block
$S\subset\Sigma$ will be called \emph{compatible} if distinct rows
satisfy $r\cdot r'\le8$. After normalization its maximum inner product
is at most $1/4$.

\begin{lemma}[Leech block witnesses]\label{lem:leech-blocks}
There is a compatible set $S_*\subset\Sigma$ of size $496$ and there
are four pairwise disjoint compatible sets $S_0,S_1,S_2,S_3\subset\Sigma$,
each of size $496$.
\end{lemma}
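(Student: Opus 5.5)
The lemma is an existence statement about finite subsets of the explicit shell $\Sigma$, so the proof will be a certificate check: exhibit the sets and verify their compatibility in exact integer arithmetic. The sets come from the 496-point PackingStar blocks~\cite{packingstar,packingstar-data}. These are given as rows of $\Sigma$, possibly only after a fixed automorphism or a coordinate convention change into the Golay coordinates of Appendix~\ref{app:leech}. The first step is therefore to translate each block into that convention by an explicit signed permutation or other integer transformation, and then confirm that every row lies in $\Sigma$. Membership can be tested directly: each row $r$ is an integer vector with $r\cdot r=32$ satisfying the Leech congruence conditions (all coordinates of the same parity, the coordinate-sum condition modulo $8$, and the Golay-codeword conditions on the support of the coordinates congruent to $2$ modulo $4$, respectively the odd coordinates).

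Next I would verify compatibility. For each $S_j$, the $\binom{496}{2}=122760$ pairwise products are integers. Because every row lies in $\Sigma$, \eqref{eq:leech-shell} restricts each product to $\{-32,-16,-8,0,8,16\}$ together with the odd-multiple values; in fact, for vectors of the scaled Leech lattice the products $r\cdot r'$ are multiples of $4$. Compatibility therefore amounts to excluding the values $12$ and $16$. I would compute the full Gram matrix of each block in exact integers and check that its maximum off-diagonal entry is at most $8$. The same computation shows the rows are distinct, since a repeated row would produce an off-diagonal entry of $32$.

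For disjointness, I would sort the $4\cdot496=1984$ rows as integer tuples and check that no tuple repeats. The single set $S_*$ can simply be taken to be $S_0$, so the first assertion follows from the second. It is stated separately only because the endpoint construction uses one block, while the four-triangle construction needs four.

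The main obstacle is not the arithmetic but the structural input: producing four pairwise disjoint compatible $496$-sets at all. Compatibility means the normalized maximum product is $1/4$. By the spherical-code reasoning of Proposition~\ref{prop:simplex} (threshold $1/(r+1)$ with $r=3$), such sets have bounded size, so the disjoint copies must be found rather than derived. If the data supply only one block, I would generate the others by applying Leech automorphisms to $S_0$: sign changes on Golay codewords and permutations in $M_{24}$, which preserve $\Sigma$ and all inner products and therefore preserve compatibility automatically. I would then search over such images for three pairwise disjoint from $S_0$ and from one another. Each candidate image needs only the cheap disjointness test, so a random or greedy search over coset representatives should succeed. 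The resulting explicit rows become the finite certificate referred to in Section~\ref{sec:certificates}.
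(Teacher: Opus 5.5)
Your plan is essentially the paper's proof: an exact-integer finite check of PackingStar-derived arrays covering membership in $\Sigma$, squared norm $32$, distinct rows with maximum distinct-row product $8$, and a $1984$-row union for disjointness. Taking $S_*=S_0$ is a legitimate shortcut, since the paper merely supplies a fifth array.

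Two small corrections do not affect the check. First, products of rows of $\Sigma$ are all multiples of $8$, so $12$ never occurs and only $16$ must be excluded. Second, for odd-type rows the Golay condition applies to the positions with $x_i\equiv3\pmod 4$, not to the set of odd coordinates, which is all $24$ positions and would make the membership test vacuous.
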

\begin{proof}[Finite verification]
The five complete integer arrays are specified in
Table~\ref{tab:certificates}. They are taken from the
PackingStar coordinate data~\cite{packingstar-data}.
For each array, membership is checked against the three shell families
in Appendix~\ref{app:leech}. Each array has $496$ distinct rows, each
row has squared norm $32$, and the maximum of its $122760$ unordered
distinct-row inner products is eight. The latter four arrays have a
union of cardinality $1984$. These checks, implemented separately by
the two programs described in Section~\ref{sec:certificates}, prove
compatibility and disjointness.
\end{proof}

\subsection{The endpoint construction in dimension 25}
The normalized block $S_*/\sqrt{32}$ satisfies the $1/3$ condition of
Theorem~\ref{thm:block-lift}, since it satisfies the stronger bound $1/4$.
The resulting configuration is
\begin{align}
 X_{25}={}&\{(r/\sqrt{32},0):r\in\Sigma\setminus S_*\}\nonumber\\
 &\quad\cup\{(\sqrt3\,r/\sqrt{128},\eps/2):
                   r\in S_*,\ \eps\in\{\pm1\}\}
       \cup\{(0,1),(0,-1)\}.\label{eq:d25}
\end{align}
Its cardinality is
\[
 196064+992+2=197058,
\]
giving the historical bound $K(25)\ge197058$.
The endpoint choice permits the two poles without requiring a larger
Leech block. Distinct same-latitude directions have inner product at
most $7/16$, whereas the opposite-latitude copies of one direction
have inner product exactly $1/2$. This distinction will also determine
the local geometry of $X_{25}$ in Section~\ref{sec:deformations}.

\subsection{The tail geometry in dimension 27}
Let
\begin{equation}\label{eq:D3}
 V=\{\pm e_i\pm e_j:1\le i<j\le3\},\qquad
 R=\begin{pmatrix}
  1/\sqrt2&-1/\sqrt2&0\\
  1/\sqrt2&1/\sqrt2&0\\
  0&0&1
 \end{pmatrix}.
\end{equation}
The twelve roots in $V$ form a cuboctahedron. Put
\begin{equation}\label{eq:tailsets}
 T=V/\sqrt6,\qquad P=RV/\sqrt2.
\end{equation}
Thus $T$ supplies mixed tails of squared norm $1/3$, and $P$ supplies
twelve unit pure tails. The relative rotation is through $45$ degrees
in the first two coordinates.

\begin{lemma}\label{lem:tails}
Distinct members of $T$ have inner products in
$\{-1/3,-1/6,0,1/6\}$, and $P$ is a kissing configuration. Moreover,
\begin{equation}\label{eq:tail-cross}
 \max_{p\in P,\,t\in T}\ip{p}{t}
  =\frac{1+1/\sqrt2}{\sqrt{12}}<\frac12.
\end{equation}
The roots admit a partition into four zero-sum triples, each of which
has pairwise inner product $-1$.
\end{lemma}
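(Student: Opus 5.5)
The plan is a direct finite computation on the twelve roots $V=\{\pm e_i\pm e_j\}$ of $D_3$, split into three independent checks.

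\textbf{Products within $T$ and kissing property of $P$.} Two distinct roots $v,v'\in V$ have integer inner product in $\{-2,-1,0,1\}$: the value $2$ occurs only for $v=v'$, and $-2$ exactly when $v'=-v$. Dividing by $6$ gives the set $\{-1/3,-1/6,0,1/6\}$ for $T$. For $P$, the map $R$ is orthogonal, so $P$ has the same Gram matrix as $V/\sqrt2$. Its vectors are unit vectors, and distinct products are at most $1/2$. Hence $P$ is a kissing configuration; it is the cuboctahedral twelve-point configuration.

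\textbf{The cross bound \eqref{eq:tail-cross}.} For $p=Rv/\sqrt2$ and $t=v'/\sqrt6$ we have
\[
\ip pt=\frac{\ip{Rv}{v'}}{\sqrt{12}},
\]
so it suffices to show $\max_{v,v'\in V}\ip{Rv}{v'}=1+1/\sqrt2$. I split the roots of $V$ by type.

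\emph{Type $\pm e_1\pm e_2$.} $R$ rotates these in the first plane by $45^\circ$, giving $\pm\sqrt2e_1$ and $\pm\sqrt2e_2$.

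\emph{Type $\pm e_i\pm e_3$ with $i\in\{1,2\}$.} The image is $u\pm e_3$, where $u$ is a unit vector at angle $45^\circ$ to the axes, namely $(\pm e_1\pm e_2)/\sqrt2$.

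Next I pair each image with each $v'$:

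\emph{Images $\sqrt2e_j$.} The product with $v'$ is at most $\sqrt2<1+1/\sqrt2$.

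\emph{Images $u+\eps e_3$.} Against $v'=\pm e_1\pm e_2$ the product is at most $\sqrt2$. Against $v'=\pm e_i\pm e_3$ it is at most $1/\sqrt2+1$, and this value is attained.

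Finally, $1+1/\sqrt2<\sqrt3$ gives the strict inequality $<\tfrac12$ after dividing by $\sqrt{12}$. This case bookkeeping is the only step needing care: the maximum requires both the $e_3$ coordinate and the rotated in-plane part to align at once, and I must confirm no other pairing beats it.

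\textbf{Partition into zero-sum triples.} I exhibit the four triples explicitly:
\[
\{e_1+e_2,\,-e_1+e_3,\,-e_2-e_3\},\qquad
\{e_1-e_2,\,-e_1-e_3,\,e_2+e_3\},
\]
\[
\{-e_1+e_2,\,e_1-e_3,\,-e_2+e_3\},\qquad
\{-e_1-e_2,\,e_1+e_3,\,e_2-e_3\}.
\]
Each triple sums to zero, and together they use all twelve roots exactly once. Within each triple the pairwise products are $-1$, since three vectors of squared norm $2$ with zero sum satisfy $\ip ab=\tfrac12(\norm{c}^2-\norm a^2-\norm b^2)=-1$. After scaling, these triples give the regular-simplex tails of Proposition~\ref{prop:simplex} with $r=3$ and $h^2=1/3$.
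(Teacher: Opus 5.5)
Your proposal is correct and follows essentially the paper's proof. You use scaling plus orthogonality of $R$ for the first two claims, and you split the images $Rv$ into the shapes $(\sqrt2,0,0)$ and $(1,1/\sqrt2,1/\sqrt2)$, up to signs and order, for the cross bound. Your four triples are exactly the paper's rows in a different order, and you get the products $-1$ from the same norm identity. The one thing to add is a line justifying $1+1/\sqrt2<\sqrt3$, for example by squaring to $3/2+\sqrt2<3$; the paper closes the strict inequality in a similar way.
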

\begin{proof}
Every root has squared norm two. Distinct roots have inner products
$-2,-1,0,$ or $1$, giving the first assertions by scaling and
orthogonality of $R$. The absolute coordinates of $Rv$, up to order,
are either $(\sqrt2,0,0)$ or $(1,1/\sqrt2,1/\sqrt2)$. Its largest inner
product with a root is obtained by choosing two coordinates with the
appropriate signs. The maximum is $1+1/\sqrt2$, giving
\eqref{eq:tail-cross}. The strict inequality follows, for example, from
$18+12\sqrt2<36$ after squaring the equivalent inequality
$\sqrt6+2\sqrt3<6$.

An explicit partition is given by the rows of
\begin{equation}\label{eq:triangles}
\begin{array}{rrr}
 (-1,0,-1)&(1,-1,0)&(0,1,1)\\
 (0,-1,-1)&(-1,0,1)&(1,1,0)\\
 (-1,-1,0)&(0,1,-1)&(1,0,1)\\
 (-1,1,0)&(1,0,-1)&(0,-1,1).
\end{array}
\end{equation}
The roots in a row sum to zero. Their equal squared norms then force
all three mutual inner products to be $-1$.
\end{proof}

Write $T_b$ for row $b$ of~\eqref{eq:triangles}, divided by $\sqrt6$.
Thus $T$ is the disjoint union of $T_0,T_1,T_2,T_3$. A block assigned
to $T_b$ is used at three labels, while its original equatorial copy
is removed once. This is the multiplicity-three counterpart of the
antipodal lift.

\begin{proposition}[Four-triangle lift]\label{prop:triangles}
For pairwise disjoint compatible subsets $S_0,S_1,S_2,S_3$ of $\Sigma$,
the following three sets form a kissing configuration:
\begin{align}
 Z&=\{(r/\sqrt{32},0):r\in\Sigma\setminus\textstyle\bigcup_bS_b\},
 \nonumber\\
 M&=\bigcup_{b=0}^{3}\{(r/\sqrt{48},t):r\in S_b,\ t\in T_b\},
 \label{eq:d27}\\
 P_0&=\{(0,p):p\in P\}.\nonumber
\end{align}
Its cardinality is $196560+2\sum_b|S_b|+12$.
\end{proposition}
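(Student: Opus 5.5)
We apply the labelled lifting criterion, Theorem~\ref{lem:lift}, with $D=24$, $k=3$ and base configuration $C=\Sigma/\sqrt{32}$. The twelve labels are the tails $t\in T$, each of squared norm $h^2=1/3$. We set $a=\sqrt{2/3}$ and assign $A_t=S_b/\sqrt{32}$ whenever $t\in T_b$. Then $a\,r/\sqrt{32}=r/\sqrt{48}$, so $M$ is exactly the mixed part of~\eqref{eq:general-lift}. The pure part is $Y=P$, which is a kissing configuration by Lemma~\ref{lem:tails}. The count follows from~\eqref{eq:lift-count}. Here $U=\bigcup_b S_b$ is a disjoint union, so $|U|=\sum_b|S_b|$. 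Each $S_b$ is used at three labels, so $\sum_t|A_t|=3\sum_b|S_b|$. Hence the total is $196560-\sum_b|S_b|+3\sum_b|S_b|+12$.

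It remains to verify the two families of inequalities. Condition~\eqref{eq:lift-pure} is immediate from~\eqref{eq:tail-cross}, since $\max\ip pt<1/2$. For~\eqref{eq:lift-mixed} we need $\tfrac23\ip x{x'}+\ip t{t'}\le\tfrac12$ for distinct labelled pairs, and we split into cases.

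\emph{Same label, distinct directions.} Both directions lie in one $S_b$, so compatibility gives $\ip x{x'}\le 8/32=1/4$. The product is at most $\tfrac23\cdot\tfrac14+\tfrac13=\tfrac12$.

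\emph{Same direction, distinct labels.} Then $x\in S_b$ and both labels lie in $T_b$, by disjointness of the $S_b$. Lemma~\ref{lem:tails} gives $\ip t{t'}=-1/6$, so the product is $\tfrac23-\tfrac16=\tfrac12$.

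\emph{Distinct directions, distinct labels.} Here $\ip x{x'}\le1/2$, and the product is at most $\tfrac13+\ip t{t'}$. This is at most $\tfrac12$ whenever $\ip t{t'}\le1/6$, and by Lemma~\ref{lem:tails} every distinct pair in $T$ satisfies this.

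The only delicate point is the bookkeeping in the second case. One must check that a reused direction only ever appears at labels within a single triangle, so that the sharp value $-1/6$ is available rather than merely $\le1/6$. This is exactly what disjointness of the $S_b$ provides. The third case also quietly needs the maximum tail product $1/6$ to be attained only with the slack $\ip x{x'}\le 1/2$, which it is. With all conditions verified, Theorem~\ref{lem:lift} yields the claim.
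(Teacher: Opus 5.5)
Your proof is correct and matches the paper's argument. Both apply Theorem~\ref{lem:lift} with $h^2=1/3$ and $a=\sqrt{2/3}$, split the mixed pairs into the same three cases, use~\eqref{eq:tail-cross} for the pure tails, and count via~\eqref{eq:lift-count}; the only cosmetic difference is that the paper multiplies the mixed condition by $48$ and checks the integer sums $8+16$, $32-8$, $16+8$, whereas you work with the normalized products $\tfrac16+\tfrac13$, $\tfrac23-\tfrac16$, $\tfrac13+\tfrac16$.
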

\begin{proof}
Use Theorem~\ref{lem:lift} with $h^2=1/3$ and $a=\sqrt{2/3}$.
The mixed vectors have norm one since $32/48+1/3=1$.
For mixed points, multiply their inner-product condition by $48$:
\begin{equation}\label{eq:raw-mixed}
 r\cdot r'+48\ip{t}{t'}\le24.
\end{equation}
If $t=t'$, the raw vectors are distinct and lie in one compatible
block, so the left side is at most $8+16=24$. If $r=r'$ and $t\ne t'$,
disjointness of the blocks places both tails in one triangle, giving
$32-8=24$. If both directions and labels differ, the universal shell
bound and Lemma~\ref{lem:tails} give $16+8=24$ as an upper bound.
These three cases cover every distinct mixed pair. Pure--mixed
compatibility follows from~\eqref{eq:tail-cross}; all remaining pairs
are covered by Theorem~\ref{lem:lift}.

For $B=\sum_b|S_b|$, the layer sizes are $196560-B$, $3B$, and $12$.
They are disjoint, and their parametrizations are injective, as in
\eqref{eq:lift-count}. Their sum is the asserted count.
\end{proof}

Applying Proposition~\ref{prop:triangles} to Lemma~\ref{lem:leech-blocks}
gives
\[
 |Z|=194576,\qquad |M|=5952,\qquad |P_0|=12,
\]
and hence $|X_{27}|=200540$. The comparison allocation uses block
multiplicities $(3,3,2,2,2)$, whereas~\eqref{eq:triangles} uses
$(3,3,3,3)$. Both occupy twelve labels. The latter allocation uses
$496$ fewer original directions and therefore preserves $496$ more
equatorial points.

\subsection{Capacity and its equality case}
The preceding count has an exact interpretation even when the label
sets are not arranged in disjoint blocks. Let $A_i\subseteq\Sigma$
be the rows assigned to tail $t_i\in T$. Assume that the resulting
lifting is valid, retains $\Sigma\setminus\bigcup_iA_i$ at the equator,
and contains all twelve pure tails.

\begin{proposition}\label{prop:capacity}
If $|A_i|\le\alpha$ for each of the twelve labels, then
\begin{equation}\label{eq:capacity}
 |X|\le196560+8\alpha+12.
\end{equation}
Equality holds precisely when every used row occurs at three labels
and every label contains $\alpha$ rows. For $\alpha=496$, the
four-triangle construction attains equality.
\end{proposition}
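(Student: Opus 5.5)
The plan is to apply the weighted incidence bound, Proposition~\ref{prop:weighted}, with $C=\Sigma/\sqrt{32}$ (so $|C|=196560$) and $Y=P$ (so $|Y|=12$). The tails are the twelve members of $T$, each of squared norm $h^2=1/3$, so every $a_i=\sqrt{2/3}$.

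The first step is to identify the reuse graph $\Gamma$ of~\eqref{eq:reuse-graph}. An edge requires $2/3+\ip{t_i}{t_j}\le1/2$, that is $\ip{t_i}{t_j}\le-1/6$. By Lemma~\ref{lem:tails}, distinct tails have products in $\{-1/3,-1/6,0,1/6\}$. Scaling back to roots, the edges are root pairs with inner product $-1$ or $-2$.

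Next we bound the clique number. Proposition~\ref{prop:multiplicity} gives $r=\min\{k+1,\lfloor 1/(1-2/3)\rfloor\}=\min\{4,3\}=3$, so $\omega(\Gamma)\le3$. The triangles of~\eqref{eq:triangles} show $\omega=3$. With the uniform choice $\lambda_i=2/3$, the bound~\eqref{eq:clique-capacity} with $\alpha_i=\alpha$ gives $|X|\le196560+12+\tfrac23\cdot12\alpha=196560+8\alpha+12$, which is~\eqref{eq:capacity}.

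For the equality case we use the equality clause of Proposition~\ref{prop:weighted}. A clique $J$ is tight exactly when $\tfrac23|J|=|J|-1$, i.e.\ $|J|=3$. Every $\lambda_i=2/3$ is positive, so every label must reach $|A_i|=\alpha$. Hence equality holds precisely when each used row has a label clique of size three and each label holds $\alpha$ rows. Conversely, these two conditions make both summed deficits vanish, so equality holds.

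For attainment, Proposition~\ref{prop:triangles} with the four blocks of Lemma~\ref{lem:leech-blocks} uses each row at exactly three labels, namely its triangle. Each label holds $496$ rows, and the count is $196560+2\cdot1984+12=196560+8\cdot496+12$.

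The only delicate point is checking that the clique conditions~\eqref{eq:clique-cover} hold for all cliques, including single vertices and edges. This is routine: $2/3\ge0$ and $4/3\ge1$. The equality characterization then depends only on knowing that cliques have size at most three, which Proposition~\ref{prop:multiplicity} already provides. No direct enumeration of the cliques of $\Gamma$ on the cuboctahedron is needed.
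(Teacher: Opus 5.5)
Your proof is correct and is essentially the paper's argument. The paper uses Proposition~\ref{prop:multiplicity} to get multiplicity at most three and then writes $\sum_r(m(r)-1)\le\frac23\sum_r m(r)=\frac23\sum_i|A_i|\le8\alpha$, with equality exactly when all $m(r)=3$ and all labels are full; this is your uniform choice $\lambda_i=2/3$ in Proposition~\ref{prop:weighted} and its equality clause written out directly.
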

\begin{proof}
Proposition~\ref{prop:multiplicity} gives maximum multiplicity three.
Writing $m(r)$ for the multiplicity of a used row,
\[
 |X|-196560-12
 =\sum_r(m(r)-1)
 \le\frac23\sum_rm(r)
 =\frac23\sum_i|A_i|\le8\alpha.
\]
The first inequality is an equality exactly when all $m(r)=3$;
the last is an equality exactly when all labels are full. The explicit
assignment has both properties.
\end{proof}

More precisely, let $n_j$ count the used rows with multiplicity $j$,
and let $c_i=\alpha-|A_i|$. The deficit is
\begin{equation}\label{eq:deficit}
 196560+8\alpha+12-|X|
 =\frac23\sum_i c_i+\frac23n_1+\frac13n_2.
\end{equation}
For the comparison allocation, $n_2=3\cdot496$ and all $c_i=0$, so
the deficit is exactly $496$. The capacity assumption in
Proposition~\ref{prop:capacity} concerns the prescribed label system;
the block witnesses supply $\alpha=496$ without requiring a bound on
the largest possible compatible subset of the Leech shell.

\subsection{A two-layer configuration with 201567 points}
\label{sec:leech-two-layer}
The larger construction retains the first-layer heads of the public
201566-point contribution~\cite{lindow27} and replaces its second
layer by a 311-owner witness. We give the pair conditions explicitly.
For this subsection use squared norm 32 and threshold 16 throughout.
Write $\Sigma_{27}$ for the Leech shell generated in the Golay
coordinate basis of the two-layer data. All heads and owners below
use that same basis. The shell argument in Appendix~\ref{app:leech}
applies to its verified Golay generator.

The first-layer data are 2258 distinct integer heads $Y$ of squared
norm 192, each assigned to one of four triples. In the order of the
stored labels, these triples of roots are
\[
\begin{array}{c|rrr}
0&(1,1,0)&(-1,0,-1)&(0,-1,1)\\
1&(1,-1,0)&(-1,0,1)&(0,1,-1)\\
2&(-1,-1,0)&(1,0,-1)&(0,1,1)\\
3&(-1,1,0)&(1,0,1)&(0,-1,-1).
\end{array}
\]
For a head $y$ in triple $b$, take the three points
\begin{equation}\label{eq:d27-first}
 (y/3,\,4v/\sqrt3)\qquad(v\text{ in triple }b).
\end{equation}
The pure-tail points are $(0,4Rv)$ for all twelve roots
$v\in V$, with $R$ as in~\eqref{eq:D3}.

A first-layer owner is a shell vector $z\in\Sigma_{27}$ with $y\cdot z>48$.
The finite data have 2090 heads with exactly one owner and 168
heads with none; all 2090 owners are distinct. The former are
of the form $y=3u+v$ with $u\in\Sigma_{27}$, $v^2=48$, and
$u\cdot v=-24$. The latter are twice norm-48 vectors. These
descriptions explain the candidate families; verification recovers
the owners by checking the whole shell.
For distinct first-layer heads, the stored data satisfy
\begin{equation}\label{eq:d27-first-threshold}
 y\cdot y'\le
 \begin{cases}
 48,&\text{on the same triple},\\
 96,&\text{on different triples}.
 \end{cases}
\end{equation}
Retaining all nonowners at the equator gives
$196560-2090+3(2258)+12=201256$ points.

The additional data are 311 distinct owners $u\in\Sigma_{27}$, with a
label $\ell(u)\in\{1,2,3\}$. None is a first-layer owner. Replace
the equatorial point $(u,0)$ by
\begin{equation}\label{eq:d27-second}
 (\sqrt3\,u/2,\,\eps\sqrt8\,e_{\ell(u)}),
 \qquad \eps\in\{-1,1\}.
\end{equation}
The three labels have respectively 105, 103, and 103 owners.
The checked inequalities are
\begin{align}
 y\cdot u&\le32&&\text{for every first-layer head }y,\label{eq:d27-interlayer}\\
 u\cdot u'&\le8&&\text{if }\ell(u)=\ell(u'),\quad u\ne u',\label{eq:d27-sameline}\\
 u\cdot u'&\le16&&\text{if }\ell(u)\ne\ell(u').\label{eq:d27-crossline}
\end{align}

\begin{theorem}\label{thm:d27}
The two layers just specified, their retained equator, and their
twelve pure-tail points form a kissing configuration of 201567
points in dimension 27.
\end{theorem}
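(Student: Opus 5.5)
The plan is to check the kissing inequality directly for every type of pair, working throughout at squared norm $32$ with product threshold $16$ in $\R^{24}\oplus\R^3$. The point set has four parts. The \emph{equator} $E$ consists of the points $(z,0)$ with $z\in\Sigma_{27}$, excluding the 2090 first-layer owners and the 311 second-layer owners. The \emph{first layer} $F$ is given by~\eqref{eq:d27-first}, the \emph{second layer} $G$ by~\eqref{eq:d27-second}, and the \emph{pure tails} are $P'=\{(0,4Rv):v\in V\}$.

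\textbf{Norms.}
\begin{itemize}
\item For $F$: $192/9+(16/3)\cdot 2=32$.
\item For $G$: $(3/4)\cdot 32+8=32$.
\item For $P'$: $16\cdot 2=32$, since $R$ is orthogonal.
\end{itemize}

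\textbf{Cardinality.} The four parts are distinguished by their horizontal norms $32,\,64/3,\,24,\,0$. Within each part, points are distinguished by heads or owners together with their tails. The count is then
\[
196560-2090-311+3\cdot2258+2\cdot311+12=201567 .
\]
Distinctness also follows once all products are shown to be at most $16<32$.

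\textbf{Pairs involving the equator.}
\begin{itemize}
\item Equator--equator pairs inherit~\eqref{eq:leech-shell}.
\item Equator--$P'$ pairs are orthogonal.
\item Equator--$G$ pairs have product $(\sqrt3/2)\,z\cdot u\le 8\sqrt3<16$.
\item Equator--$F$ pairs have product $y\cdot z/3$. This is at most $16$ exactly when $y\cdot z\le48$, i.e.\ when $z$ is not an owner of $y$. This holds because every owner of every head has been removed.
\end{itemize}

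\textbf{$F$--$F$ pairs.} The product is $y\cdot y'/9+(16/3)\,v\cdot v'$.
\begin{itemize}
\item Same head, different roots in its triple: $v\cdot v'=-1$, giving $64/3-16/3=16$.
\item Different heads on the same triple: $v\cdot v'\le2$, so~\eqref{eq:d27-first-threshold} gives at most $16/3+32/3=16$.
\item Different triples: the triples partition $V$, so the roots are distinct and $v\cdot v'\le1$. Then~\eqref{eq:d27-first-threshold} gives at most $32/3+16/3=16$.
\end{itemize}

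\textbf{$F$--$P'$ pairs.} The product is $(16/\sqrt3)\,v\cdot Rv'$. By Lemma~\ref{lem:tails} after rescaling, $v\cdot Rv'\le 1+1/\sqrt2$, so the product is at most $16(1+1/\sqrt2)/\sqrt3<16$.

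\textbf{$F$--$G$ pairs.} Using~\eqref{eq:d27-interlayer} and $|v_{\ell}|\le1$, the product is at most
\[
\frac{\sqrt3}{6}\cdot32+\frac{4\sqrt8}{\sqrt3}=\frac{16+8\sqrt2}{\sqrt3}.
\]
This is below $16$ because $16+8\sqrt2<16\sqrt3$; squaring reduces that to $384+256\sqrt2<768$.

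\textbf{$G$--$G$ pairs.}
\begin{itemize}
\item Same owner, opposite signs: $24-8=16$.
\item Distinct owners with equal labels: by~\eqref{eq:d27-sameline}, at most $6+8=14$.
\item Different labels: the tails are orthogonal, so by~\eqref{eq:d27-crossline} the product is at most $12$.
\end{itemize}

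\textbf{$G$--$P'$ pairs.} The product is $4\sqrt8\,\eps\,(Rv)_{\ell}$. Every coordinate of $Rv$ has absolute value at most $\sqrt2$, as in the proof of Lemma~\ref{lem:tails}, so the product is at most $16$.

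\textbf{Pure tails.} $P'$ is a scaled kissing configuration by Lemma~\ref{lem:tails}.

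\textbf{Main obstacle.} The geometric inequalities above are routine. The real work is certifying the finite data over the whole shell:
\begin{itemize}
\item that $\Sigma_{27}$ is the correct minimal shell of the verified Golay generator;
\item that every shell vector $z$ with $y\cdot z>48$ for some head $y$ is among the removed 2090, which requires scanning all $196560$ vectors for each of the $2258$ heads;
\item that the 311 second-layer owners are distinct and disjoint from the first-layer owners;
\item that the integer inequalities~\eqref{eq:d27-first-threshold}--\eqref{eq:d27-crossline} hold.
\end{itemize}
I would discharge all of these by exact integer computation, using the two independent checkers of Section~\ref{sec:certificates}.
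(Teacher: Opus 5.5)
Your proposal is correct and follows the paper's own argument: a direct check of every pair type at squared norm $32$, using \eqref{eq:d27-first-threshold}--\eqref{eq:d27-crossline} for the layer interactions, Lemma~\ref{lem:tails} for the first-layer--pure-tail products, and the same horizontal-norm separation and count $196560-2090-311+3(2258)+2(311)+12$. The only difference is cosmetic: you bound the second-layer--pure-tail products through the coordinates of $Rv$, while the paper applies Cauchy--Schwarz directly; both give $16$.
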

\begin{proof}
All points have squared norm 32. An old equatorial point and a
first-layer cap have product at most $48/3=16$. The two
thresholds in~\eqref{eq:d27-first-threshold} give respectively
$48/9+32/3=16$ and $96/9+16/3=16$.
Two different caps of the same head have product
$192/9-16/3=16$.
The rotation estimate of Lemma~\ref{lem:tails}, scaled by 32,
checks the first layer against all pure tails.

A second-layer cap has product at most
$(\sqrt3/2)16=8\sqrt3<16$ with every retained equatorial
point. Its own owner is deleted. Against a first-layer cap, the
head product and tail product sum to at most
\[
 \frac{\sqrt3}{6}(32)+\frac{4\sqrt8}{\sqrt3}
 =\frac{16\sqrt3+8\sqrt6}{3}<16.
\]
The last inequality is equivalent to
$2\sqrt3+\sqrt6<6$ and follows by squaring positive quantities.
Second-layer caps on distinct owners of the same label have
product at most $(3/4)8+8=14$; distinct labels give at most
$(3/4)16=12$. The two caps of one owner have product
$(3/4)32-8=16$. A second-layer cap and a pure tail have
product at most $\sqrt8\sqrt{32}=16$.
All old--old pairs retain their original bounds.

The head squared norms in the equator, first layer, second layer,
and pure-tail layer are respectively $32$, $64/3$, $24$, and zero,
so different layers cannot coincide. Within a layer, distinct
heads and tail labels give distinct points. Each of the 311
new owners is removed once and inserted twice. Hence
\[
 196560-2090-311+3(2258)+2(311)+12=201567.
\]
Dividing all points by $\sqrt{32}$ proves the unit-sphere claim.
\end{proof}

The finite optimization behind the second layer has a particularly
small geometric description. Keep only shell vectors that are
not first-layer owners and satisfy~\eqref{eq:d27-interlayer}.
Connect two such vectors when their product exceeds eight.
One seeks a largest subset that can be split into three independent
sets of this graph. Assign each independent set to one coordinate
line. Distinct-label products are already controlled by the
Leech-shell bound. The published 310-owner witness and the new
311-owner witness are feasible assignments in this same model;
each extra owner contributes one net point.

\section{Layered binary codes and joint support exchanges}\label{sec:codes}

We now keep the layer geometry fixed and improve its binary ingredients,
using replacements whose cost is the union of their old conflicts.

\subsection{Compatible replacements and shared deletion costs}
Let $X\subset S^{d-1}$ be a kissing configuration and let
$C\subset S^{d-1}\setminus X$ be a finite candidate set. For $c\in C$,
put $F(c)=\{x\in X:\ip{c}{x}>1/2\}$, and join two candidates in a
graph when their inner product exceeds $1/2$. For $I\subseteq C$,
write $F(I)=\bigcup_{c\in I}F(c)$.

\begin{proposition}[Shared-deletion replacement]\label{prop:exchange}
For $D\subseteq X$ and $I\subseteq C$, the set $(X\setminus D)\cup I$
is a kissing configuration if and only if $I$ is independent and
$F(I)\subseteq D$. In particular, for independent $I$, the largest
configuration with inserted set $I$ retains $X\setminus F(I)$ and
has net gain
\begin{equation}\label{eq:general-exchange}
 X_I=(X\setminus F(I))\cup I,\qquad
 |X_I|-|X|=|I|-|F(I)|.
\end{equation}
More generally, for a positive weight function $w$ on $X\cup C$,
the greatest weight gain with inserted set $I$ is
$w(I)-w(F(I))$, where $w(A)=\sum_{a\in A}w(a)$.
\end{proposition}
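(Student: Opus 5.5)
The plan is to verify the ``if and only if'' directly from the pairwise definition~\eqref{eq:kissing} and then obtain the gain formula by optimizing over $D$. Since $X\cup C\subset S^{d-1}$ and $C\cap X=\varnothing$, every point of $(X\setminus D)\cup I$ has unit norm, and the union is disjoint, so its size is $|X|-|D|+|I|$. The distinct pairs fall into three classes, and I will check them one at a time. Old--old pairs lie in $X$, which is already a kissing configuration, so they are automatic. New--new pairs are pairs of distinct candidates in $I$, and these satisfy the bound exactly when $I$ is independent in the conflict graph on $C$. New--old pairs $(c,x)$ with $c\in I$ and $x\in X\setminus D$ satisfy the bound exactly when $x\notin F(c)$. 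Imposing this for all $c\in I$ is the condition $F(I)\cap(X\setminus D)=\varnothing$, that is, $F(I)\subseteq D$.

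Both directions come from this case split. For necessity, a violated independence gives a bad new--new pair, and any $x\in F(I)\setminus D$ gives a bad new--old pair. Sufficiency is the conjunction of the three cases.

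For the extremal statement, fix an independent $I$. The feasible deletion sets are exactly the $D\supseteq F(I)$, so the retained set $X\setminus D$ is largest precisely when $D=F(I)$. This gives the configuration $X_I$ and the gain $|I|-|F(I)|$, using that $F(I)\subseteq X$ and $I\cap X=\varnothing$. The weighted version uses the same argument: the weight of $(X\setminus D)\cup I$ is $w(X)-w(D)+w(I)$ by disjointness. Since $w>0$, minimizing $w(D)$ over $D\supseteq F(I)$ forces $D=F(I)$, and any strictly larger $D$ loses positive weight.

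I expect no real obstacle. The only points needing care are the disjointness of $C$ and $X$, which the counts rely on, and recording that $F(c)$ uses a strict inequality, so that the complement of $F(c)$ is exactly the set of $x$ satisfying~\eqref{eq:kissing}.
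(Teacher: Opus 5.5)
Your proposal is correct and follows the paper's proof: the same split into old--old, new--new and old--new pairs gives the equivalence, and disjointness of $C$ and $X$, together with positivity of $w$, forces $D=F(I)$ for both the count and the weighted gain. You also note explicitly that the strict inequality defining $F(c)$ makes its complement exactly the compatible old points, which the paper leaves implicit.
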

\begin{proof}
Old--old pairs are compatible; new--new pairs require independence of
$I$; old--new pairs require $F(I)\subseteq D$. Disjointness of $C$ and
$X$ gives the count, and positive weights penalize additional deletions.
\end{proof}

The same count applies to disjoint bundles of points. Partition the
deletable old points into nonempty groups $X_j$, and let $Q_i$ be
pairwise disjoint, internally compatible candidate bundles outside $X$.
Require each $Q_i$ to be compatible with the old points held fixed.
Join two bundles when some cross pair is incompatible, and let $F(i)$
index the old groups containing a conflict with $Q_i$. Put
$a_i=|Q_i|$ and $b_j=|X_j|$. Among replacements that retain or delete
each old group as a whole, the exact optimization problem is
\begin{equation}\label{eq:bundle-exchange}
 \max\left(\sum_i a_i z_i-\sum_j b_j r_j\right),\qquad
 z_i+z_{i'}\le1\ (ii'\in E),\quad
 r_j\ge z_i\ (j\in F(i)),\quad z_i,r_j\in\{0,1\}.
\end{equation}
At an optimum only conflicting groups are removed, each charged once.
Single-point groups recover Proposition~\ref{prop:exchange}; internally
compatible pools admit the following exact matching certificate, an
application of the classical deficiency form of Hall's
theorem~\cite{hall1935} to the bipartite conflict graph.
\begin{proposition}[Optimal subexchange in a compatible pool]
\label{prop:exchange-deficiency}
Suppose the entire candidate set $C$ is internally compatible. Let
$H$ be the bipartite conflict graph on $C$ and $X$, with
$cx\in E(H)$ exactly when $x\in F(c)$, and let $\nu(H)$ be its
maximum matching size. Then
\begin{equation}\label{eq:exchange-deficiency}
 \max_{I\subseteq C}\bigl(|I|-|F(I)|\bigr)=|C|-\nu(H).
\end{equation}
An optimal $I$ is the set of candidate vertices reachable from the
unmatched candidates by alternating paths of a maximum matching.
In particular no positive subexchange exists if and only if a matching
assigns every candidate a distinct old conflict.
The same formula, multiplied by the common bundle size, applies to
internally compatible candidates and old groups of equal cardinality.
\end{proposition}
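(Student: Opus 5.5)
Since $C$ is internally compatible, every subset $I\subseteq C$ is independent. By Proposition~\ref{prop:exchange}, the quantity to maximize is therefore the \emph{surplus} $|I|-|N_H(I)|$, where $N_H(I)=F(I)$ is the neighbourhood of $I$ in the bipartite graph $H$. The identity~\eqref{eq:exchange-deficiency} is then the deficiency version of Hall's theorem: the maximum over $I$ of $|I|-|N(I)|$ equals $|C|-\nu(H)$. I would prove the two inequalities separately, using a maximum matching and its alternating-path structure.

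\emph{Upper bound.} Let $\mathcal M$ be a maximum matching and fix $I\subseteq C$.
\begin{itemize}
\item At least $|I|-|F(I)|$ vertices of $I$ are unmatched in $\mathcal M$, because matched vertices of $I$ are paired injectively with vertices of $F(I)$.
\item Hence $|C|-\nu(H)$, the number of unmatched candidates, is at least $|I|-|F(I)|$.
\end{itemize}
The inequality $\le$ is elementary; the empty set shows the maximum is at least $0$.

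\emph{Lower bound via alternating paths.} This is the main step, and I would follow the standard König-type argument. Let $U_0$ be the set of unmatched candidates. Let $I$ be the set of candidates reachable from $U_0$ by $\mathcal M$-alternating paths, meaning paths that leave $C$ along non-matching edges and return along matching edges. Let $Z\subseteq X$ be the old points reached. The required facts are:
\begin{itemize}
\item Every neighbour of a vertex in $I$ lies in $Z$, so $F(I)=Z$.
\item Every $z\in Z$ is matched. Otherwise the alternating path to $z$ would be augmenting, contradicting the maximality of $\mathcal M$.
\item The partner of each $z\in Z$ lies in $I$.
\item The matching restricted to $Z$ is a bijection onto $I\setminus U_0$.
\end{itemize}
Thus $|F(I)|=|Z|=|I|-|U_0|$, so $|I|-|F(I)|=|C|-\nu(H)$. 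This also proves the claimed description of an optimal $I$.

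\emph{The ``in particular'' clause.} No positive subexchange exists exactly when the maximum is $0$. That holds exactly when $\nu(H)=|C|$, that is, when a matching saturates $C$ by assigning each candidate a distinct old conflict.

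\emph{Bundle version.} Suppose all bundles $Q_i$ and all old groups $X_j$ have a common size $s$, and the bundles are pairwise compatible. Then the objective~\eqref{eq:bundle-exchange} becomes $s\bigl(|I|-|F(I)|\bigr)$ on the bipartite bundle–group conflict graph. Setting $r_j=1$ exactly for $j\in F(I)$ is optimal, by the remark following~\eqref{eq:bundle-exchange}. Applying the identity to that graph gives the multiplied formula.

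The only delicate point is checking that the alternating-reachability set is closed under neighbourhoods and that matched partners stay inside $I$. Both follow directly from the definition of reachability.
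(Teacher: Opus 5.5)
Your proposal is correct and follows the paper's proof essentially step for step: the upper bound comes from counting the matched vertices of $I$ against $F(I)$, and equality comes from the alternating-reachability set of a maximum matching, where augmenting paths rule out unmatched vertices of $F(I)$ and the matching pairs $F(I)$ bijectively with the matched part of $I$. Your explicit reduction of the equal-size bundle case to the same identity (vacuous edge constraints, $r_j=1$ exactly on $F(I)$, objective $s(|I|-|F(I)|)$) is correct and spells out what the paper only states.
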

\begin{proof}
A matching meets at most $|F(I)|$ candidates in $I$, so
$\nu(H)\le |F(I)|+|C\setminus I|$, giving the upper bound in
\eqref{eq:exchange-deficiency}.

From the unmatched candidates of a maximum matching, traverse unmatched
edges toward $X$ and matched edges toward $C$. For the reached sets
$I\subseteq C$ and $J\subseteq X$, every edge out of $I$ reaches $J$
or is the matched edge by which its endpoint in $I$ was reached;
hence $J=F(I)$. No vertex of $J$ is unmatched, since that would
give an augmenting path. Matching edges biject $J$ with the matched
vertices of $I$, leaving exactly $|C|-\nu(H)$ unmatched vertices in
$I$. This proves equality.
\end{proof}

Internal compatibility of the pool is essential: otherwise an
alternating-path set can contain conflicting new candidates, and the
matching deficiency is only an upper bound. The certificate concerns
the specified candidate pool, not all possible replacements.

\subsection{The three-layer construction}
Let $n\ge m\ge32$. Choose a binary code $T\subset\F_2^m$ with minimum
distance at least $\lceil m/4\rceil$, and a support family
$\mathcal B\subseteq\binom{[n]}8$ such that
\begin{equation}\label{eq:support}
 |B\cap B'|\le4\qquad(B\ne B',\ B,B'\in\mathcal B).
\end{equation}
For the standard basis $e_0,\ldots,e_{n-1}$ of $\R^n$, define
\begin{align}
 X_T&=\left\{\frac1{\sqrt m}\sum_{i=0}^{m-1}(-1)^{c_i}e_i:c\in T\right\},
 \label{eq:sign-layer}\\
 X_M&=\left\{\frac1{\sqrt8}\sum_{i\in B}\eps_i e_i:
       B\in\mathcal B,\ \eps_i\in\{\pm1\},\ \prod_{i\in B}\eps_i=1\right\},
 \label{eq:middle-layer}\\
 X_R&=\left\{\frac{\eps e_i+\eps'e_j}{\sqrt2}:
       0\le i<j<n,\ \eps,\eps'\in\{\pm1\}\right\}.
 \label{eq:root-layer}
\end{align}
The final layer consists of the normalized $D_n$ roots.

\begin{proposition}[Three-layer specialization of ERS~\cite{ers}]
\label{prop:ers}
The union $X_T\cup X_M\cup X_R$ is a kissing configuration with
\begin{equation}\label{eq:ers-count}
 |X|=|T|+128|\mathcal B|+2n(n-1).
\end{equation}
\end{proposition}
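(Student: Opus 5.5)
The proof is a direct case check on the six types of pairs, followed by a count. All vectors are unit by construction: a sign vector has $m$ entries $\pm1/\sqrt m$, a middle vector has eight entries $\pm1/\sqrt8$, and a root has two entries $\pm1/\sqrt2$. So it suffices to show $\ip xy\le\frac12$ for distinct $x,y$. I would organize everything by supports. Layer $X_T$ is supported on $[m]$, $X_M$ on the blocks $B\subseteq[n]$, and $X_R$ on pairs $\{i,j\}$.

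\emph{Within layers.} For $X_T$, two codewords $c\ne c'$ at Hamming distance $\delta\ge\lceil m/4\rceil$ give
\[
\ip xy=\frac{m-2\delta}{m}\le\frac12 .
\]
For $X_R$, two distinct roots share at most one coordinate up to sign, or are opposite, so $\ip xy\le\frac12$. For $X_M$ on the same block $B$, two distinct even sign patterns differ in an even number $\ge2$ of places, giving
\[
\ip xy\le\frac{8-4}{8}=\frac12 .
\]
For $X_M$ on different blocks, \eqref{eq:support} bounds the overlap by four, so $\ip xy\le\frac48=\frac12$.

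\emph{Across layers.} A root against a middle vector gives at most $\frac{2}{\sqrt{16}}=\frac12$. A root against a sign vector gives at most $\frac{2}{\sqrt{2m}}\le\frac14$, using $m\ge32$. A middle vector against a sign vector gives at most
\[
\frac{8}{\sqrt{8m}}=\sqrt{\frac8m}\le\frac12,
\]
where the hypothesis $m\ge32$ is exactly what makes this bound work. This last comparison is the only place the hypotheses are tight in a way needing care, and I expect it to be the main (though mild) obstacle. The middle-layer same-block parity argument is the other step worth stating carefully: since the product constraint forces an even number of sign flips, two distinct patterns cannot differ in only one place.

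\emph{Count.} The three layers are distinguished by support size, $m\ge32$ versus $8$ versus $2$, so they are disjoint. Each layer's parametrization is injective: codewords, pairs $(B,\text{even sign pattern})$, and signed pairs determine distinct vectors. This gives
\[
|X|=|T|+|\mathcal B|\cdot2^{7}+4\binom n2=|T|+128|\mathcal B|+2n(n-1),
\]
which is \eqref{eq:ers-count}.
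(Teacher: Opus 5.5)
Your proof is correct and follows the paper's argument step for step. It checks unit norms, bounds each layer internally (by Hamming distance, by even parity on a common support, and by overlap at most four), bounds the three cross-layer products by $\sqrt{8/m}$, $\sqrt{2/m}$ and $1/2$ using $m\ge32$, and counts the points by separating the layers through support size. One small slip in wording: distinct roots on the same support, such as $(e_i+e_j)/\sqrt2$ and $(e_i-e_j)/\sqrt2$, are neither opposite nor sharing only one coordinate, but their product is $0$, as the paper notes, so the bound is unaffected.
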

\begin{proof}
All points have norm one. Support sizes $m$, eight, and two distinguish
the layers, whose sizes are $|T|$, $2^7|\mathcal B|$, and
$4\binom n2$.

In the dense layer, two codewords at distance $d$ have inner product
$1-2d/m\le1/2$. On one middle support, two distinct even sign patterns
differ in at least two positions, giving at most $(8-4)/8=1/2$.
On different middle supports,~\eqref{eq:support} gives at most four
contributions of $1/8$. Two roots with different supports share at most
one coordinate, giving at most $1/2$; on the same support distinct
roots have inner product zero or $-1$.

For the three cross-layer cases, the numbers of overlapping coordinates
give respectively
\begin{equation}\label{eq:ers-cross}
 \ip{X_T}{X_M}\le\sqrt{8/m},\qquad
 \ip{X_T}{X_R}\le\sqrt{2/m},\qquad
 \ip{X_M}{X_R}\le\frac{2}{\sqrt8\sqrt2}=\frac12.
\end{equation}
Here each expression bounds all pairs from the indicated layers,
even when a middle support extends beyond the first $m$ coordinates.
Since $m\ge32$, all bounds are at most $1/2$.
\end{proof}

The inputs are uncoupled: one additional support contributes $128$
points, and each additional sign word contributes one.

\subsection{The support families}
For eight-element supports the identity
\[
 d_H(B,B')=16-2|B\cap B'|
\]
identifies~\eqref{eq:support} with minimum distance eight. The following
finite objects supply the middle layers.

\begin{theorem}\label{thm:supports}
The following constant-weight-code lower bounds hold:
\[
\begin{array}{c|rrrrr}
 n&33&34&37&38&39\\\hline
 A(n,8,8)\ge&1803&1960&2843&3077&3383.
\end{array}
\]
\end{theorem}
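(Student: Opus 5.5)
This is a finite existence statement. The plan is to exhibit, for each $n\in\{33,34,37,38,39\}$, an explicit family $\mathcal B_n\subseteq\binom{[n]}8$ of the stated size and to certify~\eqref{eq:support} for it. By the identity $d_H(B,B')=16-2|B\cap B'|$, pairwise intersections at most four are the same as minimum distance eight. The certificate is therefore a finite check on each family:
\begin{itemize}
\item every row has exactly eight distinct coordinates in $[n]$;
\item the rows are pairwise distinct;
\item every pair of distinct rows meets in at most four coordinates.
\end{itemize}
The total work is at most $\binom{3383}{2}\approx5.7\cdot10^6$ intersection computations, which is trivial in exact integer arithmetic (bitmask popcounts). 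As in Lemma~\ref{lem:leech-blocks}, I would run the check with two independently written programs.

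To produce the families, I would use the shared-deletion exchange of Proposition~\ref{prop:exchange}. Start from a seed family: the Echols code~\cite{echols} for $n=32$, or the best tabulated code~\cite{brouwer} at each $n$. Extend it to length $n$ by first keeping the old supports, which remain valid, and then inserting supports that use the new coordinates.

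Here the old points are supports and the candidates are the remaining $8$-subsets. The conflict set $F(c)$ consists of the old supports meeting $c$ in five or more coordinates. A replacement $I$ gains $|I|-|F(I)|$. The search should look for independent $I$ with a small union of conflicts, since sharing deletion costs is what exposes gains that single-insertion scores miss. For an internally compatible pool, Proposition~\ref{prop:exchange-deficiency} gives the exact optimum via a maximum matching, and this identifies which subexchanges to apply.

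I would iterate exchanges (insert $I$, delete $F(I)$) until the sizes $1803,1960,2843,3077,3383$ are reached. Each final family is recorded as a certificate file. The proof of the theorem itself is only the verification step above, and it is independent of how the families were found.

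The main obstacle is the search, not the proof: producing families that exceed the tabulated bounds. Once the families exist, the verification is routine. I would first try seeding from structured designs (for example cyclic or group-invariant orbits of $8$-sets) and then apply the matching-guided exchanges to escape local maxima.
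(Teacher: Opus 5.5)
Your proposal is correct and follows the paper's proof: the paper also verifies supplied complete lists $\mathcal B_n$ (eight-bit masks in $[0,2^n)$, no repeats, all pairwise intersections at most four, as in Table~\ref{tab:support-histograms}), and it also obtains the families from parents by the shared-deletion exchanges you describe. The only addition in the paper is an independent second check: it lists the $\binom85=56$ five-subsets of every support and confirms that all $56|\mathcal B_n|$ keys are distinct, since a repeated key is exactly an intersection of size at least five.
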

\begin{proof}[Finite verification]
In each complete binary-mask list $\mathcal B_n$ of
Table~\ref{tab:certificates}, every mask lies in $[0,2^n)$, has eight
set bits, and occurs once. Pairwise checks give the intersection
distributions in Table~\ref{tab:support-histograms}, all at most four.
Independently, enumerating the five-subsets gives $56|\mathcal B_n|$
distinct keys: a repeated key is equivalent to an intersection of
size at least five. These checks establish the codes.
\end{proof}

\begin{table}[ht]
\centering\small
\caption{Support cardinalities and numbers $h_j$ of unordered pairs
with intersection size $j$. All other intersection counts are zero.}
\label{tab:support-histograms}
\begin{tabular}{rrrrrrr}
\toprule
$n$ & $|\mathcal B_n|$ & $h_0$ & $h_1$ & $h_2$ & $h_3$ & $h_4$\\
\midrule
33 & 1803 & 134282 & 389920 & 680407 & 234225 & 185669\\
34 & 1960 & 168266 & 488949 & 783677 & 281186 & 197742\\
37 & 2843 & 443987 & 1255368 & 1461343 & 621750 & 257455\\
38 & 3077 & 555513 & 1521825 & 1676186 & 706467 & 272435\\
39 & 3383 & 718023 & 1896522 & 1985532 & 820689 & 299887\\
\bottomrule
\end{tabular}

\end{table}

The final support families arise from exchanges in the constant-weight
space. For an incumbent family $\mathcal B$ and a new support $c$, let
$F(c)=\{b\in\mathcal B:|b\cap c|>4\}$. If
$Y\subseteq\binom{[n]}8\setminus\mathcal B$ is internally compatible,
then
\begin{equation}\label{eq:exchange}
 \mathcal B'=(\mathcal B\setminus F(Y))\cup Y,
 \qquad F(Y)=\bigcup_{c\in Y}F(c),
 \qquad |\mathcal B'|-|\mathcal B|=|Y|-|F(Y)|.
\end{equation}
This specializes~\eqref{eq:bundle-exchange} to 128-point bundles;
shared deletions can make a joint exchange profitable when no single
insertion is.

The final transitions are
\[
\begin{array}{c|rrrrr}
n&33&34&37&38&39\\\hline
\text{parent size}&1802&1959&2840&3076&3382\\
|F(Y)|&9&6&37&6&14\\
|Y|&10&7&40&7&15\\
\text{final size}&1803&1960&2843&3077&3383.
\end{array}
\]
The certificates include complete parents, insertions and deletions.
The checker recomputes $F(Y)$ and reconstructs each final family by
\eqref{eq:exchange}; the complete witnesses independently prove the theorem.

The conflict graphs of these five candidate pools have maximum matching
sizes $9,6,37,6,14$, respectively: in each case all removed supports
can be matched to distinct inserted supports. By
Proposition~\ref{prop:exchange-deficiency}, the best subexchange using
only the supplied inserted supports therefore has gain $1,1,3,1,1$.
The displayed exchanges attain these values within the supplied pools,
not necessarily among all replacements.

\subsection{Dense layers from coset unions}
Let $H\subset\F_2^m$ be linear and let $r_0,\ldots,r_{q-1}$ be coset
representatives. Define
\begin{equation}\label{eq:coset-union}
 T=\bigcup_{a=0}^{q-1}(r_a+H),\qquad
 \delta_{ab}=\min_{h\in H}\wt(r_a+r_b+h)\quad(a\ne b).
\end{equation}
Within a coset, distances are the nonzero weights of $H$; between
cosets, the difference set is $r_a+r_b+H$. Thus the $\delta_{ab}$
are their minimum distances, and positivity makes the cosets disjoint,
giving $|T|=q2^{\dim H}$.

\begin{proposition}\label{prop:signs}
The generators and representatives in Appendix~\ref{app:codes} define
the sign codes in Table~\ref{tab:signs}.
\end{proposition}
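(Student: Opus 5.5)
The proposition is a finite certification statement. The plan is to reduce it to the coset-union description~\eqref{eq:coset-union} and then verify finitely many exact conditions for each row of Table~\ref{tab:signs}. For each length $m$, take the generator matrix of $H$ and the representatives $r_0,\ldots,r_{q-1}$ from Appendix~\ref{app:codes}. The claims to establish are the cardinality $|T|=q2^{\dim H}$ and the minimum distance $d(T)\ge\lceil m/4\rceil$. The latter is what Proposition~\ref{prop:ers} needs, since the dense-layer inner products $1-2d/m$ must not exceed $1/2$.

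The steps are as follows.
\begin{enumerate}
\item Verify that the stated generators are linearly independent over $\F_2$, by row reduction. This makes $|H|=2^{\dim H}$.
\item Bound the within-coset distances. The differences of two words of $r_a+H$ are exactly the nonzero codewords of $H$. It therefore suffices to compute the minimum weight of $H$, either by enumerating all $2^{\dim H}$ codewords or, where the dimension is large, through the weight enumerator. Check that this minimum is at least $\lceil m/4\rceil$.
\item Bound the cross-coset distances. For each unordered pair $a\ne b$, compute
\[
\delta_{ab}=\min_{h\in H}\wt(r_a+r_b+h),
\]
which is the minimum weight of the coset $r_a+r_b+H$. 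Check that $\delta_{ab}\ge\lceil m/4\rceil$. Because $\lceil m/4\rceil>0$, this positivity also shows that the cosets are pairwise disjoint, which gives the count $q2^{\dim H}$ as noted after~\eqref{eq:coset-union}.
\item Since every distance in $T$ is either a within-coset distance or a cross-coset distance, steps 2 and 3 together give the required minimum distance. The sign layers then have the sizes recorded in Table~\ref{tab:signs}.
\end{enumerate}

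The main obstacle is step 3 when both $\dim H$ and $q$ are large. There are $\binom q2$ coset minima, and each is a minimum over $2^{\dim H}$ words. I would first reduce the work by grouping pairs according to the difference $r_a+r_b$ modulo $H$. Many pairs share the same coset of differences, so the minimum weight only needs to be computed once for each distinct coset of the form $r_a+r_b+H$.

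For each of these distinct cosets I would compute the minimum weight exactly. One way is a Gray-code enumeration, which updates the weight incrementally as codewords change. Another is a coset-leader table obtained by syndrome decoding with respect to a parity-check matrix of $H$; the minimum weight of a coset is then the weight of its leader. In both cases exact integer arithmetic makes the check independently reproducible, matching the certificate conventions of Section~\ref{sec:certificates}. As a consistency check, I would also compare the resulting distance distribution of $T$ with its MacWilliams transform to confirm that the computations are mutually consistent.
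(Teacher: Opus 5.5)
Your proposal is correct and follows the paper's own verification. The paper likewise gets the ranks by binary elimination and finds the minimum weights of $H$ (eight, twelve, twelve) by enumerating the span. It then enumerates each $\delta_{ab}$ directly, finding all equal to $10$ for the $55$ and $45$ pairs, and uses the difference-set argument for disjointness, the sizes $q2^{\dim H}$ and the distances. Because you compute these minima exactly rather than only checking $\ge\lceil m/4\rceil$, you also recover the exact $d(H)$ and $d(T)$ recorded in Table~\ref{tab:signs}; the enumeration is at most $45\cdot2^{15}$ words per length, so your grouping, Gray-code and syndrome-table refinements are optional.
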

\begin{table}[ht]
\centering
\caption{Parameters of the dense sign codes.}
\label{tab:signs}
\begin{tabular}{rrrrrrl}
\toprule
$m$&$\dim H$&$d(H)$&$q$&$d(T)$&$|T|$&Used in dimension\\
\midrule
32&17&8&1&8&131072&32--37\\
38&14&12&11&10&180224&38\\
39&15&12&10&10&327680&39\\
\bottomrule
\end{tabular}
\end{table}
\begin{proof}[Finite verification]
Binary elimination gives the three displayed ranks. Enumeration of
the spans gives the weight distributions in Appendix~\ref{app:codes},
whose least positive weights are eight, twelve, and twelve. The
length-$32$ code uses the sole representative zero. For lengths $38$
and $39$, enumeration of~\eqref{eq:coset-union} gives
$\delta_{ab}=10$ for all $55$ and $45$ distinct representative pairs,
respectively. The preceding difference-set argument proves the sizes
and minimum distances. Finally, each distance is at least
$\lceil m/4\rceil$, as required by Proposition~\ref{prop:ers}.
\end{proof}

The length-$32$ code is the Cheng--Sloane
$[32,17,8]$ code~\cite{chengsloane}, in the generator convention of
Grassl's table~\cite{grassl32}. The other two kernels have a common
quadratic-residue origin. Let
\begin{align}
 g(x)={}&x^{23}+x^{19}+x^{18}+x^{14}+x^{13}+x^{12}+x^{10}+x^9
 \nonumber\\
 &\quad+x^7+x^6+x^5+x^3+x^2+x+1\in\F_2[x].\label{eq:qr}
\end{align}
Take the coefficient vectors of $x^ig(x)$, $0\le i<24$, as
length-$47$ binary words, and append their parity at coordinate $47$.
This specifies the extended quadratic-residue code in the chosen
convention~\cite[Chapter~16]{macwilliams}. Shorten at coordinates
$40,\ldots,47$, retaining coordinates $0,\ldots,39$ in order, to
obtain a dimension-$16$ code $H_{40}$. Shortening $H_{40}$ at coordinate
zero gives the printed $H_{39}$; shortening at coordinates zero and
one gives the printed $H_{38}$. Here shortening imposes zeros before
deleting coordinates. Exact row-space comparisons verify these
identities and, in the displayed retained-coordinate conventions,
\begin{equation}\label{eq:nesting}
 H_{38}=\{(h_0,\ldots,h_{37}):(h_0,\ldots,h_{37},0)\in H_{39}\}.
\end{equation}
The distance proof uses the explicit generators and representatives,
not an assumed distance of the mother code.

\begin{proof}[The five layered-code constructions]
For each $n\in\{33,34,37,38,39\}$, combine the support family of
Theorem~\ref{thm:supports} with the sign code of
Proposition~\ref{prop:signs}. Proposition~\ref{prop:ers} gives
Table~\ref{tab:layers}. The dimension-37 and dimension-38 counts are
improved further in Sections~\ref{sec:signed} and~\ref{sec:completion}.
\end{proof}

\begin{table}[ht]
\centering
\caption{The three contributions to the five code-based configurations.}
\label{tab:layers}
\begin{tabular}{rrrrr}
\toprule
$n$&$|X_T|$&$|X_M|$&$|X_R|$&Total\\
\midrule
33&131072&230784&2112&363968\\
34&131072&250880&2244&384196\\
37&131072&363904&2664&497640\\
38&180224&393856&2812&576892\\
39&327680&433024&2964&763668\\
\bottomrule
\end{tabular}
\end{table}

At the first three lengths, the increase relative to the cited
comparison is $128$ times the increase in support size, respectively
$15$, $24$, and $11$. At the last two lengths, both the larger support
families and the explicit coset unions enter the final count.
\subsection{A new length-32 support family}
The same construction applies to a verified family of 1676 eight-subsets
of $[32]$. There are $1,403,650$ unordered support pairs. Their intersection
histogram, in orders zero through four, is
\[
(112077,318935,601033,196122,175483).
\]
No larger intersection occurs. Together with the same $[32,17,8]$ dense
code, Proposition~\ref{prop:ers} gives
\[
K(32)\ge 2^{17}+128(1676)+2(32)(31)=347584.
\]
The family was obtained from Echols's public 1667-word seed~\cite{echols} through joint
exchanges. The final witness, rather than successful replay of a randomized
search, is the finite input to this proof. The 2843-word length-37 family
above is also improved to 2845 words. Section~\ref{sec:signed} uses this
larger family and changes the sign patterns, giving a further gain.

\section{Local signed replacements}\label{sec:signed}
\subsection{A compatible replacement region}
Scale the ERS layers~\cite{ers} of Proposition~\ref{prop:ers} to squared norm 32:
the $[32,17,8]$ code gives $2^{17}$ vectors $(\pm1)^{32}$ padded to
dimension $d$; each eight-subset $B\in\mathcal B$ carries 128
fixed-parity sign patterns with entries $\pm2$; the roots have two
entries $\pm4$. Distinct supports in $\mathcal B$ intersect in at most four.

For $d=35$ and $36$, the archived support families have 2,157 and 2,742
blocks, respectively \cite{brouwer}. The base cardinalities are
\[
 2^{17}+128\cdot2157+4\binom{35}{2}=409548,
 \qquad
 2^{17}+128\cdot2742+4\binom{36}{2}=484568.
\]

\begin{proposition}[Exact bundle boundary]\label{prop:bundle-boundary}
Let $Q\subset\{0,\pm1\}^{d}$ consist of distinct weight-eight vectors
with $q\cdot q'\le4$ for distinct $q,q'$, and let
$\mathcal D\subseteq\mathcal B$.
Deleting the complete middle-layer bundles on $\mathcal D$ and adjoining
$2Q$ preserves the kissing condition and changes the cardinality by
$|Q|-128|\mathcal D|$ if and only if
\begin{equation}\label{eq:bundle-boundary}
 |\supp(q)\cap B|\le4
 \qquad(q\in Q,\ B\in\mathcal B\setminus\mathcal D).
\end{equation}
\end{proposition}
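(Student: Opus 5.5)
The plan is to reduce everything to the squared-norm-32 scale and apply Proposition~\ref{prop:exchange} with the deleted set $D$ equal to the union of the middle-layer bundles on $\mathcal D$ and the inserted set $I=2Q$. All points have squared norm 32 and threshold 16. Each $2q$ has entries in $\{0,\pm2\}$ with weight eight, so its squared norm is $4\cdot8=32$. Internal compatibility of $2Q$ is exactly $4\,q\cdot q'\le16$, that is, the hypothesis $q\cdot q'\le4$. Proposition~\ref{prop:exchange} then says the modified set is a kissing configuration if and only if every old point in conflict with some $2q$ lies in the deleted bundles.

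The next step is to check that $2Q$ never conflicts with the dense layer or the root layer. A dense vector has entries $\pm1$, so its product with $2q$ is at most $2\cdot8=16$. A root has two entries $\pm4$, so its product with $2q$ is at most $2\cdot4\cdot2=16$. A middle vector on support $B$ has entries $\pm2$ there, so its product with $2q$ is at most $4\,|\supp(q)\cap B|$. This immediately gives sufficiency: under~\eqref{eq:bundle-boundary}, every retained bundle contributes at most $16$.

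For necessity I would show that when $|\supp(q)\cap B|=s\ge5$, some vector in the bundle on $B$ exceeds $16$. The bundle consists of all $\pm2$ patterns on $B$ with a fixed parity. Choose its signs to agree with those of $q$ on all $s$ common coordinates. Since $s\le8$ and $|B|=8$, there are two cases. If $s<8$, a free coordinate of $B$ outside the common set can absorb the parity, and the product is $4s\ge20>16$. If $s=8$, then $\supp(q)=B$, and one of two patterns lies in the bundle: one matching $q$ exactly (product $32$) or one matching it in seven places (product $4\cdot6=24$). Either exceeds $16$. So every violation of~\eqref{eq:bundle-boundary} produces a conflict with a retained point, and by Proposition~\ref{prop:exchange} the configuration is then invalid.

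The cardinality statement requires $2Q$ to be disjoint from the remaining old points. The conflict argument handles this: a coincidence would give a product of $32$ and hence a conflict, which the characterization excludes. Dense and root vectors also have different supports or entries from $2Q$. The net change is therefore $|Q|-128|\mathcal D|$. The one delicate step is the parity-adjustment case $s=8$, together with reading "preserves the kissing condition" so that the count equivalence is automatic once validity holds.
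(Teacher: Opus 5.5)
Your proposal is correct and is essentially the paper's proof: you maximize over a retained bundle (value $4s$ for $s<8$, and $32$ or $24$ for $s=8$ according to parity) to get necessity, and you use the same layer-by-layer bounds for sufficiency and disjointness. The paper likewise treats the exact-match case $s=8$ as a failure of the additive count rather than of the kissing condition, which is the labelled reading you flag; because Proposition~\ref{prop:exchange} assumes the candidates are disjoint from the old configuration, that case should be argued directly, as you do in your last paragraph.
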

\begin{proof}
Write a retained middle-layer vector as $2s$, and put
$r=|\supp(q)\cap B|$, where $B=\supp(s)$. If $r<8$, signs agreeing
with $q$ on the intersection extend to either prescribed parity on $B$:
one coordinate of $B\setminus\supp(q)$ can fix the parity.
Thus $\max_s q\cdot s=r$. For $r=8$, the maximum is eight or six
according as $q$ has the bundle's parity or not. If $r>4$, a maximizing
$s=q$ violates the additive count, and any other maximizer violates
the kissing condition. This proves necessity, including possible duplicates.

Conversely,~\eqref{eq:bundle-boundary} bounds new--retained middle
products by 16; new--new products have the same bound by hypothesis.
Top--new and bottom--new products are at most $8\cdot2=16$ and
$2\cdot4\cdot2=16$. New vectors have squared norm 32 and distinct
supports from retained middle vectors; coordinate magnitudes distinguish
them from the other layers. Unchanged pairs remain compatible, proving
the count and sufficiency.
\end{proof}

For prescribed $Q$, the least whole-bundle deletion set making $2Q$
a compatible, disjoint insertion is
\[
 F(Q)=\{B\in\mathcal B:\ |B\cap\supp(q)|>4
                         \text{ for some }q\in Q\}.
\]
Its cost is $128|F(Q)|$, as in Proposition~\ref{prop:exchange},
regardless of the new signs' parity while retained bundles remain complete.

\begin{corollary}[Isolated replacement]\label{lem:isolated-replacement}
Let $W$ contain exactly $b$ blocks of $\mathcal B$, and suppose every
other block meets $W$ in at most four coordinates. Any internally
compatible signed weight-eight code $Q$ on $W$ can replace those $b$
bundles, with gain $|Q|-128b$.
\end{corollary}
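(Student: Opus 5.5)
The plan is to derive the corollary directly from Proposition~\ref{prop:bundle-boundary}, taking $\mathcal D$ to be the set of the $b$ blocks of $\mathcal B$ contained in $W$. I read ``$W$ contains exactly $b$ blocks'' as $|\{B\in\mathcal B: B\subseteq W\}|=b$. By hypothesis, every other block $B\in\mathcal B\setminus\mathcal D$ satisfies $|B\cap W|\le4$. The code $Q$ is a set of distinct vectors in $\{0,\pm1\}^d$ of weight eight, each supported in $W$. Internal compatibility means $q\cdot q'\le4$ for distinct $q,q'$, which is exactly the scaled kissing condition for $2Q$ at squared norm $32$.

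The key step is the boundary condition~\eqref{eq:bundle-boundary}. For $q\in Q$ and $B\in\mathcal B\setminus\mathcal D$ we have $\supp(q)\subseteq W$, so
\[
|\supp(q)\cap B|\le|W\cap B|\le4 .
\]
All hypotheses of Proposition~\ref{prop:bundle-boundary} therefore hold. Deleting the complete bundles on $\mathcal D$ and adjoining $2Q$ preserves the kissing condition and changes the cardinality by $|Q|-128|\mathcal D|=|Q|-128b$.

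Little remains beyond this. I would note that the count in the proposition already accounts for disjointness: the new vectors cannot coincide with retained middle vectors, because their supports have overlap at most four with retained blocks, while a coincidence would require overlap eight. They also cannot coincide with vectors of the other layers, because coordinate magnitudes differ. The only place I expect any care is in fixing the meaning of ``contains exactly $b$ blocks'' as containment of supports. With that reading, blocks that meet $W$ in five to seven coordinates are excluded by hypothesis, so no extra deletions arise, and the stated gain is exact.
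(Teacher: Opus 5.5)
Your proposal is correct and matches the paper's proof: take $\mathcal D$ to be the $b$ blocks contained in $W$, note that $|\supp(q)\cap B|\le|W\cap B|\le4$ for every retained block, and apply Proposition~\ref{prop:bundle-boundary}. Your extra remarks on distinctness are already covered by that proposition's count, so they are harmless but not needed.
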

\begin{proof}
Every new support is contained in $W$, so
\eqref{eq:bundle-boundary} holds after deleting the $b$ contained blocks.
\end{proof}

This is the weight-eight version of the free-zone replacement principle
of Takhanov--Yun~\cite[Definition 5 and Lemma 4]{signed-johnson}.
The exact criterion also permits non-isolated regions with suitable
new supports.

In dimension 35 take $W=\{12,\ldots,22\}$; precisely block 2145 lies inside
$W$. In dimension 36 take $W=\{12,\ldots,23\}$; precisely blocks
2739, 2740 and 2741 lie inside $W$. Indices are zero-based. The independent
checker verifies these facts from the complete support lists, together with
all boundary intersections.

\subsection{Hadamard transfer}
\begin{lemma}[Paired Hadamard transfer]\label{lem:hadamard-transfer}
Let $\mathcal A$ be a family of four-subsets on an even number $m$ of
coordinates, with pairwise intersections at most two. Fix a perfect
matching of the coordinates and retain the blocks meeting each matched
pair in at most one coordinate. If $r$ blocks survive, there exists a
signed weight-eight code of size $16r$ in $\{0,\pm1\}^{m}$ whose distinct
vectors have inner products at most four.
If the paired coordinates are embedded in the ERS configuration, write
$P(A)$ for the union of the four pairs met by a retained source block $A$.
The transferred code replaces any $b$ old bundles with gain $16r-128b$
provided
\begin{equation}\label{eq:transfer-boundary}
 |P(A)\cap B|\le4
 \quad\text{for every retained source block $A$ and undeleted old block $B$}.
\end{equation}
\end{lemma}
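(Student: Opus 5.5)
The plan is to build the code explicitly: each retained source block is spread over its four matched pairs, and the two rows of the $2\times2$ Hadamard matrix encode which coordinate of each pair the block uses. The boundary statement should then follow directly from Proposition~\ref{prop:bundle-boundary}.

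\emph{The code.} Order each matched pair as $\{p,p'\}$ and put $h(p)=(1,1)$ and $h(p')=(1,-1)$ on the coordinates $(p,p')$. These two rows are orthogonal. Let $A$ be a retained block. Since $A$ meets each matched pair in at most one coordinate, its four elements $a$ lie in four distinct pairs $\pi(a)$, so $P(A)=\bigcup_{a\in A}\pi(a)$ has exactly eight coordinates. For each $\sigma\in\{\pm1\}^A$ define $q_{A,\sigma}\in\{0,\pm1\}^m$ by placing $\sigma_a h(a)$ on the pair $\pi(a)$ and zeros elsewhere. Every such vector has weight eight and squared norm eight, and each retained block contributes $16$ of them, giving $16r$ vectors once distinctness is checked.

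\emph{Inner products.} Fix a matched pair and compare the contributions of two blocks $A,A'$ that both meet it.
\begin{itemize}
\item If both blocks use the same coordinate of the pair, the pair contributes $2\sigma_a\sigma'_a\in\{\pm2\}$.
\item If the blocks use opposite coordinates of the pair, the contribution is $\pm\ip{(1,1)}{(1,-1)}=0$.
\item Pairs met by only one of the blocks contribute nothing.
\end{itemize}
Hence $\ip{q_{A,\sigma}}{q_{A',\sigma'}}=2\sum_{a\in A\cap A'}\sigma_a\sigma'_a$.

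For $A\ne A'$ we have $|A\cap A'|\le2$, so the product is at most $4$. For $A=A'$ and $\sigma\ne\sigma'$, at least one of the four signs differs, so the product is at most $2(4-2)=4$.

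Every distinct pair of indices therefore has product at most $4<8$, so the vectors are pairwise distinct. This gives the code of size $16r$. Its supports can coincide across different blocks, for instance when $A$ and $A'$ use opposite ends of the same four pairs. This causes no difficulty, because the orthogonality of the Hadamard rows, rather than support disjointness, controls those products.

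\emph{Embedding.} Now view the paired coordinates inside the ERS configuration and set $Q=\{q_{A,\sigma}\}$. The support of each new vector is $\supp(q_{A,\sigma})=P(A)$, so condition~\eqref{eq:transfer-boundary} is literally condition~\eqref{eq:bundle-boundary} for $Q$ and the retained blocks $\mathcal B\setminus\mathcal D$, where $\mathcal D$ is the set of $b$ deleted blocks. Proposition~\ref{prop:bundle-boundary} then says that deleting those $b$ bundles and adjoining $2Q$ preserves the kissing condition and changes the cardinality by $|Q|-128b=16r-128b$.

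The step needing the most care is the cross-block estimate for blocks sharing pairs through opposite coordinates. It is exactly the point where the Hadamard orthogonality replaces what would otherwise be an uncontrolled intersection of $P(A)$ with $P(A')$.
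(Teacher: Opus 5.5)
Your proposal is correct and follows the paper's proof. Your assignment $p\mapsto(1,1)$, $p'\mapsto(1,-1)$ is exactly the paired Hadamard map $H$, and your formula $\ip{q_{A,\sigma}}{q_{A',\sigma'}}=2\sum_{a\in A\cap A'}\sigma_a\sigma'_a$ is the explicit form of the paper's remark that $H^{\mathsf T}H=2I$ doubles the source inner products; as in the paper, the replacement claim then follows from Proposition~\ref{prop:bundle-boundary}.
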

\begin{proof}
Take all sixteen sign patterns on each retained block and apply
$H=\left(\begin{smallmatrix}1&1\\1&-1\end{smallmatrix}\right)$ to each
matched coordinate pair. An occupied pair's single entry $\pm1$ becomes
two entries $\pm1$, so every image has weight eight and support $P(A)$.
Original inner products are at most two, by the intersection bound
for different supports and at least one sign change for equal supports.
Since $H^{\mathsf T}H=2I$, products double and distinctness is preserved.
Proposition~\ref{prop:bundle-boundary} gives the replacement claim.
\end{proof}

For dimension 35, an exact-cover construction gives a Steiner quadruple
system $S(3,4,10)$ with 30 blocks. The matching
\[
(0,1),(2,3),(4,5),(6,7),(8,9)
\]
retains 16 blocks, hence 256 signed weight-eight vectors. One extra zero
coordinate embeds them into the eleven-coordinate region $W$.

The twelve-coordinate source is the 51-block design of
Kalbfleisch--Stanton~\cite[Theorem 7]{kalbfleisch-stanton}, also used
in the local-replacement construction of
Takhanov--Yun~\cite[Section V]{signed-johnson}.
For dimension 36, use two copies of a one-factorization of $K_6$. In each
six-coordinate half, take the complements of the three edges in one
factor, giving three internal four-subsets. For each of the five factors,
combine each of its three edges in the first half with each of the three
in the second half. There are $3+3+5\cdot9=51$ blocks, with intersections
at most two. The matching
\[
(0,1),(2,4),(3,5),(6,7),(8,10),(9,11)
\]
retains 36 blocks, hence 576 signed weight-eight vectors.
The same cross-block construction works for every even half-size, and
is sharp in its complete-sign, transverse $2+2$ model. Let $L,R$ be
disjoint sets of $2t$ coordinates, with prescribed perfect matchings
$M_L,M_R$. Write $E_L=\binom L2\setminus M_L$ and
$E_R=\binom R2\setminus M_R$. Identify a transverse $2+2$ support
$e\cup f$ with the pair $(e,f)\in E_L\times E_R$.

\begin{proposition}[Sharp transverse transfer]\label{prop:local36-paper}
Let $t\ge2$ and $\mathcal A\subseteq E_L\times E_R$. Taking all sixteen
signs on each source support and applying the paired Hadamard map gives
an internally compatible weight-eight code if and only if both
\[
 \{f:(e,f)\in\mathcal A\}\quad(e\in E_L),\qquad
 \{e:(e,f)\in\mathcal A\}\quad(f\in E_R)
\]
are matchings on their respective coordinate sets. Consequently
\begin{equation}\label{eq:transverse-capacity}
 |\mathcal A|\le2t^2(t-1),\qquad |Q|\le32t^2(t-1).
\end{equation}
Equality holds precisely when every one of these left and right
fibres is a perfect matching. For any prescribed $M_L,M_R$, equality
is attainable. In particular, $t=3$ gives 576 vectors in twelve coordinates.
\end{proposition}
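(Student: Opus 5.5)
The plan is to pull the compatibility question back through the Hadamard map to the source supports, read it there as a condition on intersections, and then count fibres. In the paired Hadamard transfer of Lemma~\ref{lem:hadamard-transfer}, the map $H$ acts blockwise with $H^{\mathsf T}H=2I$. It is therefore injective and exactly doubles inner products. So the transferred code is internally compatible (products at most four) if and only if the source code, namely all sixteen signings of every support in $\mathcal A$, has distinct products at most two.

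First I check that the transfer applies and gives weight-eight images. An edge $e\in E_L$ is not in $M_L$, so it meets two distinct matched pairs in one coordinate each; the same holds on the right. Hence every source support meets each matched pair in at most one coordinate, and its image has weight eight.

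Next I compute the source products. With full sign freedom on two supports $A=e\cup f$ and $A'=e'\cup f'$, the maximum product is $|A\cap A'|=|e\cap e'|+|f\cap f'|$. For equal supports, distinct signings differ in at least one coordinate, so their product is at most two.

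The compatibility criterion follows. A violation requires $|A\cap A'|\ge3$ with $A\ne A'$. Since edges are $2$-sets, this happens exactly when either $e=e'$ and $|f\cap f'|=1$, or $f=f'$ and $|e\cap e'|=1$. Distinct edges sharing a vertex are precisely the failure of a fibre to be a matching. Thus compatibility holds if and only if every left fibre $\{f:(e,f)\in\mathcal A\}$ and every right fibre $\{e:(e,f)\in\mathcal A\}$ is a matching. For necessity, a bad pair with equal signs on the overlap has source product $3$, so its images have product $6>4$.

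For the bound, a matching on the $2t$ coordinates of $R$ has at most $t$ edges, and $|E_L|=\binom{2t}2-t=2t(t-1)$. Summing the left fibres gives $|\mathcal A|\le 2t(t-1)\cdot t=2t^2(t-1)$, and $|Q|=16|\mathcal A|$ gives~\eqref{eq:transverse-capacity}. If some left fibre is not perfect, the bound is strict. Summing the right fibres gives the same bound with the roles of $L$ and $R$ exchanged, since $|E_R|=|E_L|$. Hence equality forces every left fibre and every right fibre to be a perfect matching. Conversely, if all fibres are perfect, the count is exactly $2t^2(t-1)$.

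Attainability is the step that needs a genuine construction, and it is where I expect the main work. I would use a one-factorization of $K_{2t}$, which exists for every $t\ge1$. After relabelling vertices, one of its factors is $M_L$, so $E_L$ splits into perfect matchings $F_1,\ldots,F_{2t-2}$. In the same way, $E_R$ splits into perfect matchings $G_1,\ldots,G_{2t-2}$ with $M_R$ as the omitted factor. Set $\mathcal A=\bigcup_k F_k\times G_k$. For $e\in F_k$, the left fibre of $e$ is exactly $G_k$, and for $f\in G_k$ the right fibre of $f$ is $F_k$. All fibres are therefore perfect, and $|\mathcal A|=(2t-2)t^2$. The hypothesis $t\ge2$ ensures $E_L$ and $E_R$ are nonempty. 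For $t=3$, this gives $36$ supports and $16\cdot36=576$ vectors, matching the dimension-36 construction above.
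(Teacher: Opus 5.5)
Your proposal is correct and follows essentially the paper's argument. You reduce compatibility through $H^{\mathsf T}H=2I$ to source intersections of size at most two, with complete sign bundles giving necessity, and sum the fibre sizes over the $2t(t-1)$ edges on either side. Equality is then attained by the same synchronized one-factorization family $\bigcup_k F_k\times G_k$. You also make explicit the weight-eight check for transverse supports, which the paper leaves implicit in Lemma~\ref{lem:hadamard-transfer}.
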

\begin{proof}
Two source supports with different edges in both halves intersect in
at most two positions. If their left edges agree, their intersection
has size at most two exactly when their right edges are disjoint;
the symmetric statement holds on the right. Complete sign bundles
make this intersection condition necessary as well as sufficient:
signs agreeing on an intersection of size at least three have product
at least three, and hence product at least six after transfer.
Lemma~\ref{lem:hadamard-transfer} proves sufficiency and the factor sixteen.

Each fibre has at most $t$ edges, and
$|E_L|=|E_R|=\binom{2t}{2}-t=2t(t-1)$.
Summing degrees on either side proves the bound, with equality exactly
when every fibre is a perfect matching.

For attainment, take one-factorizations of the two complete graphs;
these exist for every $t$~\cite[p.~137]{kalbfleisch-stanton}. Relabel
vertices so that factor zero is the prescribed pairing in each half,
and use the same $2t-1$ colour labels. For each nonzero colour include
every union of one left edge and one right edge of that colour.
Every fibre is a perfect matching, and there are $(2t-2)t^2$ blocks.
For $t=3$ this is the four-colour construction used above.
\end{proof}

The equality criterion does not classify extremizers as synchronized
one-factorizations. The bound requires both the $2+2$ restriction and
complete source sign bundles; it does not cover arbitrary signed
weight-eight codes.

\subsection{Boundary-constrained capacity}
For fixed old bundles to be deleted, form the bipartite allowed-block
graph $\Gamma\subseteq E_L\times E_R$ by retaining exactly the pairs
$(e,f)$ for which $P(e\cup f)$ satisfies
\eqref{eq:transfer-boundary}. More generally, $\Gamma$ can encode any
prescribed restrictions on the transverse $2+2$ blocks. For each
$e\in E_L$, let $G_e$ be the graph on $R$ with edges
$\{f:(e,f)\in\Gamma\}$; define $H_f$ on $L$ symmetrically. Denote
the matching number of a graph $G$ by $\nu(G)$.

\begin{proposition}[Boundary matching bounds]\label{prop:boundary-capacity}
An allowed family $\mathcal A\subseteq\Gamma$ is compatible exactly
when all its left and right fibres are matchings. If
$C_L\subseteq E_L$, $C_R\subseteq E_R$ form a vertex cover of
$\Gamma$, then
\begin{equation}\label{eq:boundary-cover}
 |\mathcal A|\le
 \sum_{e\in C_L}\nu(G_e)+\sum_{f\in C_R}\nu(H_f).
\end{equation}
In particular one may use either complete vertex class as a cover,
or minimize the displayed weighted cover bound. Any vertex cover of
$G_e$ or $H_f$ gives a further explicit upper bound on its matching number.

The unrestricted bound $2t^2(t-1)$ is attainable under $\Gamma$ if
and only if $\Gamma$ contains a subfamily whose fibres on both sides
are perfect matchings. Thus even one fibre with matching number less
than $t$ obstructs equality. If $U$ is any upper bound obtained from
\eqref{eq:boundary-cover}, deleting $b$ old bundles and using this
model has net gain at most $16U-128b$.
\end{proposition}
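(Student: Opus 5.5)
The compatibility claim follows from Proposition~\ref{prop:local36-paper}. Its proof shows that the paired Hadamard image of the complete sign bundles on $\mathcal A$ is internally compatible exactly when all left and right fibres are matchings. That argument uses only the $2+2$ structure and the complete sign bundles. Restricting to $\mathcal A\subseteq\Gamma$ changes nothing, because allowedness concerns only boundary compatibility with retained old blocks. I would therefore open the proof by citing that criterion verbatim.

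For the cover bound~\eqref{eq:boundary-cover}, I would double count the edges of $\mathcal A$, viewed as a subgraph of the bipartite graph $\Gamma$ on $E_L\sqcup E_R$. Since $C_L\cup C_R$ is a vertex cover of $\Gamma$, every $(e,f)\in\mathcal A$ has $e\in C_L$ or $f\in C_R$. Hence
\[
 |\mathcal A|\le\sum_{e\in C_L}\deg_{\mathcal A}(e)+\sum_{f\in C_R}\deg_{\mathcal A}(f).
\]
The degree $\deg_{\mathcal A}(e)$ is the size of the left fibre $\{f:(e,f)\in\mathcal A\}$. This fibre is a matching on $R$ and consists of edges of $G_e$, so its size is at most $\nu(G_e)$. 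The same holds on the right with $H_f$. Taking $C_L=E_L$ (or $C_R=E_R$) gives the complete-class covers. For the final remark of that paragraph, $\nu(G)$ is at most the size of any vertex cover of $G$ (K\"onig's easy direction), which yields the further explicit bound.

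For the equality statement, Proposition~\ref{prop:local36-paper} bounds every fibre by $t$. Equality $|\mathcal A|=2t^2(t-1)$ holds exactly when every left and right fibre is a perfect matching, so attainment under $\Gamma$ is equivalent to the existence of such a subfamily inside $\Gamma$. If some $G_e$ has $\nu(G_e)<t$, the fibre of $e$ cannot be perfect, which obstructs equality. The only point needing care is that a perfect fibre forces $\nu(G_e)=t$ for every $e\in E_L$, and symmetrically for every $f\in E_R$. This is immediate, since the fibre is itself a matching in $G_e$.

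For the gain, Lemma~\ref{lem:hadamard-transfer} turns $\mathcal A$ into $16|\mathcal A|$ vectors. Allowedness is exactly condition~\eqref{eq:transfer-boundary}, so by Proposition~\ref{prop:bundle-boundary} the replacement changes the cardinality by $16|\mathcal A|-128b\le16U-128b$. I expect no real obstacle. The only delicate step is confirming that the compatibility criterion of Proposition~\ref{prop:local36-paper} does not depend on the ambient design and so transfers unchanged to subfamilies of $\Gamma$.
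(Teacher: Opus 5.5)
Your proposal is correct and follows the paper's proof essentially step for step. Compatibility and the equality case are inherited from Proposition~\ref{prop:local36-paper}, and the cover bound comes from bounding each fibre by $\nu(G_e)$ or $\nu(H_f)$ and summing over a vertex cover. The local bound is the easy direction of K\"onig's theorem, and the gain counts sixteen inserted points per source block against 128 per deleted bundle.
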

\begin{proof}
The compatibility and equality assertions follow from
Proposition~\ref{prop:local36-paper}. The degree of $e$ in
$\mathcal A$ is at most $\nu(G_e)$, and similarly for $f$.
Every selected pair has an endpoint in the chosen vertex cover, so
summing these degree bounds proves~\eqref{eq:boundary-cover}.
A matching uses distinct vertices of any vertex cover, proving the
additional local bound. The last claim counts sixteen inserted points
per source block and 128 points per deleted bundle.
\end{proof}

Local matching bounds need not be simultaneously attainable: choices
in the $G_e$ can violate the $H_f$ conditions. In twelve coordinates,
the geometric boundary resolves this obstruction. Each half contains
three coordinate pairs. A left pair type selects two of the three left
pairs, and a right pair type selects two of the three right pairs.
There are three types on each side. Represent the allowed combinations
by a bipartite graph $H$ on these two three-element sets. Every edge
$a\in E(H)$ allows all sixteen source blocks choosing one coordinate
from each of its four prescribed pairs. This is exactly the restriction
produced by~\eqref{eq:transfer-boundary}, since the image support depends
only on the four occupied pairs, not on the chosen input axes.

Put $h=|E(H)|$. Define a bipartite graph $D_H$ with two copies $a^-,a^+$
of every edge $a$ of $H$, joining $a^-$ to $b^+$ when $a\ne b$ share
an endpoint in $H$. Thus $D_H$ is the bipartite double cover of the
line graph of $H$. Matching--vertex-cover duality bounds the capacity;
compatible source blocks attain the resulting integer cell weights.

\begin{theorem}[Exact twelve-coordinate boundary capacity]
\label{thm:type-boundary-capacity}
The maximum number of compatible transverse $2+2$ source blocks under
the type restriction $H$ is
\begin{equation}\label{eq:type-boundary-capacity}
 \beta(H)=8h-4\nu(D_H)
 =\max_{M\text{ matching in }H}
       \bigl(8|M|+4|E(H-V(M))|\bigr),
\end{equation}
where $V(M)$ is the set of endpoints of $M$. Consequently the maximum
complete-sign transferred code has $16\beta(H)$ vectors.

Fix a maximum matching $N$ in $D_H$. For a compatible allowed family,
let $c_a$ count its source blocks of type $a$. It is optimal if and only if
\[
 c_a+c_b=8\quad(a^-b^+\in N),\qquad
 c_a=8\quad\text{if either copy of $a$ is unmatched by $N$}.
\]
In particular, an optimum can always be chosen with every $c_a$ equal
to zero, four or eight.
\end{theorem}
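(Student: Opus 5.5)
The plan is to reduce compatibility to linear inequalities on the cell counts $c_a$, bound them by doubling the inequalities over a maximum matching of $D_H$, attain the bound through König's theorem, and then realize the resulting weights by explicit block families. Throughout, use Proposition~\ref{prop:local36-paper} with $t=3$: a family $\mathcal A$ of transverse $2+2$ source blocks is compatible exactly when every left fibre and every right fibre is a matching.

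\emph{Local constraints.} Label the three left coordinate pairs $p,q,r$. An edge $e\in E_L$ joins two different pairs, so its pair type is one of $pq,pr,qr$, and each type contains four edges forming a $4$-cycle. Do the same on the right. Now fix $e\in E_L$ and let $n_e(j)$ be the number of right edges of type $j$ in its fibre. Two right types always share exactly one pair, and that pair has only two vertices. A matching fibre therefore satisfies $n_e(j)\le2$ and $n_e(j)+n_e(j')\le2$ for $j\ne j'$. Summing over the four left edges of a given left type $i$ gives
\[
c_{(i,j)}\le8,\qquad c_{(i,j)}+c_{(i,j')}\le8 .
\]
The symmetric argument on the right does the same for edges of $H$ sharing a right endpoint. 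Hence $c_a\le8$ for all $a$, and $c_a+c_b\le8$ whenever $a\ne b$ are adjacent in $H$, i.e.\ whenever $a^-b^+$ is an edge of $D_H$.

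\emph{Upper bound and optimality conditions.} Write $2\sum_a c_a$ as a sum over the $2h$ vertices of $D_H$, so that $c_a$ is counted once at $a^-$ and once at $a^+$. Group the terms along a maximum matching $N$. Each matched edge $a^-b^+$ contributes $c_a+c_b\le8$, and each of the $2h-2\nu(D_H)$ unmatched copies contributes at most $8$. This gives
\[
\sum_a c_a\le 8h-4\nu(D_H).
\]
Equality holds exactly when every grouped inequality is tight, which is precisely the stated optimality criterion, provided the bound is attained.

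\emph{Attainment of the cell weights.} By König's theorem, take a vertex cover $K$ of $D_H$ with $|K|=\nu(D_H)$. Put $c_a=8-4\,|K\cap\{a^-,a^+\}|$, so that $c_a\in\{0,4,8\}$. For adjacent $a,b$ both edges $a^-b^+$ and $b^-a^+$ are covered. Consequently $c_a=8$ forces $c_b=0$, and otherwise $c_a+c_b\le8$. The sum is $8h-4|K|$, matching the upper bound. The edges with $c_a=8$ are pairwise nonadjacent, so they form a matching $M$, and the edges with $c_a=4$ avoid $V(M)$. Conversely, $c=8$ on any matching $M$ and $c=4$ on $E(H-V(M))$ is feasible. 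This proves the second expression in \eqref{eq:type-boundary-capacity}.

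\emph{Realization, the main obstacle.} It remains to show that the weights $c=8$ on $M$ and $c=4$ on $E(H-V(M))$ are achieved by actual block families whose fibres are matchings. Only realizability needs proof here: once the fibres are matchings, compatibility is automatic by Proposition~\ref{prop:local36-paper}, and the condition \eqref{eq:transfer-boundary} is automatic because it depends only on the occupied pairs. The plan is to coordinatize each pair as $\{x,x'\}$, so that an edge of type $pq$ is a choice of one vertex in each of $p$ and $q$.
\begin{itemize}
\item For a weight-$8$ cell, give each left edge the two disjoint right edges of type $j$ selected by a parity rule (two perfect matchings of the right $4$-cycle), arranged so that each right edge is also used twice.
\item For weight-$4$ cells, a left edge of type $i$ may carry two weight-$4$ right types $j,j'$, which share one pair $s$. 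Its two chosen right edges must then use opposite vertices of $s$. Choose the right edge by a bijection between the four left edges of type $i$ and the four right edges of type $j$, given by affine maps over $\mathbb F_2^2$ in the vertex choices. Fix these maps so that the vertices chosen in shared pairs are complementary on both sides.
\end{itemize}
Since $H$ has only three vertices per side, there are few configurations of $M$ and $H-V(M)$ up to symmetry. I would settle this step by a short case check: the worst case is $H-V(M)$ being a $4$-cycle, or all of $K_{3,3}$ with $M$ empty. That check, rather than the duality argument, is where the real work lies. The final claim $|Q|=16\beta(H)$ then follows from the factor sixteen in Lemma~\ref{lem:hadamard-transfer}.
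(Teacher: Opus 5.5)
\textbf{Where the proposal stands.} Your local inequalities $c_a\le8$ and $c_a+c_b\le8$, the doubling over a maximum matching $N$ of $D_H$, and the K\"onig cover giving weights in $\{0,4,8\}$ supported on a matching $M$ and on $E(H-V(M))$ follow the paper's argument exactly.

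\textbf{The gap.} It is the realization step, which you yourself call the real work and then leave as an unperformed case check. Without it you have an upper bound, together with integer cell weights that satisfy the necessary linear inequalities. You do not have a compatible block family with those cell counts. The equality $\beta(H)=8h-4\nu(D_H)$, and with it the optimality criterion and the final $\{0,4,8\}$ assertion, all depend on such a family existing. Your affine-bijection sketch asserts that the maps can be fixed so that complementarity holds on both sides. It gives no reason why the left and right requirements can be satisfied simultaneously. In the case $H=K_{3,3}$, $M=\varnothing$, every fibre on both sides must be a perfect matching containing one edge of each pair type. That is a global condition on all nine cells at once.

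\textbf{The missing ingredient.} It is the attainment part of Proposition~\ref{prop:local36-paper} with $t=3$. Its synchronized family uses one-factorizations of the two $K_6$'s with factor zero the prescribed pairing, common colours $1,\ldots,4$, and blocks $e\cup f$ with $e,f$ of the same colour. All its fibres are perfect matchings. A nonzero factor avoids the pairing edges, so it has exactly one edge of each pair type (the paper's $u+v=u+w=v+w=2$). Hence each colour contributes one block to each of the nine types, giving four blocks per type.

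\textbf{Completing the proof.} This family disposes of your worst case $K_{3,3}$. Subfamilies of a family with matching fibres again have matching fibres, so restricting it to the types in $E(H-V(M))$ handles every $H$ with no case analysis. Then add the eight even-parity blocks on each type of $M$. Compatibility across the two parts follows because an $M$-type and any other positive type differ on both sides. Their blocks therefore meet in at most one coordinate per half. With this step supplied, your proof coincides with the paper's.
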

\begin{proof}
For one type, its four possible left edges each have a right fibre
of size at most two, so $c_a\le8$. If $a,b$ share a left type, their
two distinct right types share a coordinate pair. Every right edge
from these two types meets that pair, so each left fibre has at most
two edges among them. Hence $c_a+c_b\le8$. The symmetric argument
applies when they share a right type.

Apply this last inequality to every edge of $N$, and $c_a\le8$ to
each unmatched vertex copy. Each $c_a$ is counted twice, giving
\[
 2\sum_a c_a\le8|N|+8(2h-2|N|)=16h-8\nu(D_H).
\]
Equality holds exactly when all these inequalities are tight, proving
the asserted equality criterion once attainment is established.

Take a minimum vertex cover $C$ of $D_H$, of size $\nu(D_H)$, and put
\[
 s_a=2-\boldsymbol1_{a^-\in C}-\boldsymbol1_{a^+\in C}\in\{0,1,2\}.
\]
Both directed copies of each adjacency are covered, so
$s_a+s_b\le2$ whenever $a,b$ share an endpoint in $H$.
Thus weight-two types form a matching in $H$ and have only weight-zero
neighbours. To obtain $C$, follow alternating paths from unmatched minus
vertices of $N$, then take unreached minus and reached plus vertices.
Absence of augmenting paths makes this a cover with exactly one endpoint
of each matching edge.

For each type with $s_a=2$, use the eight source blocks whose four
binary coordinate choices have even parity. Distinct choices differ
in at least two positions, so these blocks intersect in at most two
coordinates. For the types with $s_a=1$, restrict one synchronized
36-block factorization family from
Proposition~\ref{prop:local36-paper}. That family has four blocks of
every type: each of its left fibres is a perfect matching on three
coordinate pairs, whose three inter-pair edge counts satisfy
$u+v=u+w=v+w=2$, hence $u=v=w=1$.
The four possible left edges of a type each contribute one block.
Weight-one blocks are mutually compatible. A weight-two type and any
other positive type differ on both sides, so their blocks intersect in
at most one coordinate per half. The combined compatible family has
\[
 4\sum_a s_a=4(2h-|C|)=8h-4\nu(D_H)
\]
blocks, proving attainment.

More generally, for any matching $M$ in $H$, take eight parity blocks
on each edge of $M$ and four factorization blocks on each edge of
$H-V(M)$. The same argument proves compatibility. Conversely, in
the constructed optimum the weight-two types form such an $M$, and
the weight-one types lie in $H-V(M)$. These two observations prove
the second expression in~\eqref{eq:type-boundary-capacity}.
\end{proof}

Thus a fixed twelve-coordinate pairing, a division into two halves,
and $b$ prescribed old-bundle deletions have exact best gain in this model
\[
 16\beta(H)-128b=64\bigl(2h-\nu(D_H)-2b\bigr).
\]
The joint type constraints can be stronger than the fibre-cover bound.
For example, take the five types
$H=\{(1,1),(1,2),(1,3),(2,1),(3,1)\}$.
The two fibre sums are 28 and the best mixed cover gives 24, whereas
$D_H$ has a matching of size five and the exact capacity is 20.
The function \texttt{solve\_type\_boundary} in
\qpath{constructions/codes/hadamard/boundary.py} takes the allowed
types and returns the exact capacity, an attaining source family,
and matching and vertex-cover certificates. The independent tests
check all $2^9$ type relations.
Restrictions on individual source blocks within a type require the
finer boundary model of Proposition~\ref{prop:boundary-capacity}.

\begin{corollary}[One forbidden pair type in twelve coordinates]
\label{cor:forbidden-pair-type}
Let $t=3$, so that each half consists of three prescribed coordinate
pairs. A pair type specifies two of these pairs in each half, and
allows the source edges joining the chosen pairs. If the boundary
forbids at least one such type, at most 32 source blocks, hence 512
signed points, are possible. If it forbids exactly one type and allows
all the others, these bounds are attained.
\end{corollary}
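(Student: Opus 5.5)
Everything reduces to Theorem~\ref{thm:type-boundary-capacity}, since forbidding types simply deletes edges from the type graph $H$. Here $t=3$, so the unrestricted type graph is $K_{3,3}$ with $h=9$, and $\beta(K_{3,3})=8\cdot9-4\nu(D_{K_{3,3}})$. As a consistency check, this should equal the unrestricted capacity $2t^2(t-1)=36$ from Proposition~\ref{prop:local36-paper}. That forces $\nu(D_{K_{3,3}})=9$, i.e.\ a perfect matching of $D_{K_{3,3}}$. This holds because the line graph of $K_{3,3}$ is $4$-regular, so its bipartite double cover is $4$-regular bipartite and has a perfect matching by K\"onig/Hall.

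\emph{Upper bound.} Suppose at least one type is forbidden. The capacity is monotone in $H$: a family that is compatible under a smaller allowed set is also compatible under a larger one. So it suffices to treat $H=K_{3,3}$ minus one edge, with $h=8$, and show $\beta(H)\le32$. By the second expression in \eqref{eq:type-boundary-capacity}, we must check that every matching $M$ in $H$ gives $8|M|+4|E(H-V(M))|\le32$. I would go through the cases $|M|=0,1,2,3$.
\begin{itemize}
\item $|M|=0$: the value is $4\cdot8=32$.
\item $|M|=1$: removing two endpoints from $K_{3,3}$ leaves $K_{2,2}$, which has at most $4$ edges. The value is at most $8+16=24$.
\item $|M|=2$: the remainder has at most one edge. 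The value is at most $16+4=20$.
\item $|M|=3$: a perfect matching of $K_{3,3}$ avoiding the forbidden edge. The value is $24$.
\end{itemize}
So $\beta(H)\le32$. The factor $16$ from Lemma~\ref{lem:hadamard-transfer} (equivalently, the statement of Theorem~\ref{thm:type-boundary-capacity}) then gives at most $512$ signed points.

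\emph{Attainment.} If exactly one type is forbidden, take $M=\varnothing$ in \eqref{eq:type-boundary-capacity}. This gives $4h=32$ blocks: four factorization blocks for each of the eight allowed types, restricted from the synchronized family of Proposition~\ref{prop:local36-paper}. The construction in the proof of Theorem~\ref{thm:type-boundary-capacity} already establishes their compatibility. The result is $32$ source blocks and $512$ points.

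\emph{Main subtlety.} The step I expect to need the most care is justifying that the type-level model covers every possible boundary. The corollary defines forbidding via pair types, so the allowed source blocks are exactly those of the allowed types; the paragraph preceding Theorem~\ref{thm:type-boundary-capacity} makes this precise. The monotonicity reduction for several forbidden types is immediate, and the case check above is routine.
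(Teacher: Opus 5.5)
Your proof is correct, but the upper bound follows a different route from the paper's.

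\textbf{Your route.} You invoke the exact capacity formula of Theorem~\ref{thm:type-boundary-capacity}. You then reduce by monotonicity to $H=K_{3,3}$ minus one edge, and maximize $8|M|+4|E(H-V(M))|$ over matchings of sizes $0$ to $3$.

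\textbf{The paper's route.} It never uses the formula for the bound and counts fibres directly. The four left edges joining the two left pairs of the forbidden type can only be combined with right edges meeting the third right pair. That pair is a two-vertex cover, so each of these four fibres has size at most two. The other eight left fibres are matchings on six vertices, giving $4\cdot2+8\cdot3=32$. This is the left-class cover bound of Proposition~\ref{prop:boundary-capacity}, with a local vertex-cover estimate on four fibres.

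\textbf{What each buys.} The paper's count needs only the fibre-matching criterion of Proposition~\ref{prop:local36-paper}, not the double-cover machinery. It also applies verbatim to any block-level restriction that excludes one whole type. Your version leans on the heavier theorem, though your monotonicity remark covers finer restrictions just as well. In return, it identifies $\beta(H)=32$ exactly and shows that the empty matching is optimal. It also extends at once to any set of forbidden types, such as the $512$, $448$ and $384$ capacities of the ten divisions in dimension~37.

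\textbf{Attainment.} Your construction, four factorization blocks on each of the eight allowed types, is the same family the paper uses. The consistency check $\nu(D_{K_{3,3}})=9$ is correct but not needed.
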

\begin{proof}
There are four left edges joining the two left pairs in the forbidden
type. For each, all permitted right edges meet the third right pair,
a two-vertex cover; its fibre has size at most two instead of three.
The other eight left fibres have size at most three, giving
$4\cdot2+8\cdot3=32$.

The factorization family has exactly four blocks of each pair type,
as in the preceding proof. If only one type is forbidden, deleting
its four blocks leaves 32 compatible, allowed blocks.
\end{proof}

\begin{proposition}[Sign-extension bound]\label{prop:sign-capacity}
Let $Q\subset\{0,\pm1\}^m$ have constant weight $w$ and distinct-pair
inner products at most $s$. If $2w-m>s$, then $|Q|\le2^w$.
In particular, the maximum size of a weight-eight code in eleven
coordinates with inner products at most four is 256.
\end{proposition}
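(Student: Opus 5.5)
The plan is a pigeonhole argument on supports: two vectors with the same support cannot have the same sign pattern restricted to that support, and this gives the bound $2^w$.

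\emph{Step 1: distinct vectors have heavily overlapping supports.} Take distinct $q,q'\in Q$ with supports $S,S'$. Both have size $w$ and lie in $[m]$, so
\[
 |S\cap S'|\ge 2w-m>s .
\]

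\emph{Step 2: restrictions to a common set cannot agree.} The product $q\cdot q'$ is a sum over $S\cap S'$ of terms $\pm1$. If $q$ and $q'$ agreed in sign on all of $S\cap S'$, then $q\cdot q'=|S\cap S'|>s$, which is forbidden. So any two distinct codewords disagree in sign somewhere on their common support.

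\emph{Step 3: count codewords on a fixed support.} Fix one support $S_0$ of some codeword. For each $q\in Q$, the set $S_0\cap\supp(q)$ has size greater than $s$, which is not quite enough for injectivity directly. Instead I restrict to codewords sharing the support $S_0$. There are only $2^w$ sign patterns on $S_0$, and two distinct codewords with support $S_0$ must differ in sign on $S_0$. Hence at most $2^w$ codewords have support $S_0$.

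\emph{Step 4: the main obstacle, showing all supports coincide.} The bound needs every codeword to have the same support, and I do not see this following from $2w-m>s$ alone. The first route I would try is to map $Q$ into $\{\pm1\}^w$ through a fixed coordinate set $T$ of size $w$ and show the map is injective. This works only if $T$ lies inside every support, which holds when $m=w$; since $Q\subset\{0,\pm1\}^m$, the general case does not seem to follow. This suggests the general claim may rely on a different argument, or be read with an implicit restriction.

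\emph{Step 5: the eleven-coordinate case.} Here $m=11$, $w=8$, $s=4$, and $2w-m=5>4$, so any two supports meet in at least five coordinates. The lower bound $256$ comes from the dimension-35 construction via Lemma~\ref{lem:hadamard-transfer}. For the upper bound I would change the injection. Map each $q$ to its sign vector $\sigma(q)\in\{-1,0,1\}^{11}$. Two distinct codewords, whether their supports are equal or not, must disagree in sign at some position of their common support, which has at least five elements. I would then try to embed $Q$ into a set of size $2^8$. The idea is to select a subset of coordinates meeting every support in a fixed pattern, or to use a weighting argument in which each codeword is counted through its $\binom{8}{5}$ five-subsets. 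The hope is a double count of pairs (codeword, five-subset of its support carrying a sign pattern) giving $|Q|\cdot 56\le \binom{11}{5}2^5\cdot c$ for a suitable multiplicity $c$. Making this sharp at $256$ is the step I expect to be hardest; if it fails, I would fall back on a linear-programming bound over this sign structure.
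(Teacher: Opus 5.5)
\paragraph{There is a genuine gap.}
Your Steps 1 and 2 contain the right ingredients, but the proposal never reaches a proof. Step 3 bounds only the codewords that share one fixed support. Step 4 concedes that the general bound does not follow, and wrongly suggests that the statement might need an extra restriction; it does not. The double count in Step 5 is left open, and it cannot reach $256$ in any case. Even with multiplicity $c=1$ it gives only $\binom{11}{5}2^5/56=14784/56=264$. Moreover, $c=1$ is not justified: a signed five-subset can lie under two codewords whose supports meet in six or more coordinates and which disagree in sign outside it. So neither the general bound $|Q|\le2^w$ nor the upper bound $256$ is established.

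\paragraph{The missing idea.}
The fix is to pass from supports to full sign strings in $\{\pm1\}^m$. For each $q\in Q$, take the $2^{m-w}$ strings $\sigma\in\{\pm1\}^m$ that agree with $q$ on $\supp(q)$, filling the $m-w$ zero coordinates arbitrarily. Suppose one $\sigma$ extended two distinct codewords $q,q'$. Then $q$ and $q'$ would agree in sign on all of $\supp(q)\cap\supp(q')$, so $q\cdot q'=|\supp(q)\cap\supp(q')|\ge2w-m>s$. This is exactly what your Step 2 forbids. Hence the extension sets are pairwise disjoint subcubes of $\{\pm1\}^m$, and $2^{m-w}|Q|\le2^m$. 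No common support is needed; this is your Step 3 injection carried out in the full cube rather than on one fixed support. For $(m,w,s)=(11,8,4)$ it gives $|Q|\le256$ directly. The 256-point transferred code from the dimension-35 construction, which you cite correctly, attains the bound.
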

\begin{proof}
A full sign string $\sigma\in\{\pm1\}^m$ cannot extend two members
of $Q$: their supports would intersect in at least $2w-m>s$
coordinates, and their signs agree wherever they intersect.
Each member has $2^{m-w}$ extensions. Hence
$2^{m-w}|Q|\le2^m$. The 256-point transferred code above attains
the bound for $(m,w,s)=(11,8,4)$.
\end{proof}

\begin{theorem}
There are kissing configurations of sizes 409,676 in dimension 35 and
484,760 in dimension 36.
\end{theorem}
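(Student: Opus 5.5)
The plan is to apply Corollary~\ref{lem:isolated-replacement} to the ERS base configurations in dimensions 35 and 36. The inserted codes will be the Hadamard transfers constructed above.

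First, the base configurations. The $[32,17,8]$ sign code together with the archived support families of 2157 and 2742 blocks gives, by Proposition~\ref{prop:ers}, kissing configurations of sizes 409548 and 484568. Both families have pairwise intersections at most four, which is checked from the complete lists.

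\textbf{Dimension 35.} Take $W=\{12,\ldots,22\}$.
\begin{itemize}
\item The finite check shows that exactly block 2145 lies inside $W$, and every other block meets $W$ in at most four coordinates.
\item For the replacement code $Q$, take the $S(3,4,10)$ system with 30 blocks and the matching $(0,1),\ldots,(8,9)$. This retains 16 blocks.
\item Lemma~\ref{lem:hadamard-transfer} then gives 256 signed weight-eight vectors on ten coordinates with inner products at most four. Pad them with a zero coordinate and place them on $W$ by an order-preserving bijection.
\item Corollary~\ref{lem:isolated-replacement} with $b=1$ gives a gain of $256-128=128$, so the size is $409548+128=409676$.
\end{itemize}

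\textbf{Dimension 36.} Take $W=\{12,\ldots,23\}$.
\begin{itemize}
\item Exactly blocks 2739, 2740 and 2741 lie inside $W$, with the same boundary condition on all other blocks.
\item The 51-block design is restricted by the matching $(0,1),(2,4),(3,5),(6,7),(8,10),(9,11)$. Equivalently, take the synchronized family of Proposition~\ref{prop:local36-paper} with $t=3$. Either way we get 36 retained blocks, hence 576 vectors.
\item Corollary~\ref{lem:isolated-replacement} with $b=3$ gives a gain of $576-384=192$, so the size is $484568+192=484760$.
\end{itemize}

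The containment and intersection facts are finite checks on the support lists. For 35, the retained-block count of 16 is a finite check on the explicit design. For 36 it follows from the perfect-matching fibre structure in Proposition~\ref{prop:local36-paper}.

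The main obstacle is the boundary verification. Only the intersections of old blocks with $W$ matter, not with individual new supports, because every new support lies inside $W$. So we need every undeleted block $B$ to satisfy $|B\cap W|\le4$. This relies entirely on the certificate data.

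A secondary point is distinctness of the new vectors from the retained ones, and this is already handled in the proof of Proposition~\ref{prop:bundle-boundary}. The new supports could coincide with a deleted block, but those bundles are removed. Retained blocks meet $W$ in at most four coordinates, so they cannot equal any support contained in $W$.
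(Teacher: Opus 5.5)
Your proposal is correct and takes essentially the same route as the paper. It uses the ERS base counts $409548$ and $484568$, the isolated regions $W=\{12,\ldots,22\}$ and $W=\{12,\ldots,23\}$ containing one and three blocks, and the Hadamard-transferred codes of sizes $256$ and $576$ inserted via Corollary~\ref{lem:isolated-replacement}. As in the paper, the containment, boundary-intersection and retained-block counts are left to checks on the archived data.
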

\begin{proof}
The two replacement sizes give
\begin{align*}
N_{35}&=2^{17}+128(2157-1)+256+4\binom{35}{2}=409676,\\
N_{36}&=2^{17}+128(2742-3)+576+4\binom{36}{2}=484760.
\end{align*}
Proposition~\ref{prop:ers} controls the unchanged pairs;
Corollary~\ref{lem:isolated-replacement} and
Lemma~\ref{lem:hadamard-transfer} control the replacements.
The archived data verify the support, region and transverse-block hypotheses.
\end{proof}

\subsection{A 512-point replacement in dimension 37}
Proposition~\ref{prop:bundle-boundary} only tests the selected image
supports, rather than their entire containing region. Take the verified
2845-word length-37 family and delete three complete sign bundles.
The replacement consists of 512 signed weight-eight vectors on twelve
coordinates. In the source's one-based, left-to-right numbering, the
paired coordinates are
\[
(18,20),(23,24),(15,21),(16,22),(14,17),(19,25).
\]
Thirty-two transverse four-subsets on the twelve input axes, each with all
sixteen signs, give the 512 vectors by the preceding Hadamard transfer.
The four-subsets, pairs and signed images in
\qpath{constructions/codes/data/d37_signed_patch.json} regenerate exactly
the supplied 512-point set. The checker verifies internal products at
most four and intersections at most four with every retained support.
The boundary-adapted transfer therefore gives
\[
K(37)\ge2^{17}+128(2845-3)+512+2(37)(36)=498024.
\]
Against the 2832-support comparison family, the thirteen extra supports
contribute $13\cdot128=1664$ points. The signed replacement contributes
$512-3\cdot128=128$ further points, for a total gain of 1792.

For a sharp application of Corollary~\ref{cor:forbidden-pair-type},
label the six displayed coordinate pairs $P_0,\ldots,P_5$ in order.
Of the fifteen possible
four-pair image supports, exactly twelve meet every retained old
support in at most four coordinates. The forbidden three are those
whose two omitted pair indices are
\[
 \{0,2\},\qquad\{0,5\},\qquad\{2,4\}.
\]
These three edges form the path $4,2,0,5$. Every division of the six
pairs into two triples cuts at least one edge of this path, because
its four vertices cannot all lie in a single triple. Such a crossing
edge is the omitted pair of a forbidden $2+2$ type, so every division
has capacity at most 512. The division
$\{0,2,4\}\mid\{1,3,5\}$ cuts exactly one edge. Deleting the four
blocks of its forbidden type from a 36-block factorization construction
therefore supplies an alternative boundary-compatible 512-point code.
Consequently 512 is the exact capacity among complete-sign $2+2$
transfers, allowing any division into two triples but keeping these
six coordinate pairs and these three deleted bundles fixed.
More precisely, applying Theorem~\ref{thm:type-boundary-capacity}
to the ten divisions, identifying left--right reversals, gives four
capacities of 512, four of 448 and two of 384. The finite checker
constructs an attaining family for every division and verifies each
transferred code against the complete retained boundary.

The supplied 32-block witness itself fits no such division: its image
supports use ten different four-pair types, whereas a fixed $2+2$
division permits only nine. It belongs to the more general model of
Lemma~\ref{lem:hadamard-transfer}. Neither the restricted capacity
statement nor the 576-point isolated construction in dimension 36
proves optimality for arbitrary twelve-coordinate signed codes.

Left-to-right coordinate $i$ corresponds to integer bit $37-i$ (zero-based),
or coordinate $38-i$ numbered from the right. The verifier uses this
conversion consistently for supports and signed vectors.

\section{Second-shell completion in dimension 38}\label{sec:completion}
The preceding layered-code construction gives 576892 points in
dimension 38. The larger bound in Table~\ref{tab:main} starts from
the public 591612-point configuration~\cite{kissingnumbers}. Its
decomposition has an unoccupied equator. This permits the addition
of directions from a different Leech shell.

\begin{lemma}[Equatorial completion]\label{lem:completion}
Let $q,b>0$. Suppose $X\subset S^{D+k-1}$ is a kissing configuration consisting
of points $(u/\sqrt q,t)$ and pure-tail points $(0,p)$.
Let $V\subset\R^D$ be a set of representatives of distinct
antipodal lines, all of squared norm $b$, such that
\begin{align}
 |\ip{v}{v'}|&\le b/2&&
       (v,v'\in V,\ v\ne v'),\label{eq:completion-new}\\
 |\ip{v}{u}|&\le\sqrt{bq}/2&&
       (v\in V,\ (u/\sqrt q,t)\in X).\label{eq:completion-old}
\end{align}
Assume every $t$ in a point with nonzero head is nonzero.
Then adjoining all $(\pm v/\sqrt b,0)$ gives a kissing configuration
with $|X|+2|V|$ points.
\end{lemma}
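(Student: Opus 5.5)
The plan is a direct pair-by-pair check, organized by layer: the old configuration, the new equatorial points $(\pm v/\sqrt b,0)$, and the pure-tail points. Every new point has norm one, since $\norm{v}^2=b$, and old--old pairs are compatible by hypothesis. That leaves three kinds of pairs involving a new point: new--new, new--pure-tail, and new--old with nonzero head. One of the assumed inequalities handles each, so we only need to make sure the absolute values in \eqref{eq:completion-new} and \eqref{eq:completion-old} cover every choice of sign.

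\textbf{New--new pairs.} For $(\eps v/\sqrt b,0)$ and $(\eps' v'/\sqrt b,0)$ with $v\ne v'$, the product is $\eps\eps'\ip{v}{v'}/b\le|\ip{v}{v'}|/b\le1/2$ by \eqref{eq:completion-new}. When $v=v'$, the two points are antipodal, $\eps'=-\eps$, and their product is $-1$.

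\textbf{New--pure-tail pairs.} These vectors lie in orthogonal summands, so their product is $0$.

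\textbf{New--old pairs with nonzero head.} For an old point $(u/\sqrt q,t)$, the tail of the new point is zero, so the product is $\pm\ip{v}{u}/\sqrt{bq}$. By \eqref{eq:completion-old} this is at most $1/2$ for either sign. The hypothesis $t\ne0$ is not needed for this inequality. If some old point had $t=0$, it would be a unit vector $u/\sqrt q$ with $\norm{u}^2=q$, and the bound above still holds.

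\textbf{Counting, the main step.} The count $|X|+2|V|$ needs the $2|V|$ new points to be distinct from each other and from the old points. This is where I would use the hypothesis that every point with nonzero head has nonzero tail.
\begin{itemize}
\item Among themselves, the new points are distinct because $V$ represents distinct antipodal lines. Each $v$ gives two distinct vectors $\pm v$, and different lines give non-collinear vectors.
\item A new point $(\pm v/\sqrt b,0)$ has nonzero head and zero tail.
\item An old point with nonzero head has nonzero tail, so it differs from every new point in the tail component.
\item An old pure-tail point $(0,p)$ has zero head, so it differs from every new point in the head component.
\end{itemize}
Hence the union is disjoint and has exactly $|X|+2|V|$ points, and together with the pair checks above the result is a kissing configuration.
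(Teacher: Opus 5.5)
Your proof is correct and is essentially the paper's argument: unit norms, pure tails orthogonal to the new points, cap products $\pm\ip{v}{u}/\sqrt{bq}$ controlled by \eqref{eq:completion-old}, new-line products $\pm\ip{v}{v'}/b$ controlled by \eqref{eq:completion-new} with antipodes at $-1$, and distinctness from the old points for the count (nonzero cap tails, and the zero heads of the pure-tail points). Your remark that the hypothesis $t\ne0$ is needed only for this distinctness, not for the inequalities, is accurate.
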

\begin{proof}
New points have norm one. Their products with pure tails vanish,
with caps are $\pm\ip vu/\sqrt{bq}$, and with other new lines
are $\pm\ip v{v'}/b$; opposite points have product $-1$.
The nonzero tails separate old and new points.
\end{proof}

\subsection{The base configuration}
Use the integer Golay--Leech shell $\Sigma$ of squared norm 32,
with the Golay generator stored in the dimension-38 package.
Partition its 98280 antipodal lines into classes with pairwise
absolute inner products at most eight. The tail data consist of
1932 vectors of squared norm eight in $\R^{14}$, partitioned into
644 zero-sum triples. Inside a triple the mutual products are $-4$;
between any distinct tail vectors they are at most four.

Each mother line is assigned a triple. It contributes six points
of the form
\begin{equation}\label{eq:d38-base-caps}
 (\eps u/\sqrt{48},\,d/\sqrt{24}),
 \qquad \eps\in\{-1,1\},
\end{equation}
where $d$ runs through its assigned triple.
Their squared norm is $32/48+8/24=1$. Pairs in one class at one
tail have product at most $8/48+8/24=1/2$. Repeated heads
at different tails in their triple have product
$32/48-4/24=1/2$. Different heads and different tails have product
at most $16/48+4/24=1/2$.
Opposite heads give smaller products.

There are also 1932 pure-tail points. Write the tail vectors as the
rows of an integer matrix $F$, and the stored rational rotation as
$M/h$, with $M$ integral and $h>0$. Its defining checks are
\begin{equation}\label{eq:d38-rotation}
 MM^t=h^2I_{14},\qquad
 |(FM^tF^t)_{ij}|^2\le48h^2\quad\text{for all }i,j.
\end{equation}
The pure-tail points are $(0,dM^t/(h\sqrt8))$. Orthogonality preserves
their norms and mutual products. Their products with the caps in
\eqref{eq:d38-base-caps} are entries of
$FM^tF^t/(h\sqrt{192})$, so the second inequality gives absolute
products at most $1/2$. Thus both required checks use exact integer
arithmetic. The layer counts are
\[
 \underbrace{0}_{\text{equator}}+
 \underbrace{6(98280)}_{\text{caps}}+
 \underbrace{1932}_{\text{pure tails}}=591612.
\]
The class partition, triples, rotation, and Golay generator are
inherited data from the public construction; the completion below
is the additional part.

\subsection{The 144 added lines}
\begin{lemma}[Cross-shell compatibility]\label{lem:cross-shell}
Let $\mathcal L$ be a lattice of minimum squared norm $m$.
If $u,v\in\mathcal L$ have distinct squared norms $a,b$, then
$|u\cdot v|\le(a+b-m)/2$.
\end{lemma}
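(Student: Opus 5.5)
The plan is to use the lattice closure of $u\pm v$ together with the minimum-norm lower bound. Since $\mathcal L$ is a lattice, both $u-v$ and $u+v$ lie in $\mathcal L$.

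Because $a\ne b$, we have $u\ne\pm v$. Hence neither $u-v$ nor $u+v$ is zero. Every nonzero lattice vector has squared norm at least $m$, so
\[
 \norm{u-v}^2=a+b-2\,u\cdot v\ge m,\qquad
 \norm{u+v}^2=a+b+2\,u\cdot v\ge m .
\]
The first inequality gives $u\cdot v\le(a+b-m)/2$. The second gives $-u\cdot v\le(a+b-m)/2$. Together they yield $|u\cdot v|\le(a+b-m)/2$.

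There is no real obstacle here. The only point needing care is the nonvanishing of $u\pm v$, which is exactly where the hypothesis of distinct norms is used. When $a=b$, the vector $u-v$ could be zero, and the bound would then fail for $v=u$. No earlier result of the paper is needed.

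In the application, Lemma~\ref{lem:cross-shell} feeds \eqref{eq:completion-old}. For example, with $m=32$, $a=32$ and $b=48$, it gives $|u\cdot v|\le24$. This equals $\sqrt{bq}/2$ only after the appropriate scaling, and that matching is the business of the subsequent corollary, not of this lemma.
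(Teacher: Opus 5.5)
Your proof is correct and is essentially the paper's own argument: distinct norms force $u\pm v\ne0$, and applying the lattice minimum to these two vectors gives $a+b\pm2\,u\cdot v\ge m$, which is the bound. One small point about your closing remark on the application, which lies outside the proof: at $q=b=48$ the quantity $\sqrt{bq}/2$ already equals $24$, so no rescaling is needed.
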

\begin{proof}
The distinct norms imply $u\pm v\ne0$. Applying the minimum bound
to these two lattice vectors gives $a+b\pm2u\cdot v\ge m$.
\end{proof}

\begin{corollary}[An automatic shell window]\label{cor:completion-window}
In Lemma~\ref{lem:completion}, suppose all nonzero heads $u$ are
minimal vectors of one lattice of minimum squared norm $m$, and
$q>m$. For any shell norm $b$ with $m<b\le q$, every lattice vector
$v$ of squared norm $b$ satisfies~\eqref{eq:completion-old}.
Hence any set of lines from that shell satisfying
\eqref{eq:completion-new} gives an equatorial completion.
\end{corollary}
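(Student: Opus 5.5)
The plan is to check the single hypothesis~\eqref{eq:completion-old} of Lemma~\ref{lem:completion} directly from Lemma~\ref{lem:cross-shell}, and then invoke Lemma~\ref{lem:completion}. Let $u$ be a head (a minimal vector, so $u\cdot u=m$) and $v$ a lattice vector with $v\cdot v=b$. Since $b>m$, the two norms differ, so Lemma~\ref{lem:cross-shell} with $a=m$ gives
\[
 |\ip vu|\le\frac{m+b-m}{2}=\frac b2 .
\]

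It remains to compare $b/2$ with the required bound $\sqrt{bq}/2$. Since $b>0$, we have $b\le\sqrt{bq}$ exactly when $b\le q$, which holds by assumption. Hence $|\ip vu|\le b/2\le\sqrt{bq}/2$ for every head $u$ and every $v$ in the shell.

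Here the heads $u$ are raw lattice vectors, and the cap coordinate $u/\sqrt q$ carries a scale $q$ that may exceed the lattice minimum $m$ (in dimension~38, $m=32$ and $q=48$). So~\eqref{eq:completion-old} is a statement about raw lattice products, and the lattice norm of $u$ is $m$, not $q$. The only point needing care is this: Lemma~\ref{lem:cross-shell} must be applied with the lattice norm $m$, while the threshold uses the scale $q$. That mismatch is what opens the window $m<b\le q$.

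Condition~\eqref{eq:completion-old} therefore holds for all of $V$, whichever shell lines are chosen. Lemma~\ref{lem:completion} then says that any line set satisfying~\eqref{eq:completion-new}, together with the standing nonzero-tail hypothesis, gives a completion with $|X|+2|V|$ points. I expect no real obstacle; this is a short two-line argument.
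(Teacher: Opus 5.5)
Your argument is the paper's: Lemma~\ref{lem:cross-shell} with $a=m$ gives $|v\cdot u|\le b/2\le\sqrt{bq}/2$, i.e.\ a normalized new--old product of at most $\tfrac12\sqrt{b/q}\le\tfrac12$. The paper adds one step you should include: it derives the nonzero-tail condition from $q>m$ rather than assuming it, since a unit cap $(u/\sqrt q,t)$ with $u\cdot u=m$ has $\norm{t}^2=1-m/q>0$, and this is what leaves \eqref{eq:completion-new} as the only remaining condition.
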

\begin{proof}
Lemma~\ref{lem:cross-shell} gives $|v\cdot u|\le b/2$, so each new--old
product is at most $\frac12\sqrt{b/q}\le1/2$. It is strictly below
$1/2$ when $b<q$. The head norm $\sqrt{m/q}<1$ ensures nonzero cap tails.
\end{proof}

In the integer Leech model, $m=32$ and $q=b=48$ use the endpoint of
this window, giving $|v\cdot u|\le24$ against every minimal vector.
The new certificate lists 144 integer rows $v$ of squared norm
48. They are distinct modulo sign and satisfy
\begin{equation}\label{eq:d38-witness}
 |v\cdot v'|\le24\quad(v\ne v'),\qquad
 \max_{u\in\Sigma}|v\cdot u|\le24.
\end{equation}
Each row belongs to the same Leech lattice: for some
$\varepsilon\in\{0,1\}$ all coordinates have parity $\varepsilon$,
$((v_i-\varepsilon)/2\bmod2)_i$ is a Golay word, and
$\sum_i v_i\equiv4\varepsilon\pmod8$.
Thus only mutual products constrain selection. The checker verifies
membership and recomputes cross products independently. Both arrays
use the base construction's Golay basis.

\begin{proposition}[Contacts at the shell-window endpoint]
\label{prop:shell-endpoint-contacts}
Let $\mathcal L$ have minimum squared norm $m$, and let $v\in\mathcal L$
have squared norm $b$, where $m<b<4m$. At $q=b$, a cap with a minimal
head $u$ is in contact with the equatorial point $v/\sqrt b$ exactly when
$u$ belongs to
\[
 C_v=\{u\in\mathcal L:u^2=(v-u)^2=m\}.
\]
They are paired without fixed points by $u\mapsto v-u$.
For the Leech lattice with $(m,b)=(32,48)$, every such $C_v$ has
552 elements. Each added point in the dimension-38 construction
therefore has exactly 1656 contacts with the old cap layer.
\end{proposition}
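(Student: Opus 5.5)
The plan has three parts: the contact criterion, the free pairing of $C_v$, and the Leech count with the resulting contact number.

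\emph{Contact criterion.} Take $q=b$. The cap is $(u/\sqrt b,t)$ and the equatorial point is $v/\sqrt b$, so their product is $\ip uv/b$. Contact means this product equals $1/2$, i.e.\ $\ip uv=b/2$; the sign flip $-v$ gives $\ip uv=-b/2$ and is handled symmetrically. Expanding, $(v-u)^2=b+m-2\ip uv$. Hence $\ip uv=b/2$ holds exactly when $(v-u)^2=m$, so (for minimal $u$) contact is equivalent to $u\in C_v$. This is the equality case of Lemma~\ref{lem:cross-shell}.

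\emph{Pairing.} The map $u\mapsto v-u$ sends $C_v$ to itself, since $(v-(v-u))^2=u^2=m$, and it is an involution. A fixed point would satisfy $v=2u$, giving $b=4m$, which is excluded by $b<4m$. So the involution is fixed-point free.

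\emph{Leech count.} For $(m,b)=(32,48)$, i.e.\ norms $4$ and $6$ in standard scaling, I would show that $|C_v|$ is constant on the norm-6 shell. Two routes are available.
\begin{itemize}
\item \emph{Transitivity.} Use the transitivity of $\mathrm{Co}_0$ on norm-6 vectors, and evaluate at the explicit $v$ of Section~\ref{sec:motion25}, where the paper already records $|U|=552$ with $\ip uv=-3$ (replace $v$ by $-v$).
\item \emph{Design argument, avoiding the group.} The minimal shell is a spherical $11$-design. For $\ip uv\in\{0,\pm1,\pm2,\pm3\}$ (in standard scaling, where $\ip uv\le3$ by Lemma~\ref{lem:cross-shell}), the moment identities for $\sum\ip uv^{2j}$ with $j\le5$ determine the layer counts $n_0,\dots,n_3$ uniquely once $n_{\pm k}$ are paired by $u\mapsto-u$. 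This gives $n_3=552$ for every $v$ of norm $6$.
\end{itemize}
I would present the design argument and cite the explicit check as confirmation. That step, solving the moment system (four unknowns and enough equations), is the main computation. I would first try writing the moments for $j=0,1,2,3$, since these already suffice for four unknowns.

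\emph{Contacts per added point.} Each added point $\pm v/\sqrt{48}$ meets caps $(\eps u/\sqrt{48},d/\sqrt{24})$ with product $\eps\ip uv/48$. Contact requires $\eps u\in C_{v}$ (for $+v$), and each mother line $\pm u$ carries three tails $d$. The number of lines with $\ip uv=\pm24$ is $552$: each line contributes exactly one sign to $C_v\cup C_{-v}$, since $C_{-v}=-C_v$. Each such line contributes one signed head times three tails, so there are exactly $3\cdot552=1656$ contacts.
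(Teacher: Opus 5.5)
Your proposal is correct and follows the paper's proof: you derive the contact criterion from $(v-u)^2=b+m-2\ip{u}{v}$, use the fixed-point-free involution, and count the extreme layer by spherical-design moments. The paper extracts that layer directly with the isolating polynomial $a^2(a^2-1)(a^2-4)$, giving $|C_v|=(M_6-5M_4+4M_2)/720=552$, which is exactly your $j\le3$ Vandermonde solve. It then multiplies by three tails per oriented head, as you do. One wording fix: a line $\pm u$ with $\ip{u}{v}=\pm24$ meets $C_v$, not $C_v\cup C_{-v}$, in exactly one vector, and that is why there are $552$ such lines.
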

\begin{proof}
A minimal head is in contact precisely when $v\cdot u=b/2$,
equivalently $(v-u)^2=m$. The map is an involution; a fixed point
would give $v=2u$ and $b=4m$.

Rescale to minimum four. The Leech minimal shell $L$ is an
11-design~\cite[Theorem 4.1]{nebe-designs}, so for $v^2=s$,
\begin{equation}\label{eq:leech-shell-moments}
 M_{2j}(s):=\sum_{x\in L}(v\cdot x)^{2j}
 =\frac{196560(4s)^j(2j-1)!!}{24\cdot26\cdots(24+2j-2)}
 \quad(1\le j\le5).
\end{equation}
For $s=6$, integrality and the minimum give $|v\cdot x|\le3$.
The polynomial $a^2(a^2-1)(a^2-4)$ isolates the extremes; antipodality gives
\[
 |C_v|=\frac{M_6(6)-5M_4(6)+4M_2(6)}{720}=552.
\]
Each oriented minimal head in~\eqref{eq:d38-base-caps} occurs at
three tails, giving $3(552)=1656$ old cap contacts. Pure-tail points
are orthogonal to the added equatorial point.
\end{proof}

The contact checker verifies the count and involution in integers
for all 144 stored lines.

\begin{theorem}\label{thm:d38}
There is a kissing configuration of 591900 points in dimension 38.
\end{theorem}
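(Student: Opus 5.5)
The configuration is the 591612-point base configuration with the 144 added lines adjoined through the equatorial completion of Lemma~\ref{lem:completion}. I take the base $X$ to consist of the caps~\eqref{eq:d38-base-caps} and the 1932 pure-tail points, and I take $V$ to be the 144 certified integer rows of squared norm $b=48$. The cap heads are $\eps u/\sqrt{48}$ with $u\in\Sigma$. In the notation of Lemma~\ref{lem:completion} this means the heads are $u/\sqrt q$ with $q=48$, so $q=b$. The proof then has three parts: validate the base, check the two hypotheses of the lemma, and count.

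\textbf{Base configuration.} Validity of the base is the argument of the preceding subsection. The cap--cap products reduce to the class bound (absolute inner product at most $8$ within a class), the universal shell bound $16$, and the tail products $-4$ inside a triple and at most $4$ otherwise; the text already carried out these three cases. The pure-tail--cap and pure-tail--pure-tail pairs follow from the integer identities~\eqref{eq:d38-rotation}. The base is a finite certificate: the class partition, the triples and the rotation are verified by exact integer arithmetic.

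\textbf{Completion hypotheses.} Every cap has tail $d/\sqrt{24}\ne0$, so the nonzero-tail assumption of Lemma~\ref{lem:completion} holds. For~\eqref{eq:completion-old}, Corollary~\ref{cor:completion-window} applies with $m=32<b=q=48$: each added row lies in the Leech lattice by the parity and Golay membership test stated above, so Lemma~\ref{lem:cross-shell} gives $|v\cdot u|\le(48+32-32)/2=24=\sqrt{bq}/2$ automatically. The only remaining hypothesis is~\eqref{eq:completion-new}, namely $|v\cdot v'|\le24$ for distinct rows, together with distinctness modulo sign. Both are exact finite checks in the certificate~\eqref{eq:d38-witness}.

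\textbf{Count.} Lemma~\ref{lem:completion} then gives a kissing configuration with $591612+2\cdot144=591900$ points in $\R^{24+14}=\R^{38}$. The new points are distinct from the old ones because they have zero tail. They are distinct from each other because the rows are distinct up to sign. The main obstacle is not the proof itself but finding a large family of norm-48 Leech lines with mutual absolute products at most $24$. Proposition~\ref{prop:shell-endpoint-contacts} explains why these lines are tight against the cap layer, yet it does not constrain how they are selected. For the theorem, the certificate check of~\eqref{eq:d38-witness} suffices.
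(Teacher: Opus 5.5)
Your proposal is correct and follows the paper's proof: you apply Lemma~\ref{lem:completion} with $q=b=48$ to the 591612-point base and the 144-line witness, separate old and new points by their tails, and count $591612+2\cdot144=591900$. The one difference is that you obtain the cap condition~\eqref{eq:completion-old} from Leech membership via Corollary~\ref{cor:completion-window}, whereas the paper's proof cites the directly checked bound $\max_{u\in\Sigma}|v\cdot u|\le24$ in~\eqref{eq:d38-witness}; the paper itself remarks just before the theorem that this bound is automatic for the reason you give.
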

\begin{proof}
Apply Lemma~\ref{lem:completion} with $q=b=48$ to the base
configuration~\eqref{eq:d38-base-caps} and the 144-line witness.
Its hypotheses are exactly~\eqref{eq:d38-witness}. The new points
have zero tail, while every old cap has tail norm $1/\sqrt3$;
the old pure-tail points have zero head. Thus the count is
$591612+2(144)=591900$.
\end{proof}

The norm-48 search produced a 142-line witness and then the 144-line
witness used here. All minimal directions occur in the caps, but
they do not exhaust the admissible equatorial directions.

\subsection{Multiple shells and the next-shell obstruction}
The automatic window controls new--old pairs, not pairs between
different added shells. The distinction has an exact formulation.

\begin{proposition}[Multi-shell completion]\label{prop:multi-shell}
Under the hypotheses of Corollary~\ref{cor:completion-window}, let
$V_i\subset\mathcal L$ represent distinct antipodal lines of squared norm $b_i$, where
$m<b_i\le q$. Adjoining $(\pm v/\sqrt{b_i},0)$ for every $v\in V_i$
gives $|X|+2\sum_i|V_i|$ distinct kissing points if and only if
\begin{equation}\label{eq:multi-shell-separation}
 |v\cdot w|\le\frac{\sqrt{b_i b_j}}2
 \quad\Longleftrightarrow\quad
 \min\{\|v-w\|^2,\|v+w\|^2\}\ge b_i+b_j-\sqrt{b_i b_j}
\end{equation}
for every two distinct labelled representatives $(v,i),(w,j)$.
\end{proposition}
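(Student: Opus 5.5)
The plan is to reduce to a pair-by-pair analysis, in the same way as Lemma~\ref{lem:completion}, and use Corollary~\ref{cor:completion-window} to dispose of every pair involving an old point.

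\emph{Pairs involving old points.} Old--old pairs are compatible because $X$ is a kissing configuration. Pure tails are orthogonal to every added point, since added points have zero tail. For a cap $(u/\sqrt q,t)$ and an added point $\pm v/\sqrt{b_i}$, Lemma~\ref{lem:cross-shell} gives $|v\cdot u|\le b_i/2$, because $b_i>m$. The product is therefore at most $\frac12\sqrt{b_i/q}\le\frac12$.

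\emph{Distinctness.} The hypotheses of Corollary~\ref{cor:completion-window} force caps to have nonzero tails, so added points, which have zero tail, cannot coincide with caps. They also cannot coincide with pure tails, which have zero head. Among the added points themselves, $v/\sqrt{b_i}=\pm w/\sqrt{b_j}$ would make $v$ and $w$ parallel. I will read "distinct labelled representatives of distinct lines" as excluding this, and I will note that under~\eqref{eq:multi-shell-separation} it is impossible anyway, since parallel representatives would give $|v\cdot w|=\sqrt{b_ib_j}>\sqrt{b_ib_j}/2$. The two points $\pm v/\sqrt{b_i}$ of a single line have product $-1$.

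\emph{The remaining pairs.} These are $\pm v/\sqrt{b_i}$ against $\pm w/\sqrt{b_j}$ for distinct labelled representatives. Both sign choices occur, so compatibility holds if and only if $|v\cdot w|/\sqrt{b_ib_j}\le1/2$. This gives sufficiency and necessity of the left condition in~\eqref{eq:multi-shell-separation} together, and the count $|X|+2\sum_i|V_i|$ follows from distinctness. For the stated equivalence, expand
\[
\|v\mp w\|^2=b_i+b_j\mp2v\cdot w .
\]
The minimum of these two equals $b_i+b_j-2|v\cdot w|$, so the condition $|v\cdot w|\le\sqrt{b_ib_j}/2$ is exactly the right-hand inequality.

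The only delicate point is the "only if" direction. I need to ensure that a violation cannot be rescued, and it cannot: the construction adjoins both signs by definition, so one of the two sign pairs realizes $+|v\cdot w|$. I also need to handle duplicate lines across shells. These cannot arise, because lines of different norms are different vectors and any parallel coincidence is excluded as above.
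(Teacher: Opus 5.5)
Your proposal is correct and follows the paper's own proof: the automatic shell bound (Lemma~\ref{lem:cross-shell}) handles all new--old pairs, adjoining both signs of every line reduces each new--new condition to $|v\cdot w|/\sqrt{b_ib_j}\le1/2$, and expanding $\|v\pm w\|^2=b_i+b_j\pm2v\cdot w$ gives the stated equivalence. One small correction to your last paragraph: parallel representatives of different norms do normalize to the same point, so the difference in norms does not exclude them. What excludes them is the separation condition in the sufficiency direction and the assumed count of distinct points in the necessity direction, as you had already observed.
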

\begin{proof}
All new--old products satisfy the automatic shell bound. Products
between the two signs of one new line equal $-1$, while the largest
product between two labelled lines is $|v\cdot w|/\sqrt{b_i b_j}$.
This proves necessity, sufficiency and distinctness. The equivalence
follows from $\|v\pm w\|^2=b_i+b_j\pm2v\cdot w$.
\end{proof}

The ambient minimum bound is weaker than the required separation:
for $b,c>m$,
\[
 b+c-\sqrt{bc}\ge\sqrt{bc}>m.
\]
In the current Leech normalization, evenness makes every squared norm
a multiple of 16. The automatic interval $32<b\le48$ therefore contains
only the norm-48 shell. The next shell is excluded more strongly:

\begin{proposition}[No norm-64 equatorial addition]\label{prop:no-norm64}
Every norm-64 Leech vector has exactly 46 minimal vectors $u$ with
$v\cdot u=32$, and 46 with $v\cdot u=-32$.
Neither $v/8$ nor $-v/8$ can be adjoined at the equator of the
unchanged base configuration~\eqref{eq:d38-base-caps}.
\end{proposition}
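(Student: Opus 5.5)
Following the proof of Proposition~\ref{prop:shell-endpoint-contacts}, I rescale to minimum four, so a norm-64 vector becomes $v$ with $v^2=8$ and minimal vectors have $x^2=4$. For $x\in L$, integrality and the minimum bound give $|v\cdot x|\le4$: indeed $(v-x)^2=12-2v\cdot x$ must be $0$ or at least $4$, and it cannot be $0$ because $v\ne x$. Hence $v\cdot x\in\{-4,\dots,4\}$. The value $\pm4$ corresponds to $(v\mp x)^2=4$, i.e.\ $v\mp x$ minimal. This is exactly the original statement $v\cdot u=\pm32$ at squared norm $32$.

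To count these extremes, I use the design moments~\eqref{eq:leech-shell-moments} with $s=8$ and $j=1,\dots,4$ (all within the $11$-design range). The polynomial
\[
a^2(a^2-1)(a^2-4)(a^2-9)
\]
vanishes on $|a|\le3$ and equals $16\cdot15\cdot12\cdot7=20160$ at $a=\pm4$. Expanding it as $a^8-14a^6+49a^4-36a^2$ gives
\[
N_{\pm4}=\frac{M_8(8)-14M_6(8)+49M_4(8)-36M_2(8)}{20160},
\]
which counts the extreme vectors in total. By antipodality of $L$ the two signs occur equally often, so the count with $v\cdot x=4$ is half of $N_{\pm4}$. The arithmetic, which should give $92$ and hence $46$ for each sign, is routine. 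I expect it to be the only computational point. As a cross-check I would compare with the classical Leech decomposition of $v=x+y$ into pairs of minimal vectors, which gives $23$ unordered pairs and so $46$ ordered ones.

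For the non-adjunction claim, I note that in the unchanged base every oriented minimal head $\eps u$ appears in caps $(\eps u/\sqrt{48},d/\sqrt{24})$ for each $d$ in its triple. I take $u$ minimal with $v\cdot u=32$ (unscaled), which exists by the count. The cap with head $u$ then has product with $v/8$ equal to
\[
\frac{v\cdot u}{8\sqrt{48}}=\frac{32}{8\sqrt{48}}=\frac{4}{\sqrt{48}}=\frac{1}{\sqrt3}>\frac12.
\]
For $-v/8$, I use the cap with head $-u$, or equivalently a minimal vector with product $-32$. Both orientations are present because each line contributes both signs $\eps$, so neither equatorial point can be added. The only conceptual care needed is confirming that every minimal line is occupied in the base, which holds because the $98280$ lines are all assigned triples.
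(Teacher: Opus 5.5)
Your proposal is correct and matches the paper's proof: the same rescaling to minimum four, the same extreme-value polynomial $a^2(a^2-1)(a^2-4)(a^2-9)$ applied to the Leech $11$-design moments, and the same forbidden product $1/\sqrt3$ against caps whose heads have product $\pm32$. The only difference is that the paper divides by $40320=2\cdot 20160$ to get the one-sided count directly; your numerator does evaluate to $1854720=92\cdot 20160$, so halving gives the stated $46$ for each sign.
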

\begin{proof}
Rescale to minimum four and $v^2=8$. For $a_x=v\cdot x$ on the
minimal shell, integrality and the minimum give $|a_x|\le4$.
Using~\eqref{eq:leech-shell-moments} and the extreme-value polynomial
$P(a)=a^2(a^2-1)(a^2-4)(a^2-9)$ gives
\[
 |\{x:a_x=4\}|=
 \frac{M_8(8)-14M_6(8)+49M_4(8)-36M_2(8)}{40320}=46.
\]
The negative count is the same. Returning to integer coordinates,
each of these positive heads gives the forbidden product
$32/\sqrt{64\cdot48}=1/\sqrt3>1/2$ with $v/8$.
Only this extreme positive product violates the bound. Every oriented
minimal head appears at three tails, so each proposed endpoint has
exactly 138 conflicts with the old cap layer.
\end{proof}

\section{Seven lifts from a compatible line class}\label{app:p48}

One compatible $P_{48p}$ line class gives seven antipodal block lifts.
Their gains depend on both the class size and the union of its selected images.

\subsection{The tail configurations}
For dimensions $k=1,\ldots,7$, use respectively the normalized root
configurations
$A_1,A_2,D_3,D_4,D_5,E_6,E_7$. They have
$2,6,12,24,40,72,126$ points and are antipodal. Explicitly,
$A_1=\{-1,1\}$, and $A_2$ is the six points
$(e_i-e_j)/\sqrt2$ in the plane $x_1+x_2+x_3=0$. The $D_k$ roots
are $(\eps e_i+\eps'e_j)/\sqrt2$, with $i<j$. Their kissing
conditions follow directly from their coordinate overlaps.

The last two codes come from the integer $E_8$ model
\begin{equation}\label{eq:E8}
 W=\{\pm2e_i\pm2e_j:i<j\}
  \cup\{(\eps_0,\ldots,\eps_7):\eps_i\in\{\pm1\},\ \prod_i\eps_i=1\}.
\end{equation}
Coordinate overlaps and even sign distance give squared norm eight
and distinct-row products at most four. The subsets
\[
 W_7=\{w\in W:w_6+w_7=0\},\qquad
 W_6=\{w\in W:w_5=w_6=-w_7\}
\]
have respectively $62+64=126$ and $40+32=72$ rows. The maps
\[
 w\longmapsto(w_0,\ldots,w_5,\sqrt2w_6),\qquad
 w\longmapsto(w_0,\ldots,w_4,\sqrt3w_5)
\]
are isometries on the indicated subspaces. Dividing by $\sqrt8$
therefore gives the required $7$- and $6$-dimensional antipodal codes.

\subsection{Rigidity of antipodal lifts with pure tails}
Pure tails fix the lifting geometry even when paired heights may differ
and depend on the head.

\begin{proposition}[Paired-lift rigidity]\label{prop:paired-rigidity}
Let $C\subset S^{D-1}$ and $Y\subset S^{k-1}$ be kissing
configurations, with $Y$ containing distinct antipodal pairs
$\{\pm u_i\}$, $1\le i\le t$. Choose $B_i\subseteq C$ and, for
each $x\in B_i$, heights $0<h_{i,x}^+,h_{i,x}^-<1$. Retain
$C\setminus\bigcup_iB_i$ at the equator, retain all of $Y$ as pure
tails, and insert
\[
 \left(\sqrt{1-(h_{i,x}^{\pm})^2}\,x,
                         \ \pm h_{i,x}^{\pm}u_i\right).
\]
The resulting set is a kissing configuration if and only if every
height is $1/2$, the sets $B_i$ are pairwise disjoint, and distinct
heads in each $B_i$ have inner product at most $1/3$.
\end{proposition}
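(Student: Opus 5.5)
For the sufficiency direction I would invoke Theorem~\ref{thm:block-lift} directly. If every height equals $1/2$, then $\sqrt{1-(h^\pm_{i,x})^2}=\sqrt3/2$. The inserted points are then exactly $(\tfrac{\sqrt3}{2}x,\tfrac\eps2u_i)$ with $x\in B_i$, which is the configuration~\eqref{eq:antipodal-lift}. With disjoint blocks and block threshold $1/3$, the hypotheses of that theorem hold verbatim, so the set is a kissing configuration. All the work is therefore in necessity.

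For necessity, assume the set is a kissing configuration and fix $i$ and $x\in B_i$. Write $a^\pm=\sqrt{1-(h^\pm)^2}$, where $h^\pm=h^\pm_{i,x}$.

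\emph{Step 1: the heights are at most $1/2$.} The pure tails $\pm u_i$ both belong to $Y$ and are retained. The point $(a^+x,h^+u_i)$ has product $h^+$ with the pure tail $(0,u_i)$, so $h^+\le1/2$. Symmetrically, testing $(a^-x,-h^-u_i)$ against $(0,-u_i)$ gives $h^-\le1/2$.

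\emph{Step 2: the heights are at least $1/2$.} The two copies of $x$ at $\pm u_i$ have product $a^+a^--h^+h^-\le1/2$. Step~1 gives $a^\pm\ge\sqrt3/2$ and $h^+h^-\le1/4$. Hence the product is at least $3/4-1/4=1/2$, with equality only when $h^+=h^-=1/2$. The kissing condition forces equality, so every height is exactly $1/2$.

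\emph{Step 3: the blocks are disjoint.} Suppose $x\in B_i\cap B_j$ with $i\ne j$. The four points $(\tfrac{\sqrt3}{2}x,\pm\tfrac12u_i)$ and $(\tfrac{\sqrt3}{2}x,\pm\tfrac12u_j)$ are present. Comparing $(\tfrac{\sqrt3}{2}x,\tfrac12u_i)$ with both copies over $u_j$ requires
\[
\tfrac34\pm\tfrac14\ip{u_i}{u_j}\le\tfrac12 .
\]
Taking both signs gives $\ip{u_i}{u_j}\le-1$ and $\ip{u_i}{u_j}\ge1$, which is impossible. This uses only that the unit vectors $u_i$ and $u_j$ are distinct points; the distinctness of the antipodal pairs matters only to ensure these really are different labelled points.

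\emph{Step 4: the block threshold.} For distinct $x,x'\in B_i$ at the same tail $u_i/2$, the product is $\tfrac34\ip{x}{x'}+\tfrac14\le\tfrac12$. This rearranges to $\ip{x}{x'}\le1/3$.

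The only delicate point is Step~2. That step uses the pure tails in an essential way: they cap each height from above, and the reused direction then forces it from below. The rest is bookkeeping of the pair types in Theorem~\ref{lem:lift}.
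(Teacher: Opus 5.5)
Your proof is correct and follows the paper's argument essentially step for step. The pure tails cap each height at $1/2$, and the two copies of a reused direction then force equality; your bound $a^+a^--h^+h^-\ge 3/4-1/4$ is the paper's $\cos(\theta_++\theta_-)$ with $\theta_\pm\le\pi/6$. A head shared by two blocks gives products $3/4\pm\frac14\ip{u_i}{u_j}$, and one of these exceeds $1/2$. The same-tail condition gives the $1/3$ threshold, and Theorem~\ref{thm:block-lift} gives sufficiency.

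One wording point in Step 3: what you need is $u_j\ne\pm u_i$, i.e.\ distinct antipodal pairs, not merely $u_i\ne u_j$. This is what makes both copies over $u_j$ differ from the copy over $u_i$.
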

\begin{proof}
The pure tail of the same sign requires $h_{i,x}^{\pm}\le1/2$.
Write these two heights as $\sin\theta_+$ and $\sin\theta_-$,
where $0<\theta_{\pm}\le\pi/6$. The inner product of the two
copies of $x$ is $\cos(\theta_++\theta_-)$. It is at most $1/2$
only if $\theta_++\theta_-\ge\pi/3$, forcing both heights to
equal $1/2$.

If the same head belongs to two blocks, choose the signs of its tails
to make their product nonnegative. The two corresponding points have
inner product $3/4+|\langle u_i,u_j\rangle|/4\ge3/4$, a
contradiction. Within one block, the same-tail condition is exactly
$(3/4)\langle x,x'\rangle+1/4\le1/2$. These prove necessity;
Theorem~\ref{thm:block-lift} proves sufficiency.
\end{proof}

Now suppose $C$ is antipodal. For a pool
of compatible antipodal line classes $\mathcal B_1,\ldots,\mathcal B_m$ of $C$,
suppose each of $t\le m$ tail pairs is assigned a subset of one selected
class. The largest configuration obtainable in this model has exactly
\begin{equation}\label{eq:pool-optimum}
 |C|+|Y|+2\max_{|I|=t}\left|\bigcup_{i\in I}\mathcal B_i\right|
\end{equation}
points: disjointness gives the upper bound, and assigning each line,
with both signs, to its first selected class attains it.

\subsection{An orbitwise image-union guarantee}
Standard averaging and conditional expectations~\cite{raghavan1988}
give the following bound without assuming transitivity.

\begin{proposition}[Orbitwise image union]\label{prop:orbit-union}
Let a finite group $\mathcal G$ act on a finite set $\Omega$, with
orbits $O_j$ of sizes $N_j$. For $\mathcal B\subseteq\Omega$, put
$b_j=|\mathcal B\cap O_j|$ and $q_j=1-b_j/N_j$. For every integer
$t\ge1$, there are $g_1,\ldots,g_t\in\mathcal G$ such that
\begin{equation}\label{eq:orbit-union}
 \left|\bigcup_{i=1}^t\mathcal B g_i\right|
 \ge\left\lceil\sum_jN_j(1-q_j^t)\right\rceil.
\end{equation}
They can be selected deterministically: if $U_r$ is the union after
$r<t$ choices, choose $g_{r+1}\in\mathcal G$ to maximize
\begin{equation}\label{eq:orbit-selection}
 \sum_jq_j^{\,t-r-1}
       |(\mathcal B g_{r+1}\setminus U_r)\cap O_j|.
\end{equation}
Here $q_j^0=1$, including when $q_j=0$.
\end{proposition}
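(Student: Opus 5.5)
The plan is the method of conditional expectations with a pessimistic estimator. Let $g_1,\ldots,g_t$ be independent and uniform in $\mathcal G$. Fix $\omega\in O_j$. As $g$ is uniform, the image $\omega g^{-1}$ is uniformly distributed on $O_j$, because each point of the orbit is hit by exactly $|\mathcal G|/N_j$ group elements. Hence
\[
\Pr\bigl(\omega\in\mathcal Bg\bigr)=\Pr\bigl(\omega g^{-1}\in\mathcal B\cap O_j\bigr)=b_j/N_j ,
\]
and $\Pr(\omega\notin\bigcup_i\mathcal Bg_i)=q_j^t$ by independence. Summing over $\Omega$ gives
\[
\mathbb E\Bigl|\bigcup_{i=1}^t\mathcal Bg_i\Bigr|=\sum_jN_j(1-q_j^t).
\]
Some choice attains at least the expectation, and since the union size is an integer it is at least the ceiling. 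This already proves existence in \eqref{eq:orbit-union}.

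For the deterministic selection, take the potential after $r$ fixed choices to be the conditional expectation of the final union size when the remaining $t-r$ elements are uniform:
\[
\Phi_r(U_r)=|U_r|+\sum_j q_j^{\,t-r}\,|O_j\setminus U_r|.
\]
This holds because a point outside $U_r$ in $O_j$ is missed by all remaining images with probability $q_j^{t-r}$. The key steps run in order.

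First, $\Phi_0=\sum_jN_j(1-q_j^t)$ with $U_0=\varnothing$. Indeed $\Phi_0=\sum_jq_j^tN_j$ counts the expected uncovered points, so one should write $\Phi$ as $|\Omega|$ minus the expected uncovered mass. Equivalently, I will use $\Phi_r=|\Omega|-\sum_jq_j^{t-r}|O_j\setminus U_r|$ and check $\Phi_0=\sum_j N_j(1-q_j^t)$ and $\Phi_t=|U_t|$ using $q_j^0=1$.

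Second, $\Phi_r$ equals the average over $g\in\mathcal G$ of $\Phi_{r+1}(U_r\cup\mathcal Bg)$. This is the tower property, or directly the single-step computation that a point of $O_j\setminus U_r$ stays uncovered with probability $q_j$.

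Third, expand $\Phi_{r+1}(U_r\cup\mathcal Bg)$. It equals
\[
|\Omega|-\sum_jq_j^{t-r-1}|O_j\setminus U_r|+\sum_jq_j^{t-r-1}\bigl|(\mathcal Bg\setminus U_r)\cap O_j\bigr| .
\]
The first two terms do not depend on $g$, so maximizing $\Phi_{r+1}$ is exactly maximizing \eqref{eq:orbit-selection}.

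Fourth, the maximizer is at least the average, so $\Phi_{r+1}\ge\Phi_r$. Inducting, $|U_t|=\Phi_t\ge\Phi_0$, and integrality gives the ceiling.

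The main subtlety is the convention $q_j^0=1$ when $q_j=0$. This matters when $\mathcal B\supseteq O_j$. In that case the last step must still count newly covered points, and the convention ensures $\Phi_t=|U_t|$ exactly. I also need uniformity of $\omega g^{-1}$ on the orbit, which follows from the orbit--stabilizer theorem. No transitivity assumption is needed, since everything is computed orbit by orbit.
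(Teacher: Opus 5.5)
Your proof is correct and is essentially the paper's argument. It uses the same conditional-expectation potential $\Phi_r=\sum_j\bigl[N_j-q_j^{\,t-r}(N_j-|U_r\cap O_j|)\bigr]$, the one-step averaging identity, the fact that the $g$-dependent part of $\Phi_{r+1}$ is exactly the score \eqref{eq:orbit-selection}, and integrality of $|U_t|=\Phi_t$. Discard your first displayed formula $|U_r|+\sum_j q_j^{\,t-r}|O_j\setminus U_r|$, which needs $1-q_j^{\,t-r}$ in place of $q_j^{\,t-r}$; the corrected form you switch to is the right one.
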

\begin{proof}
Each point of $O_j$ is missed by a uniform image with probability $q_j$,
and by $t$ independent images with probability $q_j^t$. After $r$ choices
the conditional expected union size is
\[
 \Phi_r=\sum_j\bigl[N_j-q_j^{\,t-r}(N_j-|U_r\cap O_j|)\bigr].
\]
Rule~\eqref{eq:orbit-selection} maximizes $\Phi_{r+1}$, whose average
over the next image is $\Phi_r$. Hence $|U_t|=\Phi_t\ge\Phi_0$;
integrality gives the ceiling.
\end{proof}

For compatible line classes, each covered line contributes two points
to the lift. A transitive action gives the expected union
$N[1-(1-b/N)^t]$ for $N$ lines and a $b$-line class. The seven bounds
below use certified image unions, not this average.

\subsection{Selection from a finite image pool}
For a directly computable guarantee, let $B_1,\ldots,B_m$ be a finite
pool of supplied line classes, put $\Omega=\bigcup_iB_i$, and let
$a_x=|\{i:x\in B_i\}|$. Distinct indices are selected without
replacement; the classes themselves may overlap.

\begin{proposition}[Finite-pool selection]\label{prop:finite-pool}
For $0\le t\le m$, some $t$ classes have union size at least
\begin{equation}\label{eq:finite-pool-bound}
 \left\lceil\sum_{x\in\Omega}
 \left(1-\frac{\binom{m-a_x}{t}}{\binom mt}\right)\right\rceil.
\end{equation}
Such a selection can be obtained using integer scores. At a stage with
$M$ classes remaining and $h$ choices still to make, let $U$ be the
current union and $a_x$ the number of remaining classes containing
$x\notin U$. Choose a remaining class $B_i$ maximizing
\begin{equation}\label{eq:finite-pool-score}
 \sum_{x\in B_i\setminus U}\binom{M-1-a_x}{h-1}.
\end{equation}
Here a binomial coefficient is zero if its upper index is negative
or smaller than its lower index.
\end{proposition}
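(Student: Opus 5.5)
We argue by the method of conditional expectations, exactly parallel to Proposition~\ref{prop:orbit-union}, but sampling a uniformly random $t$-subset $I$ of the $m$ indices without replacement in place of independent group elements. For a fixed $x\in\Omega$, the union $\bigcup_{i\in I}B_i$ misses $x$ exactly when $I$ avoids the $a_x$ indices whose classes contain $x$. The number of such $t$-subsets is $\binom{m-a_x}{t}$, so the miss probability is $\binom{m-a_x}{t}/\binom mt$. Linearity of expectation then shows that the expected union size equals the sum in~\eqref{eq:finite-pool-bound}. Some $t$-subset attains at least the expectation, and since union sizes are integers the ceiling follows. The degenerate cases $t=0$ (empty sum, union size zero) and $a_x>m-t$ (zero binomial, so $x$ is always covered) are consistent with the stated convention for binomial coefficients.

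For the deterministic selection, we define a potential at an intermediate stage. Let $U$ be the current union, $M$ the number of remaining classes, and $h$ the number of choices still to make. Put
\[
 \Phi=|U|+\sum_{x\notin U}\Bigl(1-\binom{M-a_x}{h}\Big/\binom Mh\Bigr),
\]
with $a_x$ counted among the remaining classes. This $\Phi$ is the conditional expectation of the final union size when the remaining $h$ indices are chosen uniformly from the remaining $M$. At the start it equals the bound in~\eqref{eq:finite-pool-bound}, and at the end ($h=0$) it equals $|U|$.

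We next show that $\Phi$ is the average over the next choice $i$ of the updated potential $\Phi_i$. Choosing a uniformly random first element of the remaining $h$-set and then a uniform $(h-1)$-subset of the remaining $M-1$ classes reproduces the uniform $h$-subset distribution. Hence some $i$ satisfies $\Phi_i\ge\Phi$.

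It remains to identify the maximizer of $\Phi_i$ with the integer score~\eqref{eq:finite-pool-score}. After choosing $B_i$, consider a point $x\in B_i\setminus U$: it becomes covered and contributes exactly $1$. For a point $x\notin U\cup B_i$, index $i$ was not among its $a_x$ classes, so its miss probability becomes $\binom{M-1-a_x}{h-1}/\binom{M-1}{h-1}$. For a point $x\notin U$ with $x\in B_i$, we had $a_x\ge1$ and the point is now covered. Thus, up to a term independent of $i$,
\[
 \Phi_i=\mathrm{const}+\sum_{x\in B_i\setminus U}\binom{M-1-a_x}{h-1}\Big/\binom{M-1}{h-1}.
\]
To see this, write each uncovered $x$'s post-choice miss term as $\binom{M-1-a_x}{h-1}/\binom{M-1}{h-1}$ whenever $i$ is not one of its classes. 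If $x\in B_i$, that term is replaced by $0$, so the gain is precisely the displayed summand. The denominator is common to all $i$, so maximizing $\Phi_i$ is the same as maximizing the integer score~\eqref{eq:finite-pool-score}.

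The main obstacle is this bookkeeping. It requires care with the shift $a_x\mapsto a_x$ versus $a_x-1$ when $i$ itself contains $x$, and with the binomial-zero conventions when $M-1-a_x<h-1$. Once these are handled, iterating $\Phi$ nondecreasing from stage to stage gives a final union of size at least the initial expectation, and integrality yields the ceiling.
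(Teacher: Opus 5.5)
Your proposal is correct and takes essentially the same route as the paper: you average over uniform $t$-subsets for the bound, then use conditional expectations, where the expected final union after choosing $B_i$ equals an $i$-independent constant plus the integer score divided by $\binom{M-1}{h-1}$. Your explicit potential $\Phi$ and your check that a uniform first index followed by a uniform $(h-1)$-subset gives the uniform $h$-subset law just spell out the paper's one-line claim that the average over the next choice equals the current expectation.
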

\begin{proof}
A uniform $t$-subset of the $m$ indices misses $x$ with probability
$\binom{m-a_x}{t}/\binom mt$, giving~\eqref{eq:finite-pool-bound}
by averaging. Conditional on the choices already made, the expected
final size after choosing $i$ is
\[
 |\Omega|-\frac{1}{\binom{M-1}{h-1}}
       \sum_{x\notin U\cup B_i}\binom{M-1-a_x}{h-1}.
\]
Thus maximizing~\eqref{eq:finite-pool-score} maximizes this
conditional expectation. Points in every remaining class contribute
equally for every choice and are covered at this step; the stated
zero convention handles them. The average over the next choice is
the current expectation, so it never decreases. At the final step
it equals the actual union size, proving the bound.
\end{proof}

The implementation in \texttt{constructions/p48/coverage.py} collects
equal incidence sets $\{i:x\in B_i\}$ with their multiplicities.
Its subsequent exchanges use exact union gains, preserving the guarantee.

\subsection{The 7077-line class and its automorphism images}
\label{sec:p48-update}
We use the catalogue lattice $P_{48p}$ in its published
integral Gram basis $G$~\cite{p48p-catalogue}. Its minimum squared norm
is six. Since it is even unimodular of rank 48, the theta-series
calculation in~\eqref{eq:theta48} gives 52416000 minimal vectors.
The witnesses use this basis directly, not the quadratic-residue
neighbour of Section~\ref{sec:qr-neighbour}.

Represent a minimal line by an integer row $c$ satisfying
$cGc^t=6$, identifying $c$ and $-c$. The new mother class
$\mathcal C$ has 7077 lines and satisfies
\begin{equation}\label{eq:p48-class}
 |cGc'^t|\le2\qquad(c\ne\pm c',\ c,c'\in\mathcal C).
\end{equation}
Every norm and pair product in this assertion is an integer.
The certificate supplies the complete class and matrices $A_i$ with
$A_iGA_i^t=G$. Thus $\mathcal C A_i$ is another class with the
same property.

For $t$ selected images, assign each line to the first image in which
it occurs. Write $\mathcal D_i$ for the assigned subset of
$\mathcal C A_i$, and put
\[
 S=\left|\bigcup_{i=1}^t\mathcal C A_i\right|
   =\sum_{i=1}^t|\mathcal D_i|.
\]
The $\mathcal D_i$ are disjoint as line sets. Taking both signs of
each assigned line gives disjoint vector blocks $B_i$ of size
$2|\mathcal D_i|$, with normalized inner products at most $1/3$
by~\eqref{eq:p48-class}. Assign these blocks injectively to the $t$
distinct antipodal lines of the corresponding root tail code.
Theorem~\ref{thm:block-lift} gives
\begin{equation}\label{eq:p48-lines-count}
 K(48+k)\ge52416000+2S+\tau_k,
 \qquad \tau_k=2t.
\end{equation}

\begin{theorem}\label{thm:p48-update}
The class and image matrices in the supplementary data give the
seven bounds in Table~\ref{tab:p48-update}.
\end{theorem}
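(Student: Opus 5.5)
The proof is a direct application of Theorem~\ref{thm:block-lift} to the supplied finite data. The work consists of exact integer checks plus the counting identity~\eqref{eq:p48-lines-count}, carried out in the following order.

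\emph{Step 1: the base.} The catalogue Gram matrix $G$ defines $P_{48p}$. I will confirm that $G$ is integral, even (even diagonal), unimodular ($\det G=1$), and has minimum six; the minimum is part of the cited catalogue data. The theta-series identity~\eqref{eq:theta48} then gives $52416000$ minimal vectors. Normalizing them gives an antipodal kissing configuration $C\subset S^{47}$: for minimal $x\ne y$, integrality and the minimum bound $(x-y)^2\ge6$ give $x\cdot y\le3$, which is $1/2$ after normalization.

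\emph{Step 2: the class and its images.} I will check that $\mathcal C$ consists of $7077$ integer rows with $cGc^t=6$, pairwise distinct modulo sign, and satisfying~\eqref{eq:p48-class}. All of these are exact integer computations. For each supplied image matrix $A_i$, I will verify $A_iGA_i^t=G$ in integers. Then $c\mapsto cA_i$ preserves the form, so each $\mathcal CA_i$ is again a set of minimal lines with absolute products at most $2$, that is, at most $1/3$ after normalization.

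\emph{Step 3: disjoint blocks.} For each $k=1,\dots,7$ and the corresponding number $t$ of selected images, I will canonicalize each line by its first nonzero coordinate being positive. The union is then computed with the first-occurrence rule, giving disjoint sets $\mathcal D_i$. Their sizes sum to $S$, which I will record and compare with Table~\ref{tab:p48-update}. The blocks $B_i$ are both signs of the lines in $\mathcal D_i$, normalized. They are pairwise disjoint, and within a block distinct vectors have product at most $1/3$: this is $|cGc'^t|/6\le1/3$ for distinct lines, and $-1$ for the two signs of one line.

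\emph{Step 4: the tails and the count.} The tail code is the normalized $A_1,A_2,D_3,D_4,D_5,E_6,E_7$ root configuration from the previous subsection, which is an antipodal kissing configuration. I will check that it has at least $t$ antipodal pairs, that is, $2t\le|Y|$, and assign the blocks to distinct pairs. Theorem~\ref{thm:block-lift} then gives
\[
|C|+\sum_i|B_i|+|Y| = 52416000+2S+|Y|
\]
points in dimension $48+k$. The table's $\tau_k$ must be read consistently with this count: either $\tau_k=|Y|$ when all tails are retained, or $2t$ if $t$ equals half the root count. I will verify each row's arithmetic against this formula.

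\emph{Main obstacle.} The mathematics is routine once the data are checked; the real difficulty is the scale of Step 2. Checking~\eqref{eq:p48-class} requires about $2.5\times10^7$ pair products per class, and the image-union computation involves up to $63$ images of $7077$ lines. For the class, I would compute $\mathcal C G\mathcal C^t$ once in exact integer arithmetic. Since each image satisfies $A_iGA_i^t=G$, it inherits the class property without recomputation. Only the union sizes $S$ then need computing for each image, which reduces to hashing canonical rows.
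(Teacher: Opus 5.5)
Your proposal is correct and matches the paper's proof. The paper also verifies, in exact arithmetic, the catalogue Gram matrix, the isometries $A_i$, the 7077 norms, distinctness modulo sign and \eqref{eq:p48-class}; it then recomputes the union sizes $S$ by first occurrence and applies Theorem~\ref{thm:block-lift} through \eqref{eq:p48-lines-count}. Your hedge about $\tau_k$ resolves itself, because in every row of Table~\ref{tab:p48-update} one has $t=|Y|/2$ (for instance $t=63$ for the $126$ roots of $E_7$), so $\tau_k=2t=|Y|$ and every antipodal tail pair receives a block.
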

\begin{proof}
The finite checker verifies the catalogue Gram matrix, its exact
isometries, all 7077 norms, distinctness modulo sign, and
all distinct-line inequalities~\eqref{eq:p48-class}. It then
reconstructs each selected image and counts its union with the
preceding images. This gives the $S$ values in the table.
The root tail codes have the ranks, sizes, and inner products
established above. Hence~\eqref{eq:p48-lines-count} applies.
\end{proof}

\begin{table}[ht]
\centering
\caption{The line-class liftings. The union, rather than
the sum of image sizes, determines the number of available lines.}
\label{tab:p48-update}
\begin{tabular}{rrrrr}
\toprule
$d$&$t$&$S$&$\tau_k$&$52416000+2S+\tau_k$\\\midrule
49&1&7077&2&52430156\\
50&3&21231&6&52458468\\
51&6&42459&12&52500930\\
52&12&84874&24&52585772\\
53&20&141328&40&52698696\\
54&36&253851&72&52923774\\
55&63&442507&126&53301140\\\bottomrule
\end{tabular}
\end{table}

\subsection{Class growth and image selection}
The public comparison uses a 7069-line class
$\mathcal C_0\subset\mathcal C$~\cite{kissingnumbers}. With the selected
matrices held fixed, let
\[
 S_0=\left|\bigcup_{i=1}^t\mathcal C_0 A_i\right|,
\]
and let $S_{\rm ref}$ be the union size in the comparison construction.
The increase in the final point count splits exactly as
\begin{equation}\label{eq:p48-gain}
 2(S-S_{\rm ref})=2(S-S_0)+2(S_0-S_{\rm ref}).
\end{equation}
The terms measure class enlargement with images fixed and image selection
with the original class fixed, respectively.

\begin{table}[ht]
\centering
\caption{The two contributions to the seven point-count improvements.}
\label{tab:p48-gain}
\begin{tabular}{rrrrr}
\toprule
$d$&$S_0$&$2(S-S_0)$&$2(S_0-S_{\rm ref})$&Total\\\midrule
49&7069&16&0&16\\
50&21207&48&2&50\\
51&42411&96&18&114\\
52&84778&192&64&256\\
53&141168&320&154&474\\
54&253565&572&296&868\\
55&442009&996&414&1410\\\bottomrule
\end{tabular}
\end{table}

The three selected images in dimension 50 are disjoint. Larger image
families overlap, so an eight-line enlargement need not supply eight
new lines per image. In dimension 55 it contributes 498 distinct lines,
rather than $8\cdot63=504$. The image choice contributes a further
207 lines.

\section{Embedded root systems and moment-certified sections}\label{sec:sections}
An embedded root system imposes restrictions on the projections of the
entire minimal shell. In the case below, these restrictions determine
the number of points perpendicular to a child section.

\begin{theorem}[$E_8$-extension count]\label{d43:thm:structure}
Let $L$ be an extremal even unimodular lattice of rank 48, and let
$\iota:\sqrt3E_8\hookrightarrow L$ be an isometric embedding.
Use the simple-root numbering in~\eqref{d43:eq:simple}, and set
\[
 U=\operatorname{span}_{\R}\{\iota(\sqrt3\alpha_1),\ldots,
                              \iota(\sqrt3\alpha_5)\}.
\]
Then exactly $2,553,792$ vectors of squared norm 6 in $L$ are
perpendicular to $U$. The same count holds for every $W(E_8)$-conjugate
of this $D_5$ subsystem inside the embedded $E_8$.
\end{theorem}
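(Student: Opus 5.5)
The plan is to convert the count into a finite moment identity on the $8$-dimensional space $V=\operatorname{span}_{\R}\iota(\sqrt3E_8)$, following the signature-domain approach of Proposition~\ref{prop:section-statistics}. Let $L_6$ be the minimal shell, of size $52416000$ by the theta-series count~\eqref{eq:theta48}. Let $p$ be orthogonal projection onto $V$, and write $\pi_U$ for projection onto $U$.

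\emph{Step 1: the signature domain.} For $x\in L_6$ and a root $r$ of $E_8$, the product $x\cdot\iota(\sqrt3 r)$ is an integer. Applying the minimum to $x\mp\iota(\sqrt3r)$ gives $12\mp2\,x\cdot\iota(\sqrt3r)\ge6$, as in Lemma~\ref{lem:cross-shell}. Hence $x\cdot\iota(\sqrt3r)\in\{-3,\ldots,3\}$. So $p(x)=\iota(y)/\sqrt3$ with $y\in E_8$ (since $E_8$ is unimodular), and
\[
y\cdot r\in\{-3,\dots,3\}\ \text{for all roots } r,\qquad \tfrac13 y^2\le 6 .
\]
This set $\Delta\subset E_8$ is finite, antipodal and $W(E_8)$-invariant. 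I would enumerate it by $W(E_8)$-orbits, i.e.\ dominant representatives of norm at most $18$. Let $n_y$ be the number of $x\in L_6$ with signature $y$. Then $x\perp U$ exactly when $\pi_U(y)=0$, so the target is $\sum_{y\in\Delta,\ \pi_U y=0}n_y$.

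\emph{Step 2: the moment identities.} Extremal even unimodular lattices of rank 48 have minimal shells forming spherical $11$-designs (Venkov; compare the use of~\eqref{eq:leech-shell-moments} for the Leech lattice). Consequently, for every polynomial $f$ on $V$ of degree at most $11$, the sum $\sum_y n_y f(y)=\sum_{x\in L_6}f(p(x))$ equals $|L_6|$ times the average of $f\circ p$ over the sphere of radius $\sqrt6$ in $\R^{48}$. This average is an explicit rational number, computed from the Gaussian-moment formulas already used in~\eqref{eq:leech-shell-moments}. The aim is therefore a polynomial $F$ of degree $\le11$, which may be taken even since $\Delta$ is antipodal, satisfying
\[
F(y)=\mathbf 1[\pi_U y=0]\qquad (y\in\Delta).
\]
Then the count is $|L_6|\,\mathbb E[F\circ p]$, which should evaluate to $2553792$. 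The individual $n_y$ need not be determined.

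\emph{Step 3: constructing $F$.} This is the main obstacle. First I would try functions of $\pi_Uy$ alone: list the finitely many values of $|\pi_Uy|^2$ on $\Delta$ and take $F=\prod_k(1-|\pi_Uy|^2/q_k)$. This works only if there are at most five distinct nonzero values $q_k$. If not, I would solve a rational linear system for $F$ in the space of even polynomials of degree $\le10$. The unknowns are monomials in coordinates adapted to $D_5\oplus A_3$, and it suffices to work with polynomials invariant under $W(D_5)\times W(A_3)$ acting on the two projections, which shrinks the system considerably. The interpolation conditions are imposed only on the finite set $\Delta$. The key point is that solvability is decided by a rank test over $\Q$, which serves as the finite certificate. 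If interpolation fails on all of $\Delta$, I would first sharpen $\Delta$ with further minimum-norm constraints, for instance excluding $x-\iota(\sqrt3r)-\iota(\sqrt3r')$ from being short. This removes spurious signatures before retrying.

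\emph{Step 4: conjugates.} Suppose $w\in W(E_8)$ and $U'$ is the span of the image $w(D_5)$. The domain $\Delta$ is $W(E_8)$-invariant. The function $F\circ w^{-1}$ has the same degree and interpolates the indicator of $\pi_{U'}y=0$. Its spherical average equals that of $F$, since $w$ extends to an orthogonal map of $\R^{48}$ fixing $V^\perp$. Hence the same count $2553792$ holds for every conjugate.
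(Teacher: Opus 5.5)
Your approach is the paper's once one error is fixed, and that error would make the count come out wrong as written.

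\textbf{The gap: Step~1 misses 240 minimal vectors.} Take $x=\iota(\sqrt3r)$, where $r$ is any of the $240$ roots. Then $x-\iota(\sqrt3r)=0$, so the minimum argument says nothing, and indeed $x\cdot\iota(\sqrt3r)=6$. These vectors have $p(x)=x$ and signature $y=3r$, with $y\cdot r=6$, so they lie outside your $\Delta$. Once enumerated, $\Delta$ has $26401$ points, all with $y^2\le8$, whereas $(3r)^2=18$.

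These vectors count. The twelve with $r$ in the $A_3$ orthogonal to the $D_5$ are perpendicular to $U$ and are included in $2553792$. Step~2 sums over all of $L_6$, so an $F$ interpolating only on $\Delta$ gives $|L_6|\,\mathbb E[F\circ p]=\sum_{y\in\Delta}n_yF(y)+\sum_rF(3r)$. That is the true count plus $\sum_rF(3r)-12$, and nothing in Step~3 controls the values $F(3r)$.

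The repair is the exceptional set $E$ of Proposition~\ref{prop:section-statistics}. Each fibre over $3r$ is the singleton $\{\iota(\sqrt3r)\}$, because its projection already has the full norm $6$. So either also impose $F(3r)=\mathbf 1[r\perp\alpha_1,\dots,\alpha_5]$, or subtract $\sum_rF(3r)$ and add the $12$ exceptional contacts by hand.

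\textbf{Comparison with the paper.} With that correction, your argument is the paper's, written in primal form. The paper takes $\Gamma=\langle W(D_5),W(A_3),-1\rangle$ and treats the averaged multiplicities on its $43$ orbits as unknowns, with exceptional orbits pinned to one by fixed-root rows. It then exhibits a rational $w$ with $A^tw=c$ and $w^tb=2553792$.

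Average the polynomial $\sum_\alpha w_\alpha y^\alpha$, built from the moment rows, over $\Gamma$. This turns the nonexceptional orbit identities in $A^tw=c$ into your pointwise interpolation on $\Delta$ by an invariant even polynomial of degree at most ten. The fixed-root rows carry out the exceptional correction, and the converse passage is equally direct. So the paper's certificate shows your fallback system is solvable on the unrefined domain, and no sharpening of $\Delta$ is needed.

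\textbf{Step~3's first attempt cannot work.} In coordinates where $U$ is spanned by five coordinate axes, $|\pi_Uy|^2$ takes twelve nonzero values on $\Delta$: the integers $1,\dots,8$ and $5/4,13/4,21/4,29/4$. A product over these values has degree far above $11$.

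\textbf{Step~4} matches the paper's precomposition of $\iota$ with a Weyl element.
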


The proof uses integrality, the lattice minimum and spherical-design
moments, not an extension of the Weyl group to automorphisms of $L$.

\subsection{Signature domains and target statistics}
The standard design identities~\cite{dgs,nebe-designs} give the following
principle. The signature domain must be complete, and averaging must
preserve the target statistic.

\begin{proposition}[Finite-section statistics]\label{prop:section-statistics}
Let $L\subset\R^n$ be an integral lattice of minimum squared norm $\mu$,
and suppose its minimal shell $X$ has $N$ points and is a spherical
$(2s+1)$-design. Choose independent $b_1,\ldots,b_r\in L$ with Gram
matrix $H$. For $x\in X$ set
\[
y(x)=(\ip{x}{b_i})_{i=1}^r,\qquad
q(y)=y^tH^{-1}y,\qquad f(y)=|\{x\in X:y(x)=y\}|.
\]
Choose a finite antipodal set $\mathcal R\subset\Z^r\setminus\{0\}$
of section-vector coefficients, and put
\begin{equation}\label{eq:section-domain}
\begin{split}
E&=\{Hz:z\in\mathcal R,\ z^tHz=\mu\},\\
D&=\{y\in\Z^r:q(y)\le\mu,\quad
             |z^ty|\le z^tHz/2\ \text{for every }z\in\mathcal R\}.
\end{split}
\end{equation}
Then every signature lies in the disjoint union $D\cup E$, and
$f(e)=1$ for each $e\in E$. No primitivity of the section is required.

For a multi-index $\alpha$ of total degree at most $2s+1$, define
$M_\alpha=0$ when $|\alpha|$ is odd, and, when $|\alpha|=2m$, define
\begin{equation}\label{eq:section-moments}
M_\alpha=
\frac{N\mu^m}{\prod_{j=0}^{m-1}(n+2j)}
\sum_P\prod_{(a,b)\in P}H_{i_a i_b},
\end{equation}
where the labelled list $(i_1,\ldots,i_{2m})$ contains $\alpha_i$
copies of $i$, and $P$ runs over its pairings. Empty products give
$M_0=N$.

Let $\Gamma$ be a finite group of linear transformations preserving
$q$, $D$ and $E$, and let $\psi:D\cup E\to\R$ be $\Gamma$-invariant.
For its orbits $O_j$ on $D$, form
\[
A_{\alpha j}=\sum_{y\in O_j}y^\alpha,\qquad
b_\alpha=M_\alpha-\sum_{e\in E}e^\alpha,\qquad
c_j=\sum_{y\in O_j}\psi(y).
\]
Any selected moment rows in the stated degree range may be used. If
$A^t\lambda\le c$ entrywise, then
\begin{equation}\label{eq:section-statistic-bound}
\sum_{x\in X}\psi(y(x))\ \ge\
\sum_{e\in E}\psi(e)+\lambda^tb.
\end{equation}
If $A^t\lambda=c$, equality holds.
\end{proposition}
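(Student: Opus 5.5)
The proof has three parts, taken in order: locate every signature in $D\cup E$, identify the moments, and run LP duality over orbits.

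\emph{Signature domain.} Let $x\in X$ and $y=y(x)$; write $\pi$ for the orthogonal projection onto $\operatorname{span}\{b_i\}$. Then $q(y)=\norm{\pi x}^2\le\norm{x}^2=\mu$, and $y\in\Z^r$ by integrality. For $z\in\mathcal R$, the vector $v=\sum_i z_ib_i\in L$ is nonzero by independence, and $\ip{x}{v}=z^ty$, $\norm v^2=z^tHz$.

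Suppose $x\ne\pm v$. Then $x\mp v\ne0$ are lattice vectors, so $\norm{x\mp v}^2\ge\mu$. This gives $\mu+z^tHz\mp2z^ty\ge\mu$, that is, $|z^ty|\le z^tHz/2$. If no $z\in\mathcal R$ has $\pm v=x$, this holds for all $z$, and $y\in D$.

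Otherwise $x=\pm v$ for some $z$. By antipodality of $\mathcal R$ we may take $x=v$, so $z^tHz=\mu$ and $y=Hz\in E$.

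For disjointness, take $e=Hz\in E$. Then $z^te=\mu>\mu/2$, so $e\notin D$. For $f(e)=1$: $y(x)=Hz$ forces $\pi x=v$, and then $\norm{\pi x}^2=\mu=\norm x^2$ forces $x=v$. So the fibre is exactly $\{v\}$.

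\emph{Moments.} The design property gives $\sum_{x\in X}p(x)=N\int_{S}p$ for polynomials of degree at most $2s+1$ on the sphere of radius $\sqrt\mu$. I apply this to $p(x)=\prod_i\ip{x}{b_i}^{\alpha_i}$. Odd degrees vanish by symmetry of the sphere average. For degree $2m$, I use the Gaussian/Wick computation: the sphere average of $\prod_a\ip{x}{u_a}$ over radius $\sqrt\mu$ equals
\[
\frac{\mu^m}{n(n+2)\cdots(n+2m-2)}\sum_P\prod\ip{u_a}{u_b}.
\]
This yields $\sum_x y(x)^\alpha=M_\alpha$. Splitting the sum over $D$ and $E$, and using $f=1$ on $E$, gives
\[
\sum_{y\in D}f(y)y^\alpha=b_\alpha .
\]

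\emph{Orbit aggregation and duality.} This is the step needing care. The naive approach would be $f$ constant on $\Gamma$-orbits, which need not hold. Instead I average: set $\bar f=|\Gamma|^{-1}\sum_{g\in\Gamma}f\circ g$. This is $\Gamma$-invariant and nonnegative.

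The identities $\sum_{D}\bar f\,y^\alpha=b_\alpha$ are where the hypotheses must be chosen carefully. For them to hold, the moment conditions must be preserved by $\Gamma$. Individual monomial rows are not $\Gamma$-invariant in general. However, $M_\alpha$ comes from the $O(q)$-invariant Wick form, and $\Gamma$ preserves $q$, $D$ and $E$. Hence the full moment tensor identity of each degree, $\sum_D f\,y^{\otimes k}=T_k-\sum_E e^{\otimes k}$, is $\Gamma$-equivariant. Therefore $\bar f$ satisfies the same tensor identities, and hence every selected row.

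Since $\psi$ is invariant, $\sum_D f\psi=\sum_D\bar f\psi=\sum_j w_jc_j$, where $w_j$ is the constant value of $\bar f$ on $O_j$. Likewise $b_\alpha=\sum_j A_{\alpha j}w_j$. Then $w\ge0$ and $A^t\lambda\le c$ give
\[
\sum_j c_jw_j\ge\lambda^tAw=\lambda^tb,
\]
with equality when $A^t\lambda=c$. Adding $\sum_E\psi(e)$ completes the argument.
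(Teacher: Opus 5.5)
Your proposal is correct and follows the paper's proof essentially step for step. The shared steps are: the minimum bound applied to $x\pm v$ to place signatures in $D$ and to isolate the singleton exceptions in $E$; the design/Wick (polarization) formula for $M_\alpha$; $\Gamma$-averaging of $f$, justified by the $q$-invariance of the full moment tensors; and weak duality against the nonnegative orbit masses. The only detail you omit is the paper's remark that positive definiteness of $q$ makes $D$ finite, and your finitely supported $f$ makes that remark inessential.
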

\begin{proof}
Integrality and orthogonal projection give $y(x)\in\Z^r$ and
$q(y(x))\le\mu$. For $a=\sum_i z_i b_i$ and $x\ne\pm a$,
$\|x\pm a\|^2\ge\mu$ gives $|z^ty(x)|\le\|a\|^2/2$.
The exceptions have $\|a\|^2=\mu$: their full norm lies in the
section, so each fibre is a singleton and violates its own domain
inequality. Thus $D\cap E=\varnothing$; positive definiteness of $q$
makes $D$ finite.

Polarization gives~\eqref{eq:section-moments}. Since every isometry of
$q$ extends orthogonally to $\R^n$, these moments are $\Gamma$-invariant
even when $\Gamma$ does not act on $L$. Hence
\[
\bar f(y)=\frac1{|\Gamma|}\sum_{g\in\Gamma}f(g^{-1}y).
\]
is nonnegative, has the same moments and equals one on $E$.
Its orbit values satisfy $Av=b$, while invariance of $\psi$ preserves
$\sum_X\psi=\sum_E\psi+c^tv$. The result follows from
$(c-A^t\lambda)^tv\ge0$.
\end{proof}

Antipodality removes odd total-degree rows, not monomials with odd
individual exponents. Exceptional orbits may instead be retained with
value one. A noninvariant target is only averaged, not determined for a
fixed child. In dimension 45, longer section vectors remove all columns
carrying slack in a one-sided certificate.

\subsection{A realization in \texorpdfstring{$P_{48n}$}{P48n}}
Take $L=P_{48n}$, whose Gram matrix $G$ is published as CQ48a
in the Nebe--Sloane catalogue~\cite{latticecatalogue}. It is an extremal even
unimodular lattice of rank 48, minimum squared norm 6, with minimal shell
\[
X=\{x\in L:\|x\|^2=6\},\qquad |X|=52,416,000.
\]
The parent certificate~\cite{crosssections} specifies eight integer
coefficient rows $T$ and scaled coordinate rows $B$,
satisfying
\begin{equation}\label{d43:eq:embedding}
 TGT^t=H=3C(E_8)=BB^t/96.
\end{equation}
The catalogue and certificate Grams agree entry by entry. The first five
rows $T_5$ have Gram
\begin{equation}\label{d43:eq:d5}
H_5=\begin{pmatrix}
6&0&-3&0&0\\0&6&0&-3&0\\-3&0&6&-3&0\\
0&-3&-3&6&-3\\0&0&0&-3&6
\end{pmatrix},\qquad \det H_5=972.
\end{equation}
This is $3C(D_5)$ in the stated numbering. Both embeddings are primitive
(all nonzero Smith invariants are one), though the theorem does not
require this. For $U=\operatorname{span}_{\R}(T_5)$, $\dim U^\perp=43$.

\begin{corollary}\label{d43:thm:main}
The set $Y=X\cap U^\perp$ has $2,553,792$ elements. Consequently the ordinary
Euclidean kissing number satisfies $K(43)\ge2,553,792$.
\end{corollary}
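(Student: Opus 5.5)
The corollary follows from Theorem~\ref{d43:thm:structure} once its hypotheses are checked for $L=P_{48n}$ and the given rows $T$. The count of $Y$ is then converted into a kissing configuration in the orthogonal complement.

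\textbf{Step 1: the embedding.} We verify that the rows of $T$ define an isometric embedding $\iota:\sqrt3E_8\hookrightarrow L$. By \eqref{d43:eq:embedding}, $TGT^t=3C(E_8)$. The rows $T$ are integer coefficient vectors in the Gram basis $G$, so the map sending $\sqrt3\alpha_i$ to the $i$th row lies in $L$. It preserves inner products because the Gram matrix is exactly three times the Cartan matrix. Since $C(E_8)$ is nonsingular, the map is injective, and so it is an isometric embedding. This is a finite exact integer check: the catalogue Gram agrees entrywise, and the product $TGT^t$ is computed in integers. The first five rows give $3C(D_5)$ in the numbering of~\eqref{d43:eq:simple}, as displayed in~\eqref{d43:eq:d5}. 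Hence the span of $T_5$ is exactly the subspace $U$ of Theorem~\ref{d43:thm:structure}.

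\textbf{Step 2: the count.} $P_{48n}$ is extremal even unimodular of rank 48 by the catalogue, so Theorem~\ref{d43:thm:structure} gives $|X\cap U^\perp|=2553792$. Primitivity is not needed. The main obstacle is logically inside Theorem~\ref{d43:thm:structure}, whose moment argument via Proposition~\ref{prop:section-statistics} is taken as given here. If that proof were unavailable, I would apply Proposition~\ref{prop:section-statistics} directly. The setup would be $r=8$ with $H=3C(E_8)$, the minimal shell a $11$-design, and $\Gamma=W(E_8)$ acting on signatures. The statistic $\psi$ would be the indicator that the first five coordinates of $H^{-1}y$ vanish, i.e.\ the projection lies in the orthogonal complement of $U$, averaged over the $W(E_8)$-conjugates. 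The certificate would solve $A^t\lambda=c$ on orbits. As a sanity check, one could also count $Y$ by direct enumeration of $x$ with $xGT_5^t=0$.

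\textbf{Step 3: kissing configuration.} The vectors in $Y$ all have squared norm $6$ and lie in $U^\perp$, which has dimension $48-5=43$. Distinct $x,x'\in Y$ are lattice vectors with $x-x'\ne0$, so $\|x-x'\|^2\ge6$, giving $\ip{x}{x'}\le3$. After normalizing by $\sqrt6$, $Y$ is a kissing configuration in $\R^{43}$ of $2553792$ points, hence $K(43)\ge2553792$.
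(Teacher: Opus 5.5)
Your proof is correct and is the paper's argument: the rows $T$ give an isometric $\sqrt3E_8$ in $P_{48n}$ with $U=\operatorname{span}(T_5)$, Theorem~\ref{d43:thm:structure} gives the count, and $\|x-x'\|^2\ge6$ gives $\ip{x}{x'}\le3$ in the $43$-dimensional space $U^\perp$. Your fallback sketch would not prove the exact count as stated. A $W(E_8)$-averaged target only yields the average over Weyl conjugates of the child, as in \eqref{d43:eq:count}; fixing $Y$ needs a target-preserving group such as $\langle W(D_5),W(A_3),-1\rangle$. Also, $x\perp U$ means $y_1=\cdots=y_5=0$, not that the first five coordinates of $H^{-1}y$ vanish.
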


Theorem~\ref{d43:thm:structure} gives the cardinality. Since
$\|x-x'\|^2\ge6$ for distinct shell vectors, $Y/\sqrt6$ is a
kissing configuration.

The construction is specified without a list of millions of points. In
integer basis coordinates its defining set is
\begin{equation}\label{d43:eq:integer}
\mathcal Z=\{z\in\Z^{48}:zGz^t=6,\ T_5Gz^t=0\}.
\end{equation}
For $EE^t=G$, its points are $zE/\sqrt6$ with $z\in\mathcal Z$;
the certificate supplies $G,T_5$.

\subsection{The complete projection domain}
Now let $L$ be any lattice satisfying Theorem~\ref{d43:thm:structure},
and let $X$ be its squared-norm-six shell. The theta-series
identity~\eqref{eq:theta48} gives $|X|=52,416,000$.
Identify the parent section with $\sqrt3E_8$, with simple roots
$\sqrt3\alpha_1,\ldots,\sqrt3\alpha_8$. Write the orthogonal projection of
$x\in X$ onto this section as $p=\lambda/\sqrt3$.
Integrality of $L$ and self-duality of $E_8$ imply $\lambda\in E_8$. Set
\[
y_i=\ip{\lambda}{\alpha_i}=\ip{x}{\sqrt3\alpha_i},\qquad
q=\|p\|^2=\|\lambda\|^2/3=y^tH^{-1}y\le6.
\]
For an $E_8$ root $r$, if $x\ne\pm\sqrt3r$, applying the minimum bound to
$x\pm\sqrt3r$ gives
\begin{equation}\label{d43:eq:rootbound}
|\ip{\lambda}{r}|\le3.
\end{equation}
The exceptions are $\lambda=3r$. Each such projection has multiplicity one:
its squared length is already 6, so the perpendicular component vanishes.

For a transparent finite enumeration use the model
\[
E_8=\{u/2:u\in\Z^8,\ u_i\text{ all of the same parity},\
\textstyle\sum_i u_i\equiv0\pmod4\}.
\]
Its roots are the 112 vectors $\pm e_i\pm e_j$ and the 128 half-integer
vectors with entries $\pm1/2$ and an even number of minus signs.
The simple roots used here are
\begin{equation}\label{d43:eq:simple}
\begin{aligned}
\alpha_1&=(1,-1,-1,-1,-1,-1,-1,1)/2, &\alpha_2&=e_1+e_2,\\
\alpha_{i+2}&=-e_i+e_{i+1} &&(1\le i\le6).
\end{aligned}
\end{equation}
Their Gram is precisely $H/3$.

Enumeration of $\|u\|^2\le72$ subject to~\eqref{d43:eq:rootbound} gives
\begin{center}
\begin{tabular}{rrrrrr}\toprule
$\|\lambda\|^2$ &0&2&4&6&8\\\midrule
Nonexceptional signatures &1&240&2,160&6,720&17,280\\\bottomrule
\end{tabular}
\end{center}
There are no other nonexceptional signatures. Including the 240 exceptions
gives 26,641 signatures in total.

For completeness, pair roots bound the two largest absolute coordinates
of $u$ by a sum of 6. Half-integer roots bound $\|u\|_1$ by 12,
except when all coordinates are nonzero with odd sign parity, when they
bound $\|u\|_1-2\min_i|u_i|$ by 12. Thus $\|u\|_1\le18$ is a
safe enumeration bound. Exact parity, norm and root tests give the table;
at norm eight they remove the 240 doubled roots from 17,520 points.

\subsection{Moments and the averaging explanation}
Venkov's theorem~\cite[Theorem 4.1]{nebe-designs} makes $X$ an 11-design,
so~\eqref{eq:section-moments} applies through degree ten. The abstract
Weyl group has five nonexceptional orbits, with sizes $s_j$ and projected
squared norms $q_j=0,2/3,4/3,2,8/3$. Their averaged multiplicities $a_j$
satisfy the radial equations
\begin{equation}\label{d43:eq:radial}
\sum_j s_ja_jq_j^k=|X|6^k\frac{(4)_k}{(24)_k}-240\,6^k,
\qquad 0\le k\le5,
\end{equation}
where $(a)_k$ is the rising factorial and the subtraction removes the
240 section roots. The first five equations give
\[
(a_0,a_1,a_2,a_3,a_4)
=(1,092,000,116,235,9,180,495,63/4).
\]
The equation for $k=5$ also holds.

The roots perpendicular to the first five simple roots form $A_3$.
The target subspace contains respectively $1,12,6,24,0$ admissible
signatures at the five norms, and 12 exceptional roots. Thus the average
contact count over the Weyl images of this $D_5$ child is
\begin{equation}\label{d43:eq:count}
1,092,000+12(116,235)+6(9,180)+24(495)+12=2,553,792.
\end{equation}
This proves an average count, not yet the count of a fixed child.
The $a_j$ need not be integers because they are orbit averages.

\subsection{Exact count for the specified child}
For the fixed child use the target-preserving subgroup
\[
\Gamma=\langle W(D_5),W(A_3),-1\rangle.
\]
This group preserves the full domain and the condition
$y_1=\cdots=y_5=0$, and has 43 orbits $O_j$.
For averaged multiplicities $v_j$, use rows
\[
A_{\alpha,j}=\sum_{y\in O_j}y^\alpha,
\]
with right-hand side~\eqref{eq:section-moments}, and fix exceptional
orbit values to one. The target is
\[
c_j=|\{y\in O_j:y_1=\cdots=y_5=0\}|.
\]
The supplied certificate selects 43 moment or fixed-root rows, and provides
a rational column vector $w$ such that
\begin{equation}\label{d43:eq:dual}
A^tw=c,\qquad w^tb=2,553,792.
\end{equation}
Thus $|X\cap U^\perp|=c^tv=w^tb$. All inputs depend only on the
root system and spherical moments. Precomposing $\iota$ with a Weyl
element proves the conjugate cases without extending it to $L$.
The verifier regenerates $A,b,c$ from the coordinate model, taking only
the row selection and $w$ from the certificate in
Table~\ref{tab:certificates}.

\subsection{A nested family of fixed coordinate sections}
Extend $\iota$ linearly and define the eight fixed coordinate cuts
\[
 U_k=\iota\bigl(\sqrt3\operatorname{span}\{e_1,\ldots,e_k\}\bigr),
 \qquad 1\le k\le8.
\]
For $k\ge2$ these are the coordinate $D_k$ spans ($D_2=A_1^2$,
$D_3=A_3$); $U_1$ is an axis, not a $D_1$ root subsystem, and
$U_8$ is the whole parent.

\begin{theorem}[Coordinate-section family]\label{thm:e8-coordinate-sections}
Let $L,X,\iota$ satisfy Theorem~\ref{d43:thm:structure}.
For $m=8-k$, the number of minimal vectors perpendicular to the fixed
space $U_k$ is
\begin{equation}\label{eq:e8-coordinate-count}
\begin{split}
 |X\cap U_k^\perp|={}&1092000+18360m+464944\binom m2
                  +11880\binom m3\\
                 &+146880\binom m4+2520\binom m5+31680\binom m6.
\end{split}
\end{equation}
The same holds for every Weyl conjugate of this nested family.
In particular, a Weyl conjugate contains the specified $D_5$ child
of Theorem~\ref{d43:thm:structure} as its rank-five member.
\end{theorem}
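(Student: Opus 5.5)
We count $|X\cap U_k^\perp|$ for each $k$ with the moment machinery of Proposition~\ref{prop:section-statistics}, exactly as in the fixed-child argument for $D_5$, applied to the complete parent signature domain. The domain $D\cup E$ is the same 26,641-signature set tabulated above, with the 240 exceptional roots $\lambda=3r$ of multiplicity one. The target is $\psi_k(y)=\boldsymbol1[\ip{\lambda}{e_1}=\cdots=\ip{\lambda}{e_k}=0]$. A vector $x$ is perpendicular to $U_k$ exactly when its projection $\lambda/\sqrt3$ has vanishing first $k$ coordinates, so $\sum_X\psi_k(y(x))=|X\cap U_k^\perp|$.

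For the symmetry group we take
\[
\Gamma_k=\langle W(D_k)\ \text{on coordinates }1,\ldots,k,\ W(D_m)\ \text{on coordinates }k+1,\ldots,8,\ \text{all sign changes},\ -1\rangle\cap W(E_8),
\]
or more simply the stabilizer of the coordinate split inside $W(E_8)$. This group contains coordinate permutations within each block and even sign changes, and it preserves both $q$ and the domain (it lies in $W(E_8)$) as well as $\psi_k$. We form its orbits $O_j$ on $D$, the moment rows $A_{\alpha j}$ through degree ten with right-hand sides~\eqref{eq:section-moments} (using $N=52416000$, $\mu=6$, $n=48$, $H=3C(E_8)$), and the target vector $c_j=|\{y\in O_j:\psi_k(y)=1\}|$. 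We then solve $A^tw=c$ over $\Q$. If a solution exists, the equality case of Proposition~\ref{prop:section-statistics} gives $|X\cap U_k^\perp|=\sum_E\psi_k+w^tb$, where $\sum_E\psi_k$ counts roots supported on the last $m$ coordinates: $2m(m-1)$ pair roots, plus the 128 half-integer roots when $k=0$. The Weyl-conjugate statement follows by precomposing $\iota$ with a Weyl element, as in the $D_5$ case. The $D_5$ member is $U_5$ up to conjugation because both are $D_5$ subsystems of $E_8$, and all such subsystems are $W(E_8)$-conjugate. The value at $m=3$ should reproduce $2553792$, which gives a consistency check: $1092000+3(18360)+3(464944)+11880=2553792$.

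To obtain the closed binomial form~\eqref{eq:e8-coordinate-count}, I would not solve eight separate linear systems blindly. Instead I would organize the count as $\sum_{\lambda}a(\lambda)\,\psi_k(\lambda)$, where $\lambda$ ranges over the projection domain. A signature with exactly $j$ nonzero coordinates and support inside the last $m$ coordinates contributes to every $k$ with $m\ge j$. Grouping by the support size $j$ then gives
\[
\sum_{j}\binom{m}{j}\,F_j,
\]
where $F_j$ is the total multiplicity of signatures whose support is a fixed $j$-set. The coefficients $1092000, 18360, 464944, \ldots, 31680$ are the $F_j$ for $j=0,\ldots,6$. The constant term $F_0=a_0=1092000$ is the multiplicity of the zero signature. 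The vanishing for $j=7,8$ and the value at $j=1$ should follow from the domain enumeration, since norm and root bounds restrict which supports occur. Each $F_j$ is a sum over signatures of a fixed support, and that sum is invariant under the stabilizer of the support set. So each $F_j$ is itself a $\Gamma$-invariant target statistic, which the moment method certifies through a dual vector $w_j$ with $A^tw_j=c^{(j)}$.

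The main obstacle is determinacy. The moment equations have at most 43-ish independent rows relative to the orbit count, and a target that is not in the row space of $A$ gets only one-sided bounds from~\eqref{eq:section-statistic-bound}. For $k=5$ the paper's certificate shows that the $D_5$ target lies in that row space, but the other $k$, and especially the individual $F_j$, need not. My first move is to compute the rank of $A$ for the orbit decomposition of each $\Gamma_k$ and test $c\in\operatorname{row}(A)$ by exact rational elimination. If that fails for some $F_j$, I would fall back to summing over the nested chain: check only that the eight totals $|X\cap U_k^\perp|$ are determined, and recover the $F_j$ by binomial inversion in $m$. If some total is still undetermined, I would try adding fixed-root rows, or enlarging $\mathcal R$ with longer section vectors to shrink $D$ and eliminate the slack columns.
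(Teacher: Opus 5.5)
Your framework is the paper's: the split-coordinate subgroup $\Gamma_k$ (permutations within the two blocks, even sign changes), its orbits on the 26,641-signature domain, moment rows through degree ten, and the equality case of Proposition~\ref{prop:section-statistics}. Your monomial rows contain the paper's invariant rows $F_{a,b}$ in their span. As written, however, the proposal does not prove~\eqref{eq:e8-coordinate-count}, for two reasons.

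\begin{enumerate}
\item Determinacy is left as a contingency (``if that fails \ldots\ fall back'').
\item More seriously, nothing in the proposal produces the numbers. You assert that $1092000,18360,\ldots,31680$ are the $F_j$. The only mechanism you offer is to compute an unspecified dual vector $w_j$ and evaluate $w_j^tb$, which you neither carry out nor predict.
\end{enumerate}

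Your claim that the $j=1$ value ``should follow from the domain enumeration'' is also wrong. The enumeration only shows that the support-one signatures are $\pm2e_i$. The value $F_1=2f(2e_i)=2\cdot9180$ is a fibre multiplicity, and it has to come from the moments.

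The missing idea is to reuse the Weyl-averaged multiplicities you already have.
\begin{itemize}
\item The full $W(E_8)$-average $\bar f$ has the same moments as $f$ in every degree, because the right-hand sides are invariant under all isometries of $q$. Its orbit values are the radial solution $(a_0,a_2,a_4,a_6,a_8)=(1092000,116235,9180,495,63/4)$, indexed by $r=\|\lambda\|^2$.
\item Consequently $w_j=|O_j|a_r$ satisfies every row of each $\Gamma_k$ system.
\item The paper shows that all eight $\Gamma_k$ moment matrices have full column rank, certified by nonzero minors modulo $1000003$. So the actual orbit masses are forced to equal these averaged ones.
\end{itemize}
This is stronger than your requirement $c\in\operatorname{row}(A)$. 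It also makes your worry about individual $F_j$ moot, since every $\Gamma_k$-invariant statistic is then determined.

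The closed form becomes a pure signature count, and your binomial grouping is then correct. For $k\ge1$ the target signatures are integral with even coordinate sum on the last $m$ coordinates.
\begin{itemize}
\item At norms $0,2,4,6,8$ there are $1$, $4\binom m2$, $2m+16\binom m4$, $24\binom m3+64\binom m6$ and $160\binom m5$ of them. These come from the shapes $0$, $(1^2)$, $(2),(1^4)$, $(2,1^2),(1^6)$ and $(2,1^4)$.
\item The shape $(2,2)$ is excluded by the root bound, and $(1^8)$ does not fit in $m\le7$ coordinates.
\item There are also $4\binom m2$ exceptional roots.
\end{itemize}
Weighting by $a_r$ gives exactly the displayed coefficients, for example $4\cdot116235+4=464944$ and $160\cdot\tfrac{63}{4}=2520$.

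For the $D_5$ correspondence, your appeal to the uniqueness of $D_5$ subsystems of $E_8$ up to $W(E_8)$-conjugacy is a valid alternative to the paper's explicit Weyl element $Q$.
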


\begin{proof}
Use the doubled coordinates $u=2\lambda$ of the complete
nonexceptional domain $D$ above. Let $\Gamma_k$ consist of coordinate
permutations within the first $k$ and last $8-k$ positions and sign
changes in an even number of positions. This subgroup of $W(D_8)$
preserves $D$ and the target $u_1=\cdots=u_k=0$. Its orbits are
specified by the two multisets of squared coordinates and $\prod_i u_i$.
The product distinguishes sign parity when no coordinate is zero;
a zero coordinate removes that restriction.

For a partition $a$, write $m_a$ for the monomial symmetric polynomial
whose distinct monomials have coefficient one. Use all rows
\[
 F_{a,b}(u)=m_a(u_1^2,\ldots,u_k^2)
           m_b(u_{k+1}^2,\ldots,u_8^2),\qquad
 |a|+|b|\le5,
\]
where the partition lengths are at most $k$ and $8-k$, respectively.
For orbit masses $w_j$ set $A_{(a,b),j}=F_{a,b}(u_j)$, with
$u_j\in O_j$. Subtracting the 240 singleton exceptions gives $Aw=b$
for the actual masses, without assuming fibre symmetry. The functionals
$u_i$ have Gram $12I_8$, so for $|a|+|b|=d$ the unadjusted moment is
\begin{equation}\label{eq:e8-coordinate-moment}
 \frac{52416000\,72^d}{48\cdot50\cdots(48+2d-2)}
 \nu_k(a)\nu_{8-k}(b)
 \prod_i(2a_i-1)!!\prod_j(2b_j-1)!!,
\end{equation}
where $\nu_l(a)$ is the number of distinct permutations of $a$
padded with zeros to length $l$.

All eight matrices have full column rank: regenerated square minors
have the nonzero residues in Table~\ref{tab:e8-coordinate-family}.
Thus $w$ is unique over $\Q$. The candidate
\[
 w_j=|O_j|a_r,\qquad
 (a_0,a_2,a_4,a_6,a_8)=(1092000,116235,9180,495,63/4),
 \quad r=\|u_j\|^2/4,
\]
comes from the full Weyl average and satisfies every row exactly.
It must therefore equal the actual orbit masses. This does not assign
the fractional value $63/4$ to individual fibres.

It remains to count the target signatures. Since $k\ge1$, a vector
$\lambda$ perpendicular to $U_k$ has integer coordinates and even
coordinate sum in the remaining $m\le7$ positions. At norms
$0,2,4,6,8$, the numbers of admissible signatures are respectively
\[
 1,\quad 4\binom m2,\quad 2m+16\binom m4,\quad
 24\binom m3+64\binom m6,\quad 160\binom m5.
\]
These come from the signed shapes
$0$, $(1^2)$, $(2),(1^4)$, $(2,1^2),(1^6)$ and $(2,1^4)$.
The root bound excludes $(2,2)$, and $(1^8)$ cannot fit in $m\le7$
coordinates. There are also $4\binom m2$ exceptional roots.
Multiplication by $a_r$ and addition of the exceptions proves
\eqref{eq:e8-coordinate-count}. Full column rank, not radial moments
alone, establishes these counts for the fixed cuts.

For the specified $D_5$ correspondence put $H_4=J_4-2I_4$ and
let $P_{5,8}$ exchange coordinates five and eight. The map
\[
 Q=\tfrac12P_{5,8}\operatorname{diag}(H_4,H_4)
\]
permutes the $E_8$ roots; the certificate expresses it as 31 simple
reflections.
It sends $\alpha_1,\ldots,\alpha_5$ to
\[
 -e_1-e_5,\quad e_3+e_4,\quad e_1-e_2,\quad
 e_2-e_3,\quad e_3-e_4,
\]
which span the first five coordinate directions. Thus $Q^{-1}$ sends
the rank-five cut to the specified $D_5$. No ambient Weyl action is
required.
\end{proof}

\begin{table}[ht]
\centering
\caption{Exact fixed coordinate-section counts and rank certificates.
The minors use every orbit column and the specified subset of moment rows.}
\label{tab:e8-coordinate-family}
\begin{tabular}{rrrrr}
\toprule
$k$&Moment rows&Orbit columns&Minor modulo $1000003$&$|X\cap U_k^\perp|$\\\midrule
1&45&25&910777&16815624\\
2&62&34&790744&10663920\\
3&70&39&894205&6688960\\
4&72&41&314211&4149504\\
5&70&39&401115&2553792\\
6&62&34&244188&1593664\\
7&45&25&296071&1110360\\
8&19&13&272057&1092000\\\bottomrule
\end{tabular}
\end{table}

\subsection{What accounts for the improvement}
The comparison in~\cite{crosssections} has $2,545,056$ contacts from
a primitive $\sqrt3D_5$ in $P_{48p}$. Both sections have determinant
972, singleton cluster, $\beta=0$ and permutation-isometric Grams.
The gain therefore lies in the embedding: the specified $E_8$ parent
in $P_{48n}$ forces the larger perpendicular shell, despite identical
intrinsic section data.

\section{A moment-certified projection in dimension 45}\label{sec:projection45}
Projecting three fibres of a quadratic-residue neighbour gives 7380090
points. A design-moment identity expresses their total size through one
fibre, which an anchor graph counts exactly.

\subsection{The specified quadratic-residue neighbour}\label{sec:qr-neighbour}
Let $C$ be the extended ternary quadratic-residue code of length 48.
We use its published parameters $[48,24,15]$~\cite{leech-sloane1971,nebe}.
The supplied generator fixes a degree-23 factor of
$1+x+\cdots+x^{46}$ over $\F_3$, its coordinate order and a self-dual
extension coordinate.
Set
\[
M=\{z\in\Z^{48}:z\bmod3\in C\},\qquad L=M/\sqrt3,
\qquad L_0=\{z/\sqrt3\in L:z^2\equiv0\pmod6\}.
\]
An integer basis $B_0$ for $M$ and a characteristic numerator
$\chi\in M$ are supplied. They satisfy
\[
\chi_i\text{ odd for all }i,\qquad
\chi^2=264,\qquad
\chi\cdot z\equiv z^2\pmod6\quad(z\in M).
\]
The last congruence is checked on $B_0$ and extends by integrality.
Define
\[
\Lambda=L_0\cup\bigl(\chi/(2\sqrt3)+L_0\bigr).
\]
The characteristic relation and $\chi^2/12=22$ make $\Lambda$ even
and integral. The two index-two constructions give covolume one.
Integer row operations on $B_0$ and the two cosets reconstruct a basis
$B$ with $BB^t=12G$, where $G$ is the Gram of the 298-vector certificate;
the verification compares all entries.

\begin{lemma}\label{lem:qr-minimum}
The specified lattice $\Lambda$ has minimum squared norm six.
\end{lemma}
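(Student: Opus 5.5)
We will prove the bound separately on the two cosets of $L_0$ in $\Lambda$, working with integer numerators throughout. A vector of $L_0$ is $z/\sqrt3$ with $z\in M$ and $z^2\equiv0\pmod 6$, and its squared norm is $z^2/3$. A vector of the second coset is $(\chi+2z)/(2\sqrt3)$ with $z\in L_0$-numerators, and its squared norm is $(\chi+2z)^2/12$. Evenness, already established from the characteristic relation, shows every norm is even. It therefore suffices to exclude squared norms $2$ and $4$, and then to exhibit one vector of norm six. For the latter we take $z=3e_i$: we have $z\in M$, $z^2=9$ is odd, so instead we use $z=3e_i+3e_j$, which has $z^2=18\equiv0\pmod6$ and norm $6$.

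\emph{Even coset.} Suppose $z\in M$ has $z^2\in\{6,12\}$. We split according to whether the reduction $z\bmod 3$ is zero.
\begin{itemize}
\item If $z\bmod3=0$, then $z\in3\Z^{48}$ and $z^2\ge 9$. This leaves only $z^2=9$, which is odd, or $z^2\ge18$; neither gives norm $2$ or $4$.
\item If $z\bmod3=c\ne0$, then $c$ has weight at least $15$, so at least $15$ coordinates are nonzero integers and $z^2\ge15>12$.
\end{itemize}
Hence $L_0$ has no vectors of squared norm $2$ or $4$.

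\emph{Odd coset.} Write $w=\chi+2z$ with $z\in M$, $z^2\equiv0\pmod 6$. Every coordinate of $w$ is odd, so $w^2\ge48$, and a norm-$2$ or norm-$4$ vector would need $w^2=24$ or $48$. The case $w^2=24$ is impossible. The case $w^2=48$ forces $w\in\{\pm1\}^{48}$, and this is the main obstacle. Such a $w$ satisfies $w\equiv\chi\pmod2$ and $w\equiv\chi+2z\pmod3$. Reducing mod $3$, $w\bmod3$ lies in the coset $\bar\chi+C$ (using $2z\in M$), so it is a $\pm1$ vector in that coset.

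I would first try to rule this out by weight arguments mod $3$. A $\pm1$ vector has full weight $48$. For the fixed coset $\bar\chi+C$ we must show it contains no word of full weight with the required compatibility; equivalently, no $\pm1$ vector $w$ with $(w-\chi)/2\in M$ and $((w-\chi)/2)^2\equiv0\pmod6$.

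The route I plan to use is the congruence constraint. From $\chi\cdot z\equiv z^2\pmod6$ and $w=\chi+2z$ we get $w^2=\chi^2+4\chi\cdot z+4z^2\equiv264+8z^2\pmod{24}$. This yields $w^2\equiv264\equiv0\pmod{24}$, which permits $48$, so the congruence alone does not exclude it. I therefore fall back on a finite check. The candidates $\chi+2z$ with all entries $\pm1$ correspond to $z=(w-\chi)/2$, and for each of these the condition $z\bmod3\in C$ is a linear system over $\F_3$ on sign choices. Concretely, write $w_i=\chi_i-2\delta_i\chi_i$-style sign flips, so that $z$ is determined by the set $S$ of flipped coordinates, $z=-\sum_{i\in S}\chi_i e_i+\ldots$. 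Reducing mod $3$, $z\equiv$ a vector supported on the coordinates where $w_i\ne\chi_i\bmod 3$, and it must be a codeword of $C$ or exceed its minimum distance.

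We then use the minimum distance $15$ of $C$ together with the parity of the supplied $\chi$ to bound the number of candidates. The remaining cases, supports of size at least $15$ in the relevant coset, are checked exhaustively against the supplied generator, and this computation is the step that requires the certificate. Finally, Lemma~\ref{lem:cross-shell}-type reasoning is not needed; minimum $6$ follows once both cosets avoid norms $2$ and $4$.
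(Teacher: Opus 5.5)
Your treatment of $L_0$ and your norm-six witness $\sqrt3(e_i+e_j)$ are correct and match the paper. The gap is the odd coset, which is the real content of the lemma: once $y=\chi+2z\in\{\pm1\}^{48}$, you have no argument.

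\begin{itemize}
\item The congruence you compute is for $y^2$. This equals $48$ for every sign vector, so it cannot help.
\item The minimum distance $15$ does not cut the candidates down to a manageable list.
\item The closing ``exhaustive check against the supplied generator'' is an unspecified search over essentially all sign patterns, or all $3^{24}$ codewords. That is not a step of a proof.
\end{itemize}

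What decides the question is $z^2\bmod 6$, equivalently $\chi\cdot z\bmod 6$, i.e.\ whether $z/\sqrt3$ actually lies in $L_0$.

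The missing idea is the normalization used in the paper. Let $s\in\{\pm1\}^{48}$ represent $\chi\bmod 3$. Then $s\in C$, and the coordinatewise product code $sC$ is again extremal self-dual and contains $\mathbf{1}$. Write $y=s(\mathbf{1}-2b)$ with $b\in\{0,1\}^{48}$. Since $(s-\chi)/2\in3\Z^{48}$, you get $z\in M$ exactly when $b\in sC$. Using $\chi_is_i\equiv1\pmod 6$ and the certified values $\chi\cdot s=66$, $\chi^2=264$,
\[
 \chi\cdot z=\frac{\chi\cdot s-\chi^2}{2}-\sum_i\chi_is_ib_i\equiv-99-\wt(b)\equiv3-\wt(b)\pmod 6 .
\]
Because $\chi\cdot z\equiv z^2\pmod 6$, a norm-four vector in the odd coset exists precisely when $sC$ contains a binary word of weight $\equiv3\pmod 6$.

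Minimum distance alone does not rule this out. Divisibility by three, together with the complementary binary word $\mathbf{1}-b\in sC$, still leaves weights $15,21,27,33$ open. The paper closes them with genuine code-theoretic input: the two-symbol specialization of the complete weight enumerator (Munemasa--Tamura) shows the binary words of such a code have weights $0,24,48$ only. Without this reduction and this input, or a precisely specified and feasible computation replacing it, the odd-coset case remains unproved.
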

\begin{proof}
In $L_0$, a putative vector of squared norm two or four has numerator
$z$ with $z^2\le12$. The minimum code weight forces $z\bmod3=0$.
Its nonzero squared norm is then $3w^2$, whose positive even values
are at least six.

In the other coset, the numerator $y=\chi+2z$ has all coordinates
odd, so $y^2/12\ge4$. Equality requires $y_i=\pm1$.
Let $s_i\in\{-1,1\}$ represent $\chi_i\bmod3$. The certificate
checks that $s\in C$ and $\chi\cdot s=66$. Multiplying coordinates
by $s$ yields an equivalent extremal ternary code containing the
all-one word. Its binary-valued words have weights $0,24,48$, by
the two-symbol specialization of the complete weight
enumerator~\cite[equation (20)]{munemasa-tamura}.
Write $y=s(1-2b)$ coordinatewise. Since $y\bmod3\in C$, the
binary word $b$ belongs to the normalized code, and its weight $w$
is one of these three values. But $\chi_i s_i\equiv1\pmod6$,
so
\[
\chi\cdot z
=\frac{\chi\cdot s-\chi^2}{2}-\sum_i\chi_i s_i b_i
\equiv-99-w\equiv3\pmod6.
\]
This contradicts $z/\sqrt3\in L_0$. Evenness excludes other
positive squared norms below six, and $\sqrt3(e_0+e_1)\in L_0$ has
squared norm six.
\end{proof}

Thus the specified Gram has a minimal shell of size $52416000$
by~\eqref{eq:theta48}, and this shell is an 11-design~\cite{nebe-designs}.

\subsection{A complete signature domain and its moments}
For the counting argument let $\Lambda$ be any extremal even unimodular
lattice of rank 48, and let $X$ be its squared-norm-six shell. Thus
$|X|=52416000$ by~\eqref{eq:theta48}, and $X$ is an 11-design.
Suppose $t_1,t_2,t_3\in\Lambda$ have Gram matrix
\begin{equation}\label{eq:45gram}
H=8I_3-2J_3
=\begin{pmatrix}6&-2&-2\\-2&6&-2\\-2&-2&6\end{pmatrix},
\quad H^{-1}=(I_3+J_3)/8,\quad\det H=128.
\end{equation}
For $x\in X$, put $y_i=\ip{x}{t_i}$ and write $f(y)$ for the
number of vectors with that signature. No primitivity hypothesis is
needed. The vectors $t_1,t_2,t_3,-t_1-t_2-t_3$ form a regular
tetrahedron of squared radius six. They and their negatives give the
eight exceptional signatures
\[
E=\{\pm He_1,\pm He_2,\pm He_3,\pm H(1,1,1)^t\},
\]
each of multiplicity one. Applying the minimum to these eight vectors,
as in Proposition~\ref{prop:section-statistics}, puts every other
signature in
\[
D_0=\{y\in\Z^3:|y_i|\le3,\ |y_1+y_2+y_3|\le3\}.
\]
Writing $y_4=-y_1-y_2-y_3$ gives $q(y)=\sum_{i=1}^4y_i^2/8\le9/2$.
There are $\binom{15}{3}-4\binom{8}{3}=231$ such quadruples, so
$D_0\cup E$ is the original 239-signature domain.

Each $t_i+t_j$ has squared norm eight, so the minimum applied to
$x\pm(t_i+t_j)\ne0$ also gives
\begin{equation}\label{eq:45longer-anchors}
|y_i+y_j|\le4\qquad(1\le i<j\le3).
\end{equation}
These inequalities exclude exactly the $6+24=30$ permutations of
$(3,3,-3,-3)$ and $(3,2,-3,-2)$ from $D_0$. The refined domain $D$
has 201 signatures and 101 antipodal orbits; all eight exceptions remain.

Apply~\eqref{eq:section-moments} to all even total degrees at most ten
and subtract $E$. On $D_0$ this gives $Av=b$ with 161 rows and 116
variables $v_{[y]}=f(y)=f(-y)$. Its entries are
$A_{\alpha,[y]}=(2-\mathbf1_{y=0})y^\alpha$, including monomials
with odd individual exponents. The system depends only on $H$.

Let $c$ be the coefficient vector of
\[
f(e_1)+f(e_2)+f(e_3)-9f(-2,-2,3).
\]
The stored rational multiplier $\lambda$ satisfies
\begin{equation}\label{eq:45dual}
c-A^t\lambda\ge0,\qquad \lambda^tb=7377408.
\end{equation}
All seven nonzero slack entries belong to the 15 antipodal orbits
excluded by~\eqref{eq:45longer-anchors}. Restricting to $D$ therefore gives
\[
A_D^t\lambda=c_D,\qquad\lambda^tb=7377408.
\]
The excluded fibres vanish, so the original lower certificate sharpens
to the exact identity
\begin{equation}\label{eq:45count}
f(e_1)+f(e_2)+f(e_3)=7377408+9f(-2,-2,3).
\end{equation}
The coefficient vector uses $1$ on $[e_i]$ and $-9$ on $[(-2,-2,3)]$,
equivalently half these weights at each antipode in
Proposition~\ref{prop:section-statistics}. The checker regenerates both
domains and all rational identities; individual fibre sizes need not be
determined.

\subsection{Witness amplification}
\begin{theorem}\label{thm:45-transfer}
Let $\Lambda$ be an extremal even unimodular lattice of rank 48,
and let $t_1,t_2,t_3\in\Lambda$ have Gram matrix~\eqref{eq:45gram}.
With $f$ as above, projection onto
$\operatorname{span}_{\R}\{t_1,t_2,t_3\}^\perp$ and normalization
send the three fibres $e_1,e_2,e_3$ injectively to a kissing
configuration with exactly $7377408+9f(-2,-2,3)$ points.
In particular, $m$ distinct vectors of squared norm six and signature
$(-2,-2,3)$ prove
\[
K(45)\ge7377408+9m.
\]
\end{theorem}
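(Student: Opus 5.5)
The plan is to compute the norms and mutual inner products of the projected vectors explicitly, and then invoke the identity~\eqref{eq:45count}. Write $T=\operatorname{span}_\R\{t_1,t_2,t_3\}$ and let $\pi$ be orthogonal projection onto $T^\perp$. For $x$ with signature $y$, the component of $x$ in $T$ has squared norm $q(y)=y^tH^{-1}y=(\sum_iy_i^2+(\sum_iy_i)^2)/8$. For $y=e_i$ this is $(1+1)/8=1/4$, so $\|\pi x\|^2=6-1/4=23/4$ uniformly on the three fibres. This matches the squared norm in Table~\ref{tab:scales}, and the kissing threshold becomes $23/8$.

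For the pair products, take $x,x'$ in these fibres with signatures $e_i,e_j$. Then
\[
\ip{\pi x}{\pi x'}=\ip{x}{x'}-e_i^tH^{-1}e_j=\ip{x}{x'}-\tfrac{1+\delta_{ij}}8 .
\]
Since $X$ lies in an even integral lattice of minimum six, distinct non-antipodal minimal vectors satisfy $\ip x{x'}\le3$. Antipodes cannot both occur, because $-x$ has signature $-e_i$, which lies outside the three fibres. If $i=j$ we get at most $3-1/4=11/4<23/8$. If $i\ne j$ we get at most $3-1/8=23/8$, which is exactly the threshold.

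For injectivity, suppose $\pi x=\pi x'$ with $x\ne x'$. Then $\ip{\pi x}{\pi x'}=23/4$, which exceeds the bound just proved. So the projected points are distinct, and their number is $f(e_1)+f(e_2)+f(e_3)$. Equation~\eqref{eq:45count} turns this into $7377408+9f(-2,-2,3)$.

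For the final claim, $m$ distinct vectors of signature $(-2,-2,3)$ give $f(-2,-2,3)\ge m$. Normalizing by $\sqrt{23/4}$ yields a kissing configuration in $T^\perp\cong\R^{45}$, so $K(45)\ge7377408+9m$.

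The main point needing care is the cross-fibre case, where the bound is attained with equality. The argument there rests on integrality, the lattice minimum and the exclusion of antipodes. The substantive content, the moment identity~\eqref{eq:45count}, is taken as already established.
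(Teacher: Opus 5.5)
Your proposal is correct and follows the paper's proof essentially step for step: the projected norm $6-1/4=23/4$, the bounds $3-1/4$ within a fibre and $3-1/8=23/8$ across fibres, injectivity from the same bound, and the count from \eqref{eq:45count}. Your separate exclusion of antipodes is unnecessary, though harmless, since $\|x-x'\|^2\ge6$ already gives $\ip{x}{x'}\le3$ for every pair of distinct minimal vectors.
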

\begin{proof}
Each projected squared norm is $6-1/4=23/4$.
Two original distinct vectors have inner product at most three.
Their projected products are therefore at most $3-1/4=11/4$ within
one fibre and $3-1/8=23/8$ in different fibres. This also excludes
collisions. Normalization gives a kissing configuration in dimension
45, whose cardinality is~\eqref{eq:45count}. Finally
$f(-2,-2,3)\ge m$ for any such witness set.
\end{proof}

\subsection{A regular anchor graph}
For $p\in\Lambda$ of squared norm eight, define
$A_p=\{x\in X:\ip{x}{p}=4\}$. The involution $x\mapsto p-x$
pairs its elements. Let $\mathcal G_p$ have these pairs as vertices,
with two distinct vertices adjacent when representatives have inner
product one or three.

\begin{proposition}\label{prop:45-anchor-graph}
For every extremal even unimodular $\Lambda$ of rank 48 and every
squared-norm-eight $p\in\Lambda$, the graph $\mathcal G_p$ is well
defined, has 1128 vertices and is regular of degree 368.
For nonadjacent distinct vertices represented by $r,s$, the section
\[
(t_1,t_2,t_3)=(-s,s-p,p-r)
\]
has Gram matrix~\eqref{eq:45gram}, and its entire fibre satisfies
\begin{equation}\label{eq:45-fibre-graph}
f(-2,-2,3)=|N(\bar r)\setminus N(\bar s)|
         =368-|N(\bar r)\cap N(\bar s)|.
\end{equation}
Conversely every ordered section with this Gram matrix has such an
anchor representation.
\end{proposition}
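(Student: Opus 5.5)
The plan is to prove the statement in three parts: well-definedness and the vertex count, then regularity, then the fibre formula and its converse. All three rest on two inputs already available. First, the minimal shell $X$ is an $11$-design with $|X|=52416000$, so the moments~\eqref{eq:section-moments} and Proposition~\ref{prop:section-statistics} apply. Second, Lemma~\ref{lem:cross-shell} bounds products between different shells.

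\emph{Vertices.} For $x\in X$ and $p^2=8$, Lemma~\ref{lem:cross-shell} with $m=6$ gives $|x\cdot p|\le4$, and $a_x=x\cdot p$ is an integer.
\begin{itemize}
\item I would count $|A_p|=|\{a_x=4\}|$ as in Proposition~\ref{prop:no-norm64}. Apply the extreme-value polynomial $a^2(a^2-1)(a^2-4)(a^2-9)$ to the rank-one moments $M_{2j}=N(48)^j(2j-1)!!/(48\cdot50\cdots(46+2j))$, for $j\le4$. Then halve, using antipodality. I expect this to give $|A_p|=2256$.
\item For $x\in A_p$ we have $(p-x)^2=8-8+6=6$ and $(p-x)\cdot p=4$. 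Also $p-x\ne x$, since otherwise $p=2x$ and $p^2=24$. So the involution pairs $A_p$, giving $1128$ vertices.
\item For non-paired $x,x'\in A_p$, the shell bound $x\cdot x'\le3$ applied also to $(p-x)\cdot x'=4-x\cdot x'$ forces $x\cdot x'\in\{1,2,3\}$.
\item The four representative products are $x\cdot x'$, $4-x\cdot x'$, $4-x\cdot x'$ and $x\cdot x'$. The condition ``$x\cdot x'\in\{1,3\}$'' is therefore independent of the representatives, so $\mathcal G_p$ is well defined. Non-adjacency means all four products equal $2$.
\end{itemize}

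\emph{Regularity.} This is the step I expect to be the main obstacle. I would apply Proposition~\ref{prop:section-statistics} to the rank-two section $(x,p)$, whose Gram matrix is $\bigl(\begin{smallmatrix}6&4\\4&8\end{smallmatrix}\bigr)$.
\begin{itemize}
\item The signature domain comes from the minimum applied to $x'\pm x$, $x'\pm p$ and $x'\pm(p-x)$.
\item The exceptional signatures are those of $\pm x$ and $\pm(p-x)$.
\item The target is the number of $x'$ with $x'\cdot p=4$ and $x'\cdot x=2$.
\end{itemize}
Averaging over the symmetry $x'\mapsto p-x'$ of the section, together with $-1$, the degree-$\le10$ moment system should determine this target exactly. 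I would supply a rational multiplier $\lambda$ with $A^t\lambda=c$, as in~\eqref{eq:45dual}, verified in exact arithmetic. The target value $1518$ then gives $759$ non-neighbours. The degree is $1128-1-759=368$, and it is independent of $\Lambda$, $p$ and $x$. If the radial moments are not sufficient, I would add the longer anchor $p-x$ inequalities to eliminate slack columns, in the manner of~\eqref{eq:45longer-anchors}.

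\emph{Section and fibre.} Take non-adjacent vertices $\bar r,\bar s$ and set $t_1=-s$, $t_2=s-p$, $t_3=p-r$.
\begin{itemize}
\item Direct computation from $r^2=s^2=6$, $r\cdot p=s\cdot p=4$, $p^2=8$ and $r\cdot s=2$ gives all norms $6$ and all mutual products $-2$, which is~\eqref{eq:45gram}.
\item The signature $(-2,-2,3)$ is equivalent to $x\cdot s=2$, $x\cdot p=4$ and $x\cdot r=1$.
\item So $x\in A_p$, and $x$ is the unique representative of a neighbour of $\bar r$ with product $1$; its partner has product $3$.
\item The condition $x\cdot s=2$ says exactly that $\bar x$ is not adjacent to $\bar s$. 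Neither $\bar r$ nor $\bar s$ lies in $N(\bar r)$.
\end{itemize}
This proves~\eqref{eq:45-fibre-graph}, with the second equality coming from regularity.

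\emph{Converse.} Given an ordered section with Gram~\eqref{eq:45gram}, put $s=-t_1$, $p=-t_1-t_2$ and $r=-t_1-t_2-t_3$.
\begin{itemize}
\item Then $p^2=8$, and $r^2=6$ by the tetrahedron relation.
\item We also get $s\cdot p=r\cdot p=4$ and $r\cdot s=2$.
\item Hence $r,s\in A_p$, they lie in distinct pairs since $r\ne s,p-s$, and they are non-adjacent.
\end{itemize}
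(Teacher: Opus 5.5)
Your vertex count, well-definedness, section Gram computation, fibre bijection and converse agree with the paper's proof. Using the product-one representative $x$ where the paper uses $w=p-a$ with $\ip{r}{a}=3$ is the same bijection. The gap is regularity. You assert that the degree-$\le10$ moment system of the rank-two section $(x,p)$ "should" determine the number of $x'$ with signature $(2,4)$, and you quote $1518$. That value can only have been read off from the degree $368$ you are trying to prove. You exhibit no multiplier, and your fallback to "longer anchors" shows you have not checked that one exists.

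There is also an imprecision: $x'\mapsto p-x'$ does not preserve $X$. The admissible symmetry is the reflection of the section that fixes $p$ and swaps $x$ with $p-x$. It acts on signatures by $(y_1,y_2)\mapsto(y_2-y_1,y_2)$.

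Your route does work, and one explicit degree-ten test function closes it; this is exactly the paper's argument. Put $u=x-p/2$. Then $u\perp p$, $\norm{u}^2=4$, and $\ip{x'}{u}=\ip{x'}{x}-\ip{x'}{p}/2$. Let $P(a)=a^2(a^2-1)(a^2-4)(a^2-9)$. The mixed design moments are
\[
 \sum_{x'\in X}\ip{p}{x'}^{2k}\ip{u}{x'}^2
 =\frac{52416000\cdot6^{k+1}8^{k}(2k-1)!!}{48\cdot50\cdots(48+2k)}\,\norm{u}^2
 \qquad(1\le k\le4).
\]
$P$ vanishes whenever $|\ip{p}{x'}|\le3$, and $A_p$ and $-A_p$ contribute equally. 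Combining the moments with the coefficients of $P$ therefore gives $\sum_{x'\in A_p}\ip{u}{x'}^2=192\norm{u}^2=768$.

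On $A_p$ the summand $(\ip{x}{x'}-2)^2$ takes three values:
\begin{itemize}
\item $16$ at $x$ and at $p-x$;
\item $1$ at each endpoint of an adjacent pair;
\item $0$ on non-adjacent pairs.
\end{itemize}
Hence $768=32+2\deg(\bar x)$, so $\deg=368$.

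In your notation, this is the multiplier for
\[
F(y)=\frac{P(y_2)\bigl(1-(y_1-y_2/2)^2\bigr)}{2P(4)}.
\]
On the nonexceptional domain, $F$ is one half of the indicator of $\{\pm(2,4)\}$. At each of the four exceptional signatures $(6,4)$, $(-2,4)$ and their negatives it equals $-15/2$. The moments give $\sum_{x'\in X}F=2256-768$, so the count of signature $(2,4)$ is $2256-768+30=1518$. No averaging, search or extra anchors are needed.
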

\begin{proof}
Integrality and the minimum bound give $|\ip{x}{p}|\le4$ for
$x\in X$. For $P(t)=t^2(t^2-1)(t^2-4)(t^2-9)$, the even moments
through degree eight give
\[
2P(4)|A_p|=90961920,\qquad P(4)=20160,\qquad |A_p|=2256.
\]
The partner $p-x\ne x$ again belongs to $A_p$. For endpoints $r,a$
from distinct pairs, the minimum gives
$\ip{r}{a}\le3$ and $4-\ip{r}{a}\le3$, hence products in
$\{1,2,3\}$. Exchanging an endpoint replaces the product by four
minus itself, proving well-definedness and the vertex count.

To obtain the degree and the frame identity below, take any
$z\in p^\perp$. The mixed design moments are
\[
M_k(z):=\sum_{x\in X}\ip{p}{x}^{2k}\ip{z}{x}^2
=\frac{52416000\,6^{k+1}8^k(2k-1)!!}
       {48\cdot50\cdots(48+2k)}\,\|z\|^2\qquad(1\le k\le4).
\]
Consequently
\begin{equation}\label{eq:45-filtered-second}
\sum_{x\in A_p}\ip{z}{x}^2
=\frac{M_4(z)-14M_3(z)+49M_2(z)-36M_1(z)}{2P(4)}
=192\|z\|^2.
\end{equation}
The degree is at most ten. For $u=r-p/2$, $u^2=4$ and
$\ip{u}{x}=\ip{r}{x}-2$. In~\eqref{eq:45-filtered-second},
$r,p-r$ contribute $32$, each adjacent pair contributes two and
the others zero. Thus $768=32+2\deg(\bar r)$.

For a nonadjacent pair $\ip{r}{s}=2$, which gives the asserted Gram
matrix directly. Each vertex in $N(\bar r)\setminus N(\bar s)$
has a unique endpoint $a$ with $\ip{r}{a}=3$; it satisfies
$\ip{s}{a}=2$. The vector $w=p-a$ has squared norm six and
signature $(-2,-2,3)$. Conversely that signature forces
$\ip{s}{w}=2$, $\ip{p}{w}=4$ and $\ip{r}{w}=1$.
Thus $a=p-w$ recovers the unique graph vertex, proving the bijection.

Conversely, from any ordered section with Gram~\eqref{eq:45gram}, take
$p=-t_1-t_2$, $s=-t_1$, and $r=-t_1-t_2-t_3$.
Their products are $p^2=8$, $r^2=s^2=6$,
$\ip{p}{r}=\ip{p}{s}=4$ and $\ip{r}{s}=2$, giving the
required representation.
\end{proof}

For a fixed anchor and the specified Gram, minimizing the codegree
therefore maximizes the exact projected count
\[
7377408+9\bigl(368-|N(\bar r)\cap N(\bar s)|\bigr)
=7380720-9|N(\bar r)\cap N(\bar s)|.
\]

\subsection{Signed spectrum and lattice restrictions}
Choose one endpoint $r_i$ from each of the 1128 pairs and set
$u_i=r_i-p/2$. Let $S_{ii}=0$ and $S_{ij}=\ip{u_i}{u_j}$ for
$i\ne j$, so $S_{ij}\in\{0,\pm1\}$ and the unsigned adjacency
matrix is $A=(|S_{ij}|)$.

\begin{proposition}\label{prop:45-signed-anchor}
Under the hypotheses of Proposition~\ref{prop:45-anchor-graph},
\begin{equation}\label{eq:45-signed-spectrum}
\sum_{i=1}^{1128}u_i\otimes u_i=96I_{p^\perp},\qquad
S^2=88S+368I,\qquad
\operatorname{Spec}(S)=\{92^{(47)},(-4)^{(1081)}\}.
\end{equation}
For distinct $i,j$, write $c_{ij}=|N(i)\cap N(j)|$ and let
$c_{ij}^{\pm}$ count common neighbours $k$ with
$S_{ik}S_{kj}=\pm1$. Then
\begin{equation}\label{eq:45-signed-codegrees}
c_{ij}^{\pm}=\frac{c_{ij}\pm88S_{ij}}2.
\end{equation}
In particular, all codegrees are even, nonadjacent pairs have equally
many common neighbours of either sign, and adjacent pairs have
$c_{ij}\ge88$.

The unsigned matrix satisfies $A+16I\succeq0$. Moreover, no four
of the $u_i$ can be signed so that all six mutual products are $-1$.
Changing the chosen endpoints conjugates $S$ by a diagonal sign
matrix; it preserves its spectrum, the unsigned graph and this
four-vertex obstruction.
\end{proposition}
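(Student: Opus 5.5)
The plan is to read all of \eqref{eq:45-signed-spectrum} off the filtered second-moment identity \eqref{eq:45-filtered-second}.

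\textbf{Frame identity.} Each $u_i$ lies in $p^\perp$, since $\ip{u_i}{p}=4-4=0$, and has $u_i^2=4$. The two endpoints $r_i,p-r_i$ of a pair give $u=\pm(r_i-p/2)$, and for $z\in p^\perp$ we have $\ip{z}{x}=\ip{z}{x-p/2}$. Summing over $A_p$ therefore counts each $u_i\otimes u_i$ twice. Polarizing \eqref{eq:45-filtered-second} gives $2\sum_iu_i\otimes u_i=192I_{p^\perp}$, hence the first identity.

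\textbf{Gram matrix and spectrum of $S$.} Let $U$ be the $1128\times47$ matrix with rows $u_i$. Its Gram matrix is $G=UU^t=4I+S$. The frame identity gives $U^tU=96I_{47}$, so $G^2=U(U^tU)U^t=96G$. Writing $(4I+S)^2=96(4I+S)$ and expanding yields $S^2=88S+368I$. The eigenvalues of $G$ are $96$ with multiplicity $\operatorname{rank}G=47$ and $0$ with multiplicity $1081$. Subtracting $4$ gives the spectrum $\{92^{(47)},(-4)^{(1081)}\}$.

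\textbf{Signed codegrees.} For $i\ne j$, the $(i,j)$ entry of $S^2$ is $\sum_kS_{ik}S_{kj}=c^+_{ij}-c^-_{ij}$. By the quadratic relation this equals $88S_{ij}$. Every nonzero summand is $\pm1$ and comes from a common neighbour, so $c^+_{ij}+c^-_{ij}=c_{ij}$. Solving the two equations gives \eqref{eq:45-signed-codegrees}. Integrality of $c^\pm_{ij}$ forces every $c_{ij}$ to be even. When $S_{ij}=0$ we get $c^+_{ij}=c^-_{ij}$, and when $|S_{ij}|=1$ we get $c_{ij}\ge88$ because $c^\mp_{ij}\ge0$.

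\textbf{Bound on $A$.} The matrix $G=4I+S$ is positive semidefinite, so $S\succeq-4I$. To pass to the unsigned matrix $A$ I would not use entrywise comparison, which fails for positive semidefiniteness. Instead, write
\[
A=\tfrac12\bigl((J-I)\circ(\ldots)\bigr),
\]
or more directly use the Schur product $S\circ S=A$. The Schur product theorem gives $(S+4I)\circ(S+4I)\succeq0$, and this expands to $A+16I\succeq0$ since $S\circ S=A$, $S\circ I=0$ and $I\circ I=I$.

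\textbf{Four-vertex obstruction.} Suppose four vectors $\pm u_i$ have all six mutual products $-1$. Their sum $v$ satisfies $v^2=4\cdot4-12=4$, and $v\in p^\perp$. I would lift back to the lattice. The signed choices $\pm u_i$ are of the form $a_i-p/2$ with $a_i\in A_p$, so $w=\sum a_i-2p\in\Lambda$ equals $v$ and has squared norm $4$. This contradicts the minimum $6$ (and $w\ne0$). This lattice lift is the step to check carefully, in particular that $p/2$ terms combine to $2p$.

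\textbf{Changing endpoints.} Replacing $r_i$ by $p-r_i$ negates $u_i$. This conjugates $S$ by a diagonal $\pm1$ matrix, which preserves the spectrum, the unsigned graph $A=|S|$, and the four-vertex obstruction, since that obstruction allows arbitrary signs.
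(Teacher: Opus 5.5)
Your proposal is correct and follows the paper's proof essentially step for step: you halve the filtered second moment to get the frame identity, use $K^2=96K$ with $K=4I+S$ for the quadratic relation and spectrum, read the codegrees off $(S^2)_{ij}=c_{ij}^+-c_{ij}^-$, and lift the signed sum to the lattice (your $\sum_i a_i-2p$ is the paper's $\sum_i\varepsilon_i r_i-\tfrac12(\sum_i\varepsilon_i)p$). Your Schur-product step $(S+4I)\circ(S+4I)=A+16I\succeq0$ is the paper's observation that $16I+A$ is the Gram matrix of the tensors $u_i\otimes u_i$; delete the unfinished display $A=\tfrac12\bigl((J-I)\circ(\ldots)\bigr)$, which plays no role.
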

\begin{proof}
Dividing~\eqref{eq:45-filtered-second} by two gives
$\sum_i\ip{z}{u_i}^2=96\|z\|^2$ for $z\in p^\perp$,
and polarization proves the frame identity. Hence the Gram matrix
$K=4I+S$ has rank 47 and satisfies $K^2=96K$, proving the
quadratic identity and spectrum.
For $i\ne j$, the entry $(S^2)_{ij}$ is
$c_{ij}^+-c_{ij}^-=88S_{ij}$, which proves
\eqref{eq:45-signed-codegrees}.

The Gram matrix of the rank-one tensors $u_i\otimes u_i$, with
their usual inner product, has entries $\ip{u_i}{u_j}^2$ and is
$16I+A$. This proves the unsigned spectral bound, not a two-valued
spectrum for $A$. If four signed vectors $\varepsilon_i u_i$ had
all mutual products $-1$, their sum would be a lattice vector:
\[
v=\sum_{i=1}^4\varepsilon_i r_i
  -\frac{\sum_{i=1}^4\varepsilon_i}{2}\,p\in\Lambda,
\qquad v^2=4\cdot4-2\binom42=4.
\]
This contradicts the lattice minimum. Finally, replacing $r_i$ by
$p-r_i$ negates $u_i$, giving the asserted diagonal conjugation.
\end{proof}

For every section represented by a nonadjacent pair,
\begin{equation}\label{eq:45-count-congruence}
N=7380720-9c_{ij}\equiv0\pmod{18}.
\end{equation}
Improvement on the count at $c_{ij}=70$ requires $c_{ij}\le68$
and hence $N\ge7380108$, together with the lattice restrictions above.

\subsection{The certified anchor and its optimum}
The earlier Pless-lattice route found 292 witnesses. The 298-vector
construction instead uses an anchor in the odd-coordinate sector of
the QR neighbour: in numerator coordinates it has shape
$(3^6,1^{42})$ up to signs and permutation. The parent certificate
specifies the anchor, chosen endpoints and 2256 distinct valid
elements of $A_p$; the moment count proves that this list is complete.
Exact checking of all 428076 unordered nonadjacent pairs gives codegrees
70 to 148, with a unique pair attaining 70. It recovers the stored
section $T$ and complete 298-vector fibre. Thus 7380090 is optimal for
this fixed anchor and section Gram, not over other anchors, lattices or
Grams. Integer checks also verify the frame and signed-graph identities
and, independently,
\[
TGT^t=H,\qquad wGw^t=6,\qquad TGw^t=(-2,-2,3)^t
\]
for the 298 distinct rows $w$. The supplied $WGT^t=I_3$ certifies
primitivity of this section, although the results do not require it.

\begin{corollary}\label{thm:d45}
The specified QR-neighbour construction gives a kissing configuration
in dimension 45 with exactly 7380090 points. In particular,
$K(45)\ge7380090$.
\end{corollary}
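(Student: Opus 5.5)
The plan is to assemble the corollary from Theorem~\ref{thm:45-transfer} and Proposition~\ref{prop:45-anchor-graph}, with the finite certificate supplying the one numerical input. First, I would verify that the specified lattice $\Lambda$ is an extremal even unimodular lattice of rank 48. Evenness, integrality and covolume one follow from the characteristic-vector construction: $\chi$ has all coordinates odd, $\chi^2=264$, and $\chi\cdot z\equiv z^2 \pmod 6$. Lemma~\ref{lem:qr-minimum} gives minimum six, so~\eqref{eq:theta48} and Venkov's theorem make the minimal shell a 52416000-point 11-design. This is exactly the hypothesis of Theorem~\ref{thm:45-transfer}.

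Next, I would check the stored section. The integer identity $TGT^t=H$ with $H=8I_3-2J_3$ shows that the three rows of $T$ realize the Gram matrix~\eqref{eq:45gram}. Theorem~\ref{thm:45-transfer} then says that projecting the fibres over $e_1,e_2,e_3$ onto $\operatorname{span}\{t_1,t_2,t_3\}^\perp$ and normalizing gives an injective kissing configuration. Its size is exactly $7377408+9f(-2,-2,3)$, by the moment identity~\eqref{eq:45count}. This identity rests only on $H$ and the design moments, through the certificate~\eqref{eq:45dual} with the slack confined to the orbits removed by~\eqref{eq:45longer-anchors}.

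The remaining task, and the main obstacle, is to determine $f(-2,-2,3)$ exactly rather than just bound it below. The 298 checked rows $w$, with $wGw^t=6$ and $TGw^t=(-2,-2,3)^t$, only give $f\ge298$ and hence $K(45)\ge7380090$. For the word ``exactly'' I would use Proposition~\ref{prop:45-anchor-graph}. Take the certified anchor $p$, whose $A_p$ list of 2256 distinct valid elements is complete by the moment count $|A_p|=2256$. Identify the section with $(-s,s-p,p-r)$ for the nonadjacent pair $(\bar r,\bar s)$ attaining codegree $70$. Then~\eqref{eq:45-fibre-graph} gives $f(-2,-2,3)=368-70=298$. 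I would confirm that the exact pair enumeration recovers the stored section $T$, so the two descriptions refer to the same fibre.

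Combining these steps gives $7377408+9\cdot298=7380090$ points, equivalently $7380720-9\cdot70$. This is consistent with the congruence~\eqref{eq:45-count-congruence}. It yields $K(45)\ge7380090$.
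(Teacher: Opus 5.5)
Your proposal is correct and follows essentially the paper's route. Lemma~\ref{lem:qr-minimum} supplies the extremal hypotheses, and Proposition~\ref{prop:45-anchor-graph}, applied to the complete parent graph and its unique codegree-$70$ nonadjacent pair, gives $f(-2,-2,3)=368-70=298$; Theorem~\ref{thm:45-transfer} then yields $7377408+9\cdot298=7380090$ points. You also rightly note that the $298$ stored witnesses alone give only a lower bound on $f(-2,-2,3)$, so the word ``exactly'' genuinely requires the anchor-graph identity.
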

\begin{proof}
Lemma~\ref{lem:qr-minimum} establishes the ambient hypotheses.
The complete parent graph and Proposition~\ref{prop:45-anchor-graph}
give $f(-2,-2,3)=368-70=298$.
Theorem~\ref{thm:45-transfer} therefore gives
$7377408+9(298)=7380090$ points.
\end{proof}

\section{Finite search and mathematical structure}\label{sec:research}
The constructions expose different freedoms in a constrained configuration:
moving a contact layer, changing signed supports, using another lattice
shell, or selecting less-overlapping images. Their limits are also
explicit. Two cap changes are necessary for the fixed insertion in
dimension 25; the next Leech shell cannot augment the unchanged
dimension-38 base; and pure tails force the heights and disjoint head
assignments in the paired lifts.

The section counts use a different reduction. Moments constrain all
admissible projection signatures at once. The embedded parent forces
the dimension-43 count, whereas the dimension-45 section vectors remove
every signature with positive dual slack, yielding the exact
identity~\eqref{eq:45count}. These arguments identify the required
statistic without determining every fibre. Together with the finite
witnesses, they yield the nineteen stated lower bounds.

\section{Finite verification}\label{sec:certificates}
Each lower bound has two parts: a mathematical reduction covering
all types of point pairs, and a finite witness satisfying the reduction's
hypotheses. The supplement contains these witnesses and programs
for checking the required identities and inequalities.
Construction data, verification programs and mathematical research accounts
are available in the project repository:
\begin{center}\small
\url{https://github.com/Oxelra-AI/Qiushi-Engine-Kissing-Number-Research}.
\end{center}
The \href{https://github.com/Oxelra-AI/Qiushi-Engine-Kissing-Number-Research/tree/2398ab45306899f6ceb1b361a8b67097501ebbeb}{fixed supplementary-materials snapshot}
contains the data and programs used in this article.
Table~\ref{tab:certificates} identifies the finite objects used in the
proofs; paths are relative to \qpath{constructions/}.

\begin{table}[ht]
\centering\small
\caption{Finite inputs to the construction proofs.}
\label{tab:certificates}
\begin{tabular}{@{}p{.13\textwidth}p{.57\textwidth}p{.23\textwidth}@{}}\toprule
Dimensions & Mathematical input & Location\\\midrule
25 & Head data, the 552-point layer and two repair vectors & \qpath{d25/}\\
27 & Five 496-point blocks; first-layer heads and 311 labelled second-layer owners
   & \qpath{leech/}, \qpath{d27/}\\
32--34, 37, 39 & Complete weight-eight support lists, parent families and exchanges
   & \qpath{codes/data/}\\
35--37 & Signed local codes and their boundary embeddings & \qpath{codes/}\\
38 & Shell partition, tail rotation and 144 added lines & \qpath{d38/}\\
49--55 & The 7077-line class, isometries and tail assignments & \qpath{p48/}\\
43 & Parent embedding, orbit moments and fixed-cut rank certificates & \qpath{sections/d43/}\\
45 & QR-neighbour basis, moment multipliers, parent graph and 298 witnesses
   & \qpath{sections/qr/}, \qpath{sections/d45/}\\\bottomrule
\end{tabular}
\end{table}

For dimension 25, the checker first certifies the 197579-point input
from its integer and rational head data. The 552-point motion has
symbolically preserved equalities, as proved in
Section~\ref{sec:motion25}. The remaining strict inequalities use
192-bit outward-rounded Arb intervals or integer comparisons after
squaring positive quantities. The two repair vectors are given as
integer rows with exact normalization.

For the triangular Leech lifting, the inputs are five 496-point blocks, the Golay
coordinate convention, and exact tail coordinates. Integer products,
membership, disjointness and radical identities establish the hypotheses
of the lifting theorem. The separate implementations are
\qpath{leech/verify.py} and \qpath{leech/verify_independent.py}.
The updated dimension-27 construction additionally
checks all first-layer owners, the 311 second-layer owners and labels,
and every interlayer inequality in Section~\ref{sec:leech-two-layer}.
For the layered constructions, support intersection
checks are combined with the distances of the dense binary code or its
coset unions. The five supplied parent families also reproduce the
support exchanges, including the exact union of deleted conflicts.
The signed replacements additionally check every new signed pair and
every boundary support intersection; in dimension 37 the checker
regenerates all 512 vectors from the 32 transverse four-subsets.

For the bounds in dimensions 49--55, the checker verifies the
7077-line mother class in the catalogue Gram basis, the exact image
matrices, the seven union counts, and an injective assignment to
antipodal tail lines. The published minimum and the theta-series
identity supply the mother-shell premise. The same images applied
to the original 7069-line class verify every term in
Table~\ref{tab:p48-gain}. The ambient minimum supplies the
cross-block inequality.

For dimension 38, the checker regenerates the Leech shell from
the base package's Golay generator, checks the class partition,
tail triples and rational rotation, and verifies the 144 new
norm-48 lines against both the complete shell and one another.
All pair products in the additional tests are integers. Two
implementations check the 144-line extension.

The dimension-43 verifier checks the concrete lattice embedding, then
regenerates the 26641-signature domain, subgroup action, moment matrix
and target vector from standard $E_8$ coordinates. It checks the supplied
rational multiplier against this regenerated system. These latter steps
use only the hypotheses of Theorem~\ref{d43:thm:structure}.
The coordinate-family checker additionally constructs all eight
fixed-cut moment systems and their full-rank minors. Independent integer
determinants and the explicit Weyl map check the rank certificates and
the correspondence with the original $D_5$.
The dimension-45 checks reconstruct the
specified QR neighbour, verify its exact basis Gram, establish the
norm-four parity obstruction, and check all 298 fibre witnesses.
The parent-graph calculation checks all 2256 endpoints, their pairing
about $p/2$, the universal degree from the tenth moments, and equality
between the complete target fibre and the 298 witnesses.
It also verifies the signed-frame identity and enumerates all 428076
nonadjacent pairs to determine the optimum for this fixed anchor.
A separate rational calculation regenerates the 161-row moment system
and checks the supplied multiplier in~\eqref{eq:45dual} directly.
The norm-eight restrictions reduce the 239-signature domain to 209
signatures; on its 101 nonexceptional antipodal orbits the multiplier
has zero slack. The independent standard-library checker
\qpath{sections/d45/verify_statistics.py} reconstructs this equality
and the anchor moments over the rationals.

\subsection{Autonomous mathematical research}\label{sec:autonomous-research}
The nineteen constructions arose from Qiushi Engine's autonomous
mathematical research, encompassing construction searches, mathematical
analysis and computational verification. Its research connected
changes of representation with new finite objects: coordinated layer
motion (Section~\ref{sec:motion25}), joint label selection
(Section~\ref{sec:leech-two-layer}), signed-design transfer
(Section~\ref{sec:signed}), and moment functionals adapted to the
section counts (Sections~\ref{sec:sections} and~\ref{sec:projection45}).
The accompanying research
accounts describe these mathematical developments and identify their
inputs and resulting constructions. Background on the research system
is given in~\cite{qiushi-optics}.

\appendix
\section{The integer Leech shell and tail conventions}\label{app:leech}

This appendix fixes the normalization of the classical Golay
description of the Leech shell~\cite{conway1999sphere}. Let
$G\subset\F_2^{24}$ be generated by the twelve words in
Table~\ref{tab:golay}, with coordinate zero at the left. Enumeration
and binary elimination give
\begin{equation}\label{eq:golay}
 \dim G=12,\qquad G\subseteq G^\perp,\qquad
 W_G(z)=1+759z^8+2576z^{12}+759z^{16}+z^{24}.
\end{equation}
An \emph{octad} is the support of a weight-eight word of $G$.
Two distinct octads intersect in at most four places, because their
symmetric difference is a nonzero word of weight at least eight.
For $c\in G$ put $\eps_i(c)=(-1)^{c_i}$ and define
\begin{align}
 \Sigma_A&=\{4\eps e_i+4\eps'e_j:0\le i<j<24,\ \eps,\eps'\in\{\pm1\}\},
 \label{eq:shell-A}\\
 \Sigma_B&=\left\{2\sum_{i\in O}\eta_i e_i:
       O\text{ an octad},\ \eta_i\in\{\pm1\},\ \prod_{i\in O}\eta_i=1\right\},
 \label{eq:shell-B}\\
 \Sigma_C&=\left\{\sum_{i=0}^{23}\eps_i(c)e_i-4\eps_j(c)e_j:
                         c\in G,\ 0\le j<24\right\}.
 \label{eq:shell-C}
\end{align}
Set $\Sigma=\Sigma_A\cup\Sigma_B\cup\Sigma_C$.

\begin{lemma}\label{lem:shell}
This set satisfies~\eqref{eq:leech-shell}.
\end{lemma}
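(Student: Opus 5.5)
The plan is to verify the three claims of \eqref{eq:leech-shell} in order: every vector has squared norm $32$, the three families together contain exactly $196560$ distinct vectors, and distinct vectors have inner product at most $16$. The golay facts \eqref{eq:golay} are taken as the finite input.

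\emph{Norms.} For $\Sigma_A$ the norm is $16+16=32$. For $\Sigma_B$ it is $8\cdot4=32$. For $\Sigma_C$, one coordinate has magnitude $3$ and the other $23$ have magnitude $1$, so the norm is $9+23=32$.

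\emph{Count.} The families are separated by coordinate magnitudes: $\Sigma_A$ has entries in $\{0,\pm4\}$ with two nonzeros, $\Sigma_B$ has eight entries $\pm2$, and $\Sigma_C$ has all entries odd. So they are disjoint.
\begin{itemize}
\item $|\Sigma_A|=4\binom{24}{2}=1104$.
\item For $\Sigma_B$, distinct octads give distinct supports and the sign choices are free subject to even parity, so $|\Sigma_B|=759\cdot2^7=97152$.
\item For $\Sigma_C$, the map $(c,j)\mapsto$ vector is injective. The entry of magnitude $3$ recovers $j$, and the signs recover $c$ via $\varepsilon_i(c)$, with the sign at $j$ flipped by $-4$. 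Hence $|\Sigma_C|=4096\cdot24=98304$.
\end{itemize}
The total is $1104+97152+98304=196560$.

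\emph{Inner products.} Work case by case over the six family pairs, bounding the product by $16$.
\begin{itemize}
\item $A$--$A$: the value is $16\langle$overlap$\rangle$. Different supports share at most one coordinate, giving at most $16$. The same support gives $0$ or $-32$ unless the vectors are equal.
\item $A$--$B$: at most $2\cdot4\cdot2=16$.
\item $A$--$C$: the entries have magnitudes at most $4\cdot3+4\cdot1=16$.
\item $B$--$B$: distinct octads intersect in at most four places, giving at most $4\cdot4=16$. On the same octad, the even-parity sign patterns differ in at least two places, giving at most $32-16=16$.
\item $B$--$C$: write the $C$ vector as $\varepsilon(c)-4\varepsilon_j e_j$. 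Then $\langle 2\eta_O,\varepsilon(c)\rangle=2\sum_{i\in O}\eta_i\varepsilon_i(c)$. The correction term contributes $-8\eta_j\varepsilon_j$ if $j\in O$. The key is a parity/weight fact: for an octad $O$ and $c\in G$, $|O\cap\operatorname{supp}c|$ is even, by self-orthogonality $G\subseteq G^\perp$. Combined with $\prod\eta=1$, this shows the number of sign agreements on $O$ is even, so $\sum_{i\in O}\eta_i\varepsilon_i\equiv 0\pmod 4$. A product exceeding $16$ would need $\sum\ge 8$, or $\sum\ge6$ with a favorable $j$-term. The value $\sum=8$ forces the $j$-term to be negative if $j\in O$, giving $16-8\cdot2=0$ in that case. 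If $j\notin O$, the product is $16$. Careful bookkeeping of these cases gives at most $16$.
\item $C$--$C$: for $x=\varepsilon(c)-4\varepsilon_j(c)e_j$ and $x'=\varepsilon(c')-4\varepsilon_{k}(c')e_{k}$, expand to get $\langle\varepsilon(c),\varepsilon(c')\rangle=24-2\,\mathrm{wt}(c+c')$, minus cross terms of size $4$ each, plus $16\delta_{jk}\varepsilon_j\varepsilon'_j$. When $c=c'$ and $j\ne k$, the product is $24-8-8=8$. When $c\ne c'$, $\mathrm{wt}(c+c')\in\{8,12,16,24\}$, so the base term is at most $8$. The cross terms $-4\varepsilon_j(c)\varepsilon_j(c')$ and $-4\varepsilon_k(c)\varepsilon_k(c')$ each add at most $4$. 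The $j=k$ term adds $16$ only when $c,c'$ agree at $j$, in which case both cross terms equal $-4$. This gives $8+16-8=16$; for $j\ne k$ it gives at most $8+4+4=16$.
\end{itemize}

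The main obstacle is the $B$--$C$ case. I would handle it cleanly by noting that $x\in\Sigma_C$ satisfies $x\equiv(1,\dots,1)\pmod 2$, and that $\Sigma_B,\Sigma_C$ lie in the standard Leech lattice model. The alternative is the conceptual route: all vectors lie in the Leech lattice (scaled by $\sqrt8$), and $\|x-x'\|^2\ge32$ for distinct ones, which gives $x\cdot x'\le16$. I would prefer that route: verify that the three families lie in the lattice defined by the congruences used in Section~\ref{sec:completion}, namely equal parity $\varepsilon$, $((v-\varepsilon)/2\bmod 2)\in G$, and $\sum v\equiv4\varepsilon\pmod8$. Then show that this set's nonzero vectors have norm $\ge32$ via the standard Golay argument, using minimum weight $8$ and doubly-evenness. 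Since the set is an additive group, distinct $x,x'$ have $x-x'$ nonzero in it, so $64-2x\cdot x'\ge32$.
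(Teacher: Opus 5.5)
Your direct case analysis is the paper's proof. It uses the same separation by coordinate magnitudes, the same three counts and the same six family pairs. The key input for $B$--$C$ is also the same: $G\subseteq G^\perp$ and the even sign condition give $\prod_{i\in O}\eta_i\varepsilon_i(c)=1$.

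Two values are miscomputed, although neither breaks the upper bound. When $c=c'$ and $j\neq k$ the product is $24-4-4=16$, not $8$; these pairs are contacts. When $\sum_{i\in O}\eta_i\varepsilon_i(c)=8$ with $j\in O$, the product is $2\cdot8-8=8$, not $0$.

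The ``careful bookkeeping'' in $B$--$C$ is cleanest if you split on $y_j=\eta_j\varepsilon_j(c)$, as the paper does. If $y_j=1$, the product $2\sum_i y_i-8y_j$ is at most $8$. If $y_j=-1$, parity forces a second negative $y_i$, so $\sum_i y_i\le4$ and the product is at most $16$.

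Your preferred lattice route is genuinely different, and it is correct. You would check three things:
\begin{enumerate}
\item $\Sigma_A,\Sigma_B,\Sigma_C$ satisfy the congruences quoted in Section~\ref{sec:completion}.
\item The solution set is closed under addition and negation. This uses linearity of $G$, plus $\mathbf 1\in G$ for the sum of two odd vectors.
\item Its nonzero vectors have squared norm at least $32$. This uses minimum weight $8$ when the Golay residue is nonzero, the coordinate-sum condition for vectors in $4\Z^{24}$, and double evenness to exclude odd vectors with all entries $\pm1$.
\end{enumerate}
Together these replace the $B$--$C$ and $C$--$C$ casework by the single inequality $\|x-x'\|^2\ge32$.

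This route also proves more than the lemma. It shows that $\Sigma$ lies in a lattice of minimum $32$ that contains the added norm-$48$ rows. That is exactly the form in which $\Sigma$ is used in Lemma~\ref{lem:cross-shell} and Corollary~\ref{cor:completion-window}.

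The paper's route has its own advantages. It is more elementary and needs no closure verification, and it identifies exactly which pairs attain $16$. Both routes still need your injectivity argument for $|\Sigma|=196560$, which you give correctly.
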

\begin{proof}
The absolute coordinate patterns distinguish the three families.
Support and signs recover the parameters of the first two. In the
third, the unique coordinate of absolute value three identifies $j$,
after which the signs recover $c$. Thus the cardinalities are
\[
 |\Sigma_A|=4\binom{24}{2}=1104,\quad
 |\Sigma_B|=759\cdot128=97152,\quad
 |\Sigma_C|=4096\cdot24=98304.
\]
They sum to $196560$. The respective squared norms are
$16+16$, $8\cdot4$, and $9+23$, all equal to $32$.

For two $A$-vectors, different supports give at most one contribution
of $16$, and equal supports with different signs give at most zero.
For $A$--$B$ pairs there are at most two contributions of eight;
for $A$--$C$ pairs the bound is $4(3+1)=16$.
For $B$--$B$ pairs on different octads, the intersection bound gives
at most $4\cdot4=16$. On the same octad, different even sign patterns
differ at least twice, giving at most $4(8-4)=16$.

For a $B$--$C$ pair, let $O$ be the octad and $j$ the exceptional
coordinate of the $C$-vector. If $j\notin O$, the inner product is
at most $2|O|=16$. Otherwise put $y_i=\eta_i\eps_i(c)$ on $O$.
Self-orthogonality of $G$ and the even sign condition imply
$\prod_{i\in O}y_i=1$, and the inner product is
\[
 2\sum_{i\in O}y_i-8y_j.
\]
If $y_j=1$, this is at most eight. If $y_j=-1$, at least one other
$y_i$ is negative, so the sum is at most four and the expression is
at most sixteen.

Finally consider $C$--$C$ pairs with words $c,c'$ and exceptional
coordinates $j,k$. Set $s_i=(-1)^{c_i+c'_i}$ and
$S=\sum_i s_i=24-2\wt(c+c')$. If $j=k$, the inner product is
$S+8s_j$; distinct vectors then require $c\ne c'$, giving $S\le8$.
If $j\ne k$, it is $S-4(s_j+s_k)$. This equals sixteen when
$c=c'$, and is at most $8+8$ otherwise. These six family pairs
exhaust all cases.
\end{proof}

\begin{table}[ht]
\centering\small
\caption{A binary generator for the Golay code in~\eqref{eq:golay}.
Coordinates run from zero to $23$ from left to right.}
\label{tab:golay}
\begin{tabular}{rl}
\toprule
Row & Binary word\\
\midrule
0 & \texttt{010111100101000000000001}\\
1 & \texttt{110111001010000000000010}\\
2 & \texttt{100011011101000000000100}\\
3 & \texttt{000110111110000000001000}\\
4 & \texttt{011101011100000000010000}\\
5 & \texttt{111001101001000000100000}\\
6 & \texttt{001001110111000001000000}\\
7 & \texttt{011010001111000010000000}\\
8 & \texttt{101100011011000100000000}\\
9 & \texttt{101111110000001000000000}\\
10 & \texttt{110010110011010000000000}\\
11 & \texttt{111100100110100000000000}\\
\bottomrule
\end{tabular}

\end{table}

For compatibility with the supplied tail arrays, label $V$ by
\begin{equation}\label{eq:tail-labels}
\begin{array}{c|rrr@{\qquad}c|rrr}
i&\multicolumn{3}{c}{v_i}&i&\multicolumn{3}{c}{v_i}\\\hline
0&-1&0&-1&6&-1&0&1\\
1&0&-1&-1&7&1&-1&0\\
2&-1&-1&0&8&0&-1&1\\
3&0&1&-1&9&1&1&0\\
4&-1&1&0&10&0&1&1\\
5&1&0&-1&11&1&0&1.
\end{array}
\end{equation}
The triangles in~\eqref{eq:triangles} then have labels
$\{0,7,10\}$, $\{1,6,9\}$, $\{2,3,11\}$, and $\{4,5,8\}$.
For the pure tails set
\[
 (\sigma(0),\ldots,\sigma(11))=(0,3,4,1,5,2,9,6,10,7,8,11),
 \qquad p_a=Rv_{\sigma(a)}/\sqrt2.
\]
The source rows $\widehat t_i,\widehat p_a$ in
\qpath{constructions/leech/data/tails.json}
are related to these coordinates by right multiplication by
\begin{equation}\label{eq:tail-map}
 O=\begin{pmatrix}
 \sqrt2/2&0&\sqrt2/2\\
 -\sqrt6/6&\sqrt6/3&\sqrt6/6\\
 -\sqrt3/3&-\sqrt3/3&\sqrt3/3
 \end{pmatrix},\qquad O^{\mathsf T}O=I.
\end{equation}
In row-vector notation,
$\widehat t_iO=v_i^{\mathsf T}/\sqrt6$ and
$\widehat p_aO=v_{\sigma(a)}^{\mathsf T}R^{\mathsf T}/\sqrt2$.
The exact coordinate-change test checks these identities for every
label, so the common rotation preserves both within-set and cross-set
inner products.

\section{Binary generators and coset representatives}\label{app:codes}

Table~\ref{tab:generators} gives the generators used for the three
dense sign codes. An entry is a hexadecimal mask, representing the
binary word $c$ by $\sum_i c_i2^i$. The generator of length $32$ is
used unchanged in ambient dimensions $32$--$37$. The
length-$38$ and length-$39$ columns are the kernels reconstructed
from~\eqref{eq:qr}. Table~\ref{tab:representatives} supplies the affine
representatives; the length-$32$ code uses only zero.

\begin{table}[ht]
\centering\small
\caption{Generator masks for $H_{32}$, $H_{38}$, and $H_{39}$. Blank
entries reflect their different dimensions.}
\label{tab:generators}
\begin{tabular}{rrrr}
\toprule
Row & $H_{32}$ & $H_{38}$ & $H_{39}$\\
\midrule
0 & \texttt{4EF0C081} & \texttt{2000C0DE0B} & \texttt{4000152727}\\
1 & \texttt{24A09082} & \texttt{1000AA229D} & \texttt{2000C0DE0B}\\
2 & \texttt{48609084} & \texttt{8009F5CD6} & \texttt{1000AA229D}\\
3 & \texttt{1E505088} & \texttt{4004FAE6B} & \texttt{8009F5CD6}\\
4 & \texttt{6C905090} & \texttt{200ED9AAD} & \texttt{4004FAE6B}\\
5 & \texttt{FA30C0A0} & \texttt{100BC80CE} & \texttt{200ED9AAD}\\
6 & \texttt{AAC000C0} & \texttt{805E4067} & \texttt{100BC80CE}\\
7 & \texttt{D2608100} & \texttt{40E56DAB} & \texttt{805E4067}\\
8 & \texttt{7E60D200} & \texttt{20B8FB4D} & \texttt{40E56DAB}\\
9 & \texttt{B8601400} & \texttt{1096303E} & \texttt{20B8FB4D}\\
10 & \texttt{D800D800} & \texttt{84B181F} & \texttt{1096303E}\\
11 & \texttt{CC606000} & \texttt{4EFC197} & \texttt{84B181F}\\
12 & \texttt{1EE10000} & \texttt{2BDAD53} & \texttt{4EFC197}\\
13 & \texttt{B2B20000} & \texttt{1949B31} & \texttt{2BDAD53}\\
14 & \texttt{74740000} &  & \texttt{1949B31}\\
15 & \texttt{D8D80000} &  & \\
16 & \texttt{FF000000} &  & \\
\bottomrule
\end{tabular}

\end{table}

\begin{table}[ht]
\centering\small
\caption{Representatives of the eleven cosets in length $38$ and the
ten cosets in length $39$.}
\label{tab:representatives}
\begin{tabular}{rrr}
\toprule
$a$ & $r_a$ in length 38 & $r_a$ in length 39\\
\midrule
0 & \texttt{0} & \texttt{0}\\
1 & \texttt{C6E94} & \texttt{ADCFB}\\
2 & \texttt{DF6ABF} & \texttt{F7795F}\\
3 & \texttt{27B441} & \texttt{740E29}\\
4 & \texttt{D3CE66} & \texttt{940DFF}\\
5 & \texttt{41BBBB} & \texttt{3B66AB}\\
6 & \texttt{BC1E1F} & \texttt{6CD301}\\
7 & \texttt{52727} & \texttt{4709D4}\\
8 & \texttt{9951C8} & \texttt{98A926}\\
9 & \texttt{7001EB} & \texttt{D23688}\\
10 & \texttt{3F6969} & \\
\bottomrule
\end{tabular}

\end{table}

The complete kernel weight distributions are given in
Table~\ref{tab:weights}. The least nonzero weights give the within-coset
distance bounds. For every distinct pair of representatives in
Table~\ref{tab:representatives}, the affine space
$r_a+r_b+H_m$ has minimum weight ten. Enumeration of these spaces
establishes both the cross-coset distances and disjointness of the
cosets. This determines each code in Proposition~\ref{prop:signs}
entirely from the printed data.

\begin{table}[ht]
\centering\small
\caption{Weight distributions $A_w=|\{h\in H_m:\wt(h)=w\}|$.
Every omitted weight has coefficient zero.}
\label{tab:weights}
\begin{tabular}{rrrr}
\toprule
$w$ & $H_{32}$ & $H_{38}$ & $H_{39}$\\
\midrule
0 & 1 & 1 & 1\\
8 & 908 & 0 & 0\\
10 & 3328 & 0 & 0\\
12 & 14784 & 669 & 970\\
14 & 27392 & 0 & 0\\
16 & 38246 & 5289 & 8952\\
18 & 27392 & 0 & 0\\
20 & 14784 & 8002 & 16452\\
22 & 3328 & 0 & 0\\
24 & 908 & 2310 & 5988\\
28 & 0 & 113 & 402\\
32 & 1 & 0 & 3\\
\bottomrule
\end{tabular}

\end{table}

\section{An explicit Pless neighbour and its minimal shell}\label{app:pless-neighbour}

This appendix establishes the minimum and shell size in an explicit
integer model of an even neighbour. The theta-series argument applies
to every extremal even unimodular lattice of rank 48, including the
catalogue lattice in Section~\ref{app:p48} and the QR neighbour in
Section~\ref{sec:projection45}.

\subsection{The specified even neighbour}
Index the rows and columns of a $24\times24$ matrix $S$ by
$\{\infty\}\cup\F_{23}$, in that order. With $\chi$ the quadratic
character modulo $23$, set
\begin{equation}\label{eq:paley}
 S_{\infty\infty}=0,\quad S_{\infty j}=1,\quad S_{i\infty}=-1,
 \quad S_{ij}=\chi(i-j)\quad(i,j\in\F_{23}).
\end{equation}
The identity $SS^{\mathsf T}=23I$ follows from the quadratic-character
sum $\sum_x\chi(i-x)\chi(j-x)=-1$ for $i\ne j$. Consequently the
ternary code
\[
 C_3=\{(u,uS):u\in\F_3^{24}\}\subset\F_3^{48}
\]
is self-dual: its systematic generator has rank $24$, and
$I+SS^{\mathsf T}=24I=0$ over $\F_3$. Put
\begin{equation}\label{eq:construction-A}
 M=\{z\in\Z^{48}:z\bmod3\in C_3\},\qquad L=M/\sqrt3.
\end{equation}
The row basis
\[
 \begin{pmatrix}I_{24}&S\\0&3I_{24}\end{pmatrix}
\]
has determinant $3^{24}$. The code orthogonality makes $L$ integral,
and the scaling gives covolume one. It is odd, since it contains
$3e_0/\sqrt3$ of squared norm three.

Let $L_0=\{x\in L:x\cdot x\in2\Z\}$ and define
\begin{equation}\label{eq:characteristic}
 c=(1^{24},-23,1^{23}),\qquad g=c+6e_0,
 \qquad L'=L_0\cup\left(\frac{g}{2\sqrt3}+L_0\right).
\end{equation}
The characteristic congruence
$c\cdot z/3\equiv z\cdot z/3\pmod2$ for $z\in M$ is checked on
the displayed basis. This suffices since the squared norm modulo two
is additive on an integral lattice. Also $g/\sqrt3\in L_0$,
$g\cdot g/3=208$, and $g/(2\sqrt3)$ has squared norm $52$.
Its product with $z/\sqrt3\in L_0$ is integral by the characteristic
congruence. Thus~\eqref{eq:characteristic} is an even lattice, obtained
by extending the index-two sublattice $L_0$ by two cosets, and has
covolume one.

The matrix~\eqref{eq:paley} is the defining matrix of the Pless symmetry
code $C(23)$, with the same signs and coordinate order as
\cite[Section~1]{tonchev}. In particular, $C_3$ has minimum Hamming
weight $15$~\cite{pless,tonchev}. The sum of the rows of $(I\mid S)$
is $c$, so $C_3$ contains the all-one word. We shall also use the
following consequence of the complete weight enumerator of an extremal
ternary self-dual $[48,24,15]$ code containing that word:
\begin{equation}\label{eq:binary-words}
 \sum_{b\in C_3\cap\{0,1\}^{48}} t^{\wt(b)}
   =1+94t^{24}+t^{48}.
\end{equation}
This is the two-symbol specialization of
\cite[equation~(20)]{munemasa-tamura}. These two code-theoretic facts
determine which of the possible short vectors can occur in the even
neighbour.

\begin{theorem}[The mother shell]\label{thm:p48-shell}
The lattice $L'$ in~\eqref{eq:characteristic} is even unimodular,
has minimum squared norm $6$, and has exactly $52416000$ vectors of
squared norm $6$.
\end{theorem}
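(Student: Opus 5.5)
Evenness and unimodularity were already sketched in the construction: $L_0$ has index two in $L$, and $L$ is integral of covolume one. The characteristic congruence shows that $g/(2\sqrt3)$ pairs integrally with $L_0$. The glue vector has even squared norm $52$, and twice it, $g/\sqrt3$, lies in $L_0$. Hence $L'$ is an integral lattice containing $L_0$ with index two, so its covolume is $2\cdot\tfrac12=1$. For evenness, take $y=g/(2\sqrt3)+z/\sqrt3$ with $z/\sqrt3\in L_0$: its squared norm is $52+g\cdot z/3+z\cdot z/3$, which is even by the characteristic congruence applied to $c$ (the $6e_0$ correction contributes $2z_0$, an even integer). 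I would record these as short integer checks. The real work is the minimum and the shell count.

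For the minimum I follow the pattern of Lemma~\ref{lem:qr-minimum}, treating the two cosets separately.

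\emph{In $L_0$}, a vector $z/\sqrt3$ of squared norm $2$ or $4$ has $z^2\le12<15$. Since the minimum Hamming weight of $C_3$ is $15$, this forces $z\equiv0\pmod3$. Then $z=3w$ and the squared norm is $3w^2$, whose positive even values are at least $6$.

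\emph{In the glue coset}, write the numerator as $y=g+2z$ with $z\in M$, normalized by $12$. Every coordinate of $g$ is odd, since $g=(7,1^{23},-23,1^{23})$. Hence $y^2\ge48$, so the squared norm is at least $4$, with equality exactly when $y\in\{\pm1\}^{48}$. Write $y=\mathbf1-2b$ coordinatewise with $b\in\{0,1\}^{48}$. Then $y\equiv\mathbf1+b\pmod3$ lies in $C_3$, and because $\mathbf1\in C_3$ this gives $b\in C_3$. By~\eqref{eq:binary-words}, $\wt(b)\in\{0,24,48\}$. I then need to exclude these three cases by a parity condition: $z=(y-g)/2$ must satisfy $z/\sqrt3\in L_0$, that is, $z^2\equiv0\pmod6$ (equivalently, $c\cdot z/3$ is even). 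I would compute $c\cdot z$ explicitly in terms of $\wt(b)$ and $b_0$, $b_{24}$, the positions where $g$ differs from $\mathbf1$. Then I would check that each admissible weight gives an odd value, using a small case split on $b_0$, $b_{24}$ (the weight-$24$ words may or may not contain these coordinates). This is the step I expect to be fiddly. If a case survives, I would fall back on the symmetry of $C_3$ under $b\mapsto\mathbf1-b$ together with the explicit $94$-word list, checking by direct integer computation. Evenness then excludes squared norm $2$ in this coset as well. Finally, exhibiting $\sqrt3(e_0+e_{24+j})$ for a suitable codeword, or simply $3(e_0\pm e_1)/\sqrt3$, gives squared norm exactly $6$, so the minimum is $6$.

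The count of $52416000$ then follows from the theta-series argument cited as~\eqref{eq:theta48}. The theta series of an even unimodular lattice of rank $48$ is a modular form of weight $24$ for $SL_2(\Z)$, lying in the space spanned by $E_4^6$, $E_4^3\Delta$ and $\Delta^2$. The three conditions that the coefficients of $q^0,q^1,q^2$ equal $1,0,0$ determine it uniquely, and its $q^3$ coefficient is $52416000$. I would simply invoke that identity here, so the main obstacle is the glue-coset parity exclusion above.
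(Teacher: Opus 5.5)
This is essentially the paper's proof. It uses the same split into $L_0$ and the glue coset, the weight-$15$ argument in $L_0$, the odd-coordinate bound forcing $y\in\{\pm1\}^{48}$ with $b\in C_3$ of weight $0$, $24$ or $48$, a parity contradiction with membership in $L_0$, and the theta-series identity~\eqref{eq:theta48}.

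The step you left open closes in one line, with no case split on $b_0,b_{24}$ and no appeal to the $94$ words. Every $c_i\equiv1\pmod 6$, including $c_{24}=-23$, so $c\cdot z=\tfrac12\bigl(\sum_ic_i-c\cdot g\bigr)-\sum_{b_i=1}c_i=-279-\sum_{b_i=1}c_i\equiv3-\wt(b)\equiv3\pmod 6$. But $z/\sqrt3\in L_0$ requires $c\cdot z\equiv0\pmod 6$, which is the contradiction.
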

\begin{proof}
Evenness and unimodularity were established above. Suppose first that
$z/\sqrt3\in L_0$ has squared norm $2$ or $4$. Then $z\cdot z$ is
$6$ or $12$. A nonzero residue $z\bmod3$ would have Hamming weight
at least $15$, whereas its weight is at most $z\cdot z$. Hence
$z=3w$ for some integer vector $w$. Its squared norm is $3w\cdot w$,
which is even only when $w\cdot w$ is even, and is therefore at least
$6$ if nonzero.

A vector in the other coset has the form $y/(2\sqrt3)$, where
$y=g+2z$ and $z/\sqrt3\in L_0$. All $48$ coordinates of $y$ are odd,
so its squared norm is at least $4$. Equality would force
$y\in\{1,-1\}^{48}$. Let $b$ be the incidence vector of its negative
coordinates and put $w=\wt(b)$. Since $g\equiv\boldsymbol1\pmod3$,
the identity $z=(y-g)/2$ gives $-z\equiv b\pmod3$. Consequently
$b\in C_3$, and~\eqref{eq:binary-words} implies $w\in\{0,24,48\}$.
On the other hand, $c_i\equiv1\pmod6$, $\sum_i c_i=24$, and
$c\cdot g=582$. It follows that
\begin{equation}\label{eq:odd-parity}
 c\cdot z
 =\frac{\sum_i c_i-c\cdot g}{2}-\sum_{i:b_i=1}c_i
 \equiv3-w\equiv3\pmod6.
\end{equation}
This contradicts $z/\sqrt3\in L_0$, which requires
$c\cdot z\equiv0\pmod6$. Thus neither coset contains a vector of
squared norm $2$ or $4$. The vector $\sqrt3(e_0+e_1)\in L_0$
has squared norm $6$, proving the asserted minimum.

It remains to count the minimum shell. Write
$\Theta_{L'}(q)=\sum_{x\in L'}q^{x\cdot x/2}$. The theta series of
an even unimodular lattice of rank $48$ is a modular form of weight
$24$~\cite{conway1999sphere,mos}. The space has basis
$E_4^6,E_4^3\Delta,\Delta^2$. Its constant term and the vanishing
coefficients of $q$ and $q^2$ therefore give
\begin{equation}\label{eq:theta48}
 \Theta_{L'}=E_4^6-1440E_4^3\Delta+125280\Delta^2.
\end{equation}
Substituting
$E_4=1+240q+2160q^2+6720q^3+O(q^4)$ and
$\Delta=q-24q^2+252q^3+O(q^4)$ yields
\[
 \Theta_{L'}=1+52416000q^3+O(q^4),
\]
as required.
\end{proof}

Within this Pless-code model, the shift can equivalently be written
using $g_*=\boldsymbol1+6e_0$: indeed,
$(g-g_*)/2=-12e_{24}\in\sqrt3L_0$, so $g$ and $g_*$ define the
same coset of $L_0$. This is the all-one neighbour construction in
\cite{munemasa-tamura}; the parity calculation above explains directly
why this choice excludes the norm-four vectors. The liftings of
Section~\ref{app:p48} use the catalogue Gram basis and its published
minimum directly; the projection of Section~\ref{sec:projection45}
uses the separately specified QR neighbour.

The norm-six shell of $L'$, divided by $\sqrt6$,
is a kissing configuration $C$ of size $52416000$. Indeed, for two
distinct shell vectors $x,x'$, the nonzero lattice vector $x-x'$
has squared norm at least six, so $x\cdot x'\le3$.

\section{Flexibility of auxiliary configurations}\label{sec:deformations}

The layer inequalities identify which constraints are tight and which
leave room for deformation. A \emph{contact} is a pair of distinct
points with inner product $1/2$. Here $X_{25}$ is the 197058-point
endpoint model of~\eqref{eq:d25}, and $X_T,X_M,X_R$ are the ERS layers
of Section~\ref{sec:codes}. The dimension-38 observation concerns
that layered-code construction; the 591900-point construction instead
uses the Leech lift of Section~\ref{sec:completion}.
The connection between contact
constraints and rigidity of spherical codes is developed
in~\cite{cjkt}; here it can be made quantitative directly from the
proved inner-product margins.

\subsection{Paired motion in dimension 25}
Write $s\in S_*/\sqrt{32}$ and
$x_s^{\pm}=(\sqrt3\,s/2,\pm1/2)$, with poles $p^{\pm}=(0,\pm1)$.
The contacts involving these new points are exactly the $496$ paths
\begin{equation}\label{eq:contact-paths}
 p^+\;--\;x_s^+\;--\;x_s^-\;--\;p^-.
\end{equation}
Their internal vertices are disjoint. They have no contacts with the
retained equator, because the relevant inner products are at most
$\sqrt3/4<1/2$. Between different directions the same-latitude bound
is $7/16$, and the opposite-latitude bound is $-1/16$. Thus the
displayed graph has $994$ vertices and $1488$ edges.

\begin{proposition}\label{prop:d25-motion}
Keep the equator and the two poles of $X_{25}$ fixed. For each
$s\in S_*/\sqrt{32}$ choose independently a unit vector $\widehat s$
with $\norm{\widehat s-s}<1/24$. Replacing $x_s^\pm$ by
$(\sqrt3\,\widehat s/2,\pm1/2)$ preserves the kissing condition and
the contacts~\eqref{eq:contact-paths}.
\end{proposition}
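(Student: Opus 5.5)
The plan is a pairwise case analysis on the perturbed points $\widehat x_s^\pm=(\sqrt3\,\widehat s/2,\pm1/2)$. For each case I will compare the new inner product with the old one and use the margin already recorded in the text. The norms are immediate, since $3/4+1/4=1$. Pairs not involving a moved point are unchanged. Write $\delta_s=\widehat s-s$, so $\norm{\delta_s}<1/24$.

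\emph{Contacts along the paths~\eqref{eq:contact-paths}.} Against the poles, $\ip{\widehat x_s^\pm}{p^\pm}=1/4+1/4=1/2$ independently of $\widehat s$, so these contacts persist. For the two copies of one direction, $\ip{\widehat x_s^+}{\widehat x_s^-}=(3/4)\norm{\widehat s}^2-1/4=1/2$, since $\widehat s$ is a unit vector. Against the opposite pole, the product is $-1/2$. So the contact graph is preserved, and we still need to show that no new contact appears and no product exceeds $1/2$.

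\emph{Moved point against the retained equator.} For an equatorial point $(z,0)$ with $z\in C\setminus S_*/\sqrt{32}$, the new product is
\[
(\sqrt3/2)\ip{\widehat s}{z}\le(\sqrt3/2)\bigl(\ip sz+\norm{\delta_s}\bigr)\le(\sqrt3/2)(1/2+1/24)=13\sqrt3/48<1/2.
\]
Here $\ip sz\le1/2$ holds because distinct Leech shell points satisfy the kissing bound.

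\emph{Moved points with distinct directions $s\ne s'$.} This case carries the main estimate. Each point moves independently, so
\[
\ip{\widehat s}{\widehat s'}\le\ip s{s'}+\norm{\delta_s}+\norm{\delta_{s'}}+\norm{\delta_s}\norm{\delta_{s'}}<\ip s{s'}+2/24+1/576.
\]
By compatibility (raw product at most $8$ out of $32$), $\ip s{s'}\le1/4$, so $\ip{\widehat s}{\widehat s'}<1/4+1/12+1/576$.
\begin{itemize}
\item At the same latitude the product is $(3/4)\ip{\widehat s}{\widehat s'}+1/4<3/16+1/16+3/2304+1/4=1/2+3/2304$. This overshoots slightly, so the crude bound is not enough.
\end{itemize}
The fix is a sharper bound. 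Write $\ip{\widehat s}{\widehat s'}-\ip s{s'}=\ip{\delta_s}{s'}+\ip s{\delta_{s'}}+\ip{\delta_s}{\delta_{s'}}$. Since $\widehat s$ and $s$ are both unit vectors, $\ip{\delta_s}{s}=-\norm{\delta_s}^2/2$, and I will use this to cancel the quadratic term. Failing that, I will exploit the integrality of the shell: compatible rows have products at most $8$, so $\ip s{s'}\le1/4$ leaves margin only $1/4-(3/4)(1/4)=1/16$ in the scaled quantity. I then need
\[
(3/4)\bigl(2\eps+\eps^2\bigr)\le1/16-{}\text{(nothing)},\qquad\text{i.e. } 2\eps+\eps^2\le1/12 .
\]
With $\eps=1/24$ this gives $1/12+1/576$, so the quadratic term is exactly what must be removed. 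The cancellation works as follows. Decompose $\delta_s=-(\norm{\delta_s}^2/2)s+\delta_s^\perp$. Then
\[
\ip{\delta_s}{s'}\le\norm{\delta_s^\perp}+(\norm{\delta_s}^2/2)\,|\ip s{s'}|,
\]
and $\norm{\delta_s^\perp}^2=\norm{\delta_s}^2(1-\norm{\delta_s}^2/4)$. A careful accounting should bring the total below $1/12$, which I expect to be the main obstacle. If it fails, I will check that the intended reading of the statement is a strict bound with margin, and verify the estimate numerically with $\eps<1/24$ strict.
\begin{itemize}
\item At opposite latitudes the product is $(3/4)\ip{\widehat s}{\widehat s'}-1/4$, which is comfortably below $0$.
\end{itemize}
Collecting all cases gives the kissing condition with exactly the contacts~\eqref{eq:contact-paths}.
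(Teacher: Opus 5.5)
Your case split (norms, path contacts, equator, same and opposite latitude) is the right one, but the decisive same-latitude estimate is never proved. Expanding $\langle s+\delta_s,s'+\delta_{s'}\rangle$ into four terms and bounding the cross term by $\|\delta_s\|\,\|\delta_{s'}\|$ gives only $1/2+3/2304$, as you note. Your proposed repair does not touch the source of that overshoot. The decomposition $\delta_s=-(\|\delta_s\|^2/2)s+\delta_s^\perp$ only refines $\langle\delta_s,s'\rangle$, which Cauchy--Schwarz already bounds by $\|\delta_s\|$; the offending quadratic term is $\langle\delta_s,\delta_{s'}\rangle$. The sentence ``a careful accounting should bring the total below $1/12$'' is a hope, not an argument, and a numerical check cannot cover the continuum of admissible choices of $\widehat s$. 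The radius $1/24$ is exactly the value at which $\tfrac34\cdot\tfrac1{12}=\tfrac1{16}$ uses up the whole same-latitude margin $\tfrac12-\tfrac7{16}$. So no quadratic error can be absorbed, and this is a genuine gap rather than a loose constant.

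The missing step is a two-term telescoping that uses the unit length of the \emph{perturbed} vector:
\[
\langle\widehat s,\widehat s'\rangle-\langle s,s'\rangle
=\langle\widehat s-s,\widehat s'\rangle+\langle s,\widehat s'-s'\rangle .
\]
Cauchy--Schwarz with $\|\widehat s'\|=\|s\|=1$ then gives $|\langle\widehat s,\widehat s'\rangle-\langle s,s'\rangle|\le\|\delta_s\|+\|\delta_{s'}\|<1/12$, with no quadratic term. Equivalently, your terms $\langle\delta_s,s'\rangle+\langle\delta_s,\delta_{s'}\rangle$ combine into $\langle\delta_s,\widehat s'\rangle\le\|\delta_s\|$. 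The same-latitude product is then strictly less than $\tfrac34(\tfrac14+\tfrac1{12})+\tfrac14=\tfrac12$, which is exactly the paper's argument. The same bound gives the opposite-latitude product $<\tfrac34\cdot\tfrac13-\tfrac14=0$. Your crude bound only yields $<3/2304$ there, so ``comfortably below $0$'' was unjustified as written, although only $<1/2$ is needed in that case. The remaining checks (norms, path contacts, and the equator bound $13\sqrt3/48<1/2$) are fine.
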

\begin{proof}
Let $\delta=\max_s\norm{\widehat s-s}<1/24$. For unit vectors,
\[
 \bigl|\ip{\widehat s}{\widehat t}-\ip{s}{t}\bigr|
 \le\norm{\widehat s-s}+\norm{\widehat t-t}\le2\delta.
\]
Consequently distinct same-latitude directions have inner product
less than $7/16+3\delta/2<1/2$, and distinct opposite-latitude
directions have inner product less than $-1/16+3\delta/2<0$.
For a fixed equatorial direction $z$,
\[
 \frac{\sqrt3}{2}\ip{z}{\widehat s}
 \le\frac{\sqrt3}{4}+\frac{\sqrt3}{2}\delta
 <\frac{13\sqrt3}{48}<\frac12.
\]
The last inequality follows from $3\cdot13^2<24^2$. A paired pair
still has inner product $1/2$, as do its contacts with the appropriate
poles. All other pole products are unchanged.
\end{proof}

With the equator, poles, and labels fixed, the common direction of
each pair is free to move in an open subset of $S^{23}$. The height is
forced, but the direction is not.
Thus saturation of the latitude constraints does not imply rigidity
of the lifted configuration.

\subsection{Independent motion of the dense layers}
In dimensions $m=n=38,39$, the dense sign codes have minimum distance
ten. Their mutual inner products are therefore at most $1-20/m<1/2$.
Their products with the middle and root layers are at most
$\sqrt{8/m}$ and $\sqrt{2/m}$, also strictly less than $1/2$.
Every dense point is an isolated vertex of the contact graph.

\begin{proposition}\label{prop:dense-motion}
For $m=n\in\{38,39\}$, keep $X_M$ and $X_R$ fixed and put
\[
 \eta_m=\frac{40-m}{4m}.
\]
Each $x\in X_T$ may independently be replaced by a unit vector
$\widehat x$ with $\norm{\widehat x-x}<\eta_m$. The resulting
configuration has the same cardinality and satisfies the kissing
condition. No new contact involves a dense point.
\end{proposition}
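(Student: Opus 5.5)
The argument is a perturbation estimate modelled on Proposition~\ref{prop:d25-motion}. Since only dense points move, pairs among $X_M\cup X_R$ keep their original bounds. It remains to bound dense--dense and dense--other products after the motion, and to show that each such product stays strictly below $1/2$. Write $\delta=\max_x\norm{\widehat x-x}<\eta_m$. We use the elementary estimate already used for dimension 25:
\[
\bigl|\ip{\widehat x}{\widehat y}-\ip xy\bigr|\le\norm{\widehat x-x}+\norm{\widehat y-y}\le2\delta
\qquad\text{and}\qquad
\bigl|\ip{\widehat x}{z}-\ip xz\bigr|\le\delta,
\]
the first for pairs of unit vectors and the second for a fixed unit $z$.

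For the dense--dense pairs, Proposition~\ref{prop:signs} gives minimum distance ten, so the original products are at most $1-20/m$. After moving, they are below
\[
1-\frac{20}{m}+2\eta_m=1-\frac{20}{m}+\frac{40-m}{2m}=\frac12 .
\]
This identity is exactly what fixes $\eta_m$: the strict inequality $\delta<\eta_m$ gives a strict bound below $1/2$, so no new contact arises and moved points cannot collide.

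For dense--fixed pairs, I apply \eqref{eq:ers-cross}. The products with middle-layer points become less than $\sqrt{8/m}+\eta_m$, and those with root points less than $\sqrt{2/m}+\eta_m$. It suffices to check $\sqrt{8/m}+\eta_m<1/2$ for $m=38,39$. At $m=38$ this reads $0.4588+0.0132<1/2$, and it is easier at $m=39$. To keep the check exact rather than decimal, I would isolate the square root, write the condition as $\sqrt{8/m}<(3m-40)/(4m)$, and square both positive sides to get the integer inequality $128m<(3m-40)^2$. This holds at $m=38$ ($4864<5476$) and at $m=39$ ($4992<5929$). The root bound then follows a fortiori.

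The cardinality is preserved because every product between distinct points is strictly below $1$, so no moved point coincides with another point. A moved dense point has every product strictly less than $1/2$, hence it lies in no contact. The only real obstacle is the bookkeeping that the middle-layer bound \eqref{eq:ers-cross} remains valid when $m=n$; this holds since $m=n$ forces every support to lie within the $m$ dense coordinates, and the overlap count bound still applies. Everything else is the routine arithmetic above.
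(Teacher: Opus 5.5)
Your proof is correct and follows the paper's argument essentially step for step: the same unit-vector perturbation estimate, the identity $1-20/m+2\eta_m=1/2$ for dense--dense pairs, and an exact squared check of $\sqrt{8/m}+\eta_m<1/2$. Your integer form $128m<(3m-40)^2$ is equivalent to the paper's squared margins $153/5776$ and $937/24336$. You also note that every product involving a moved point stays strictly below $1/2$; this settles the cardinality and no-new-contact claims, which the paper leaves implicit.
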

\begin{proof}
The change in a product of two unit vectors is at most the sum of
their displacement lengths. Hence distinct moved dense points have
inner product strictly less than
\[
 1-\frac{20}{m}+2\eta_m=\frac12.
\]
For a moved dense point and a fixed middle point, the bound is
$\sqrt{8/m}+\eta_m<1/2$. To verify the latter inequality exactly,
the positive number $1/2-\eta_m$ has square exceeding $8/m$ by
$153/5776$ at $m=38$ and by $937/24336$ at $m=39$.
The root-layer bound is smaller. The other pairs are unchanged.
\end{proof}

The respective displacement radii are $1/76$ and $1/156$.
The codes provide exact starting configurations, while their
neighbourhoods need not retain a binary description. The two motion
results exhibit distinct sources of flexibility: collective motion
of a contacting pair in dimension $25$, and independent motion of
strictly separated points in dimensions $38$ and $39$.

\clearpage
\bibliographystyle{unsrturl}
\bibliography{references}
\end{document}